\documentclass[11pt]{article}
\usepackage[margin=1in]{geometry}
\usepackage{amsmath, amssymb, amsthm}
\usepackage{xcolor}
\usepackage{parskip}
\usepackage{algorithm}
\usepackage{algpseudocode}
\usepackage{hyperref, comment, graphicx}

\newtheorem{theorem}{Theorem}
\newtheorem{lemma}{Lemma}
\newtheorem{claim}{Claim}
\newtheorem{corollary}{Corollary}
\newtheorem{proposition}{Proposition}
\theoremstyle{definition}
\newtheorem{definition}{Definition}
\theoremstyle{remark}
\newtheorem{remark}{Remark}

\newcommand{\cB}{\mathcal{B}}
\newcommand{\cL}{\mathcal{L}}
\newcommand{\cT}{\mathcal{T}}
\newcommand{\cQ}{\mathcal{Q}}
\newcommand{\cW}{\mathcal{W}}
\newcommand{\cZ}{\mathcal{Z}}
\newcommand{\row}{\mathrm{row}}
\newcommand{\subspan}{\mathrm{span}}
\newcommand{\numberofblocks}{t}
\newcommand{\R}{\mathbb{R}}
\newcommand{\OPT}{\mathrm{OPT}}

\newcommand{\orient}{\mathrm{orient}}
\newcommand{\ESD}{\mathrm{Edge-SD}}
\newcommand{\SD}{\mathrm{SD}}

\newcommand{\Fset}{\mathcal{F}}

\newcommand{\NZ}{NZ}
\newcommand{\nnz}{p}

\def\final{0}  
\def\iflong{\iffalse}
\ifnum\final=0  
\newcommand{\cnote}[1]{{\color{red}[{Chandra: \bf #1}]\marginpar{*}}}
\newcommand{\knote}[1]{{\color{blue}[{Karthik: \bf #1}]\marginpar{\color{blue}*}}}
\newcommand{\todo}[1]{{\color{red}[{\tiny TODO: \bf #1}]\marginpar{\color{red}*}}}
\else 
\newcommand{\cnote}[1]{}
\newcommand{\knote}[1]{}
\newcommand{\todo}[1]{}
\fi  

\title{An iterative rounding $2$-approximation for Feedback Vertex Set
  via AI-assisted proof of an extreme point property\thanks{Grainger College of Engineering, University of Illinois, Urbana-Champaign, Email: {\tt\{karthe, chekuri, smkulka2\}@illinois.edu}. Supported in part by NSF grant CCF-2402667. Work done while Shubhang was a student at UIUC.}}
\author{Karthekeyan Chandrasekaran \and Chandra Chekuri \and Shubhang Kulkarni}
\date{}

\begin{document}

\maketitle

\begin{abstract}
We consider the Feedback Vertex Set problem (FVS): the input is an undirected graph $G=(V,E)$ and the goal is to find a minimum-cardinality (or a min-cost in the weighted case) subset $S \subseteq V$ of vertices such that $G-S$ has no cycles.  A $2$-approximation via the local-ratio method was developed in the mid 90's by Bafna, Berman and Fujito \cite{Bafna-Berman-Fujito95} and by Becker and Geiger \cite{BG96}, and this approximation ratio is tight under UGC. The local-ratio algorithms were later interpreted as primal-dual algorithms via an LP relaxation by Chudak, Goemans, Hochbaum, and Williamson \cite{Chudaketal}. All known $2$-approximation algorithms for FVS have been via local-ratio and primal-dual methods, and in a quest to obtain a new LP rounding algorithm, it was conjectured \cite{Fiorini21,CCFKW25} that the Strong-Density polyhedron developed in \cite{Chudaketal} has an extreme point property: every basic feasible solution to the LP has a variable with value at least $1/2$. We prove this conjecture. We also consider a related Edge-Strong-Density polyhedron and show the same extreme point property. The advantage of this polyhedron is that it admits a polynomial-time separation oracle and also a compact extended formulation. These results lead to polynomial-time iterative rounding $2$-approximation algorithms. The proof of the extreme point property is of independent technical interest and key ideas in the proof were suggested by AI tools. 


\end{abstract}

\newpage
\setcounter{page}{1}
\section{Introduction}
Given a graph $G=(V,E)$ a \emph{feedback vertex set} for $G$ is a subset  of vertices $S \subseteq V$ whose removal makes the graph acyclic. In other words, $S$ is a hitting set for the cycles of the graph. The Feedback Vertex Set problem (FVS) is defined as follows: given a graph $G = (V, E)$ with non-negative vertex costs $c: V \rightarrow \mathbb{R}_{\ge 0}$, find a least-cost feedback vertex set. FVS is a classical combinatorial optimization problem and was shown to be NP-Hard in Karp's well-known paper on NP-Completeness \cite{Karp72}. FVS is also interesting in graph theory. The well-known Erd\"os-P\'osa theorem \cite{ErdosP62} shows that there is a feedback vertex set whose cardinality is $O(\log k^*) \cdot k^*$ where $k^*$ is the maximum number of vertex-disjoint cycles in $G$. Moreover, this bound is tight in the worst case, for instance, in a constant degree expander graph on $n$ nodes in which $k^* = O(n/\log n)$ and the minimum feedback vertex set size is $\Omega(n)$. 

In this work, we are interested in approximation algorithms for FVS via LP rounding.  The Erd\"os-P\'osa theorem implicitly gives an $O(\log n)$-approximation for min-cardinality FVS, and the lower bound also shows that a natural hitting set based LP relaxation has an $\Omega(\log n)$-factor integrality gap (this was first explicitly pointed out in \cite{BarYehudaGNR98}). Independent works of Bafna, Berman, and Fujito \cite{Bafna-Berman-Fujito95} and Becker and Geiger \cite{BG96}, obtained $2$-approximation algorithms for FVS in the mid 90's. These algorithms were combinatorial and were explicitly or implicitly based on the local-ratio method. 
An $\alpha$-approximation for FVS implies an $\alpha$-approximation for the Vertex Cover problem, and hence, under the known hardness for Vertex Cover under the Unique Games Conjecture \cite{KhotR08}, we do not expect a $(2-\epsilon)$-approximation for FVS.
Chudak, Goemans, Hochbaum, and Williamson \cite{Chudaketal} described exponential-sized integer linear programming (ILP) formulations for FVS, and interpreted the algorithms in \cite{Bafna-Berman-Fujito95,BG96} as primal-dual algorithms with respect to these LP relaxations. This also established an upper bound of $2$ on the integrality gap of these relaxations. Despite these developments, important caveats remained. The LP relaxations in \cite{Chudaketal} were not known to be solvable in polynomial time, and in fact, no explicit polynomial-time solvable LP relaxations with a constant factor integrality gap were known for a long time until \cite{chekuri-madan16}. Fiorini \cite{Fiorini21}, motivated by a desire to obtain new algorithms for FVS and generalizations, conjectured that the LP relaxation in \cite{Chudaketal} has an extreme point property that may lead to an iterative rounding $2$-approximation. This conjecture inspired Chandrasekaran, Chekuri, Fiorini, Kulkarni, and Weltge \cite{CCFKW25} to do a polyhedral investigation of FVS. In particular, \cite{CCFKW25} showed that several different LP formulations for FVS, all of which can be solved in polynomial time, have an integrality gap of $2$. Nevertheless, the original conjecture of Fiorini was not proven, and the integrality gap and the $2$-approximation results in \cite{CCFKW25} still relied on a primal-dual analysis.

In this paper, we prove the conjecture in \cite{Fiorini21,CCFKW25} regarding the extreme point property, and as a consequence, we derive new 2-approximation algorithms for FVS based on iterative rounding.

We consider two formulations for FVS. They have indicator variables $(x_u)_{u \in V}$ for whether a vertex $u$ is in the feedback vertex set and an exponential number of constraints. We set up some basic notation. For a vertex set $S \subseteq V$ of a graph $G=(V, E)$, let $G[S]$ denote the subgraph induced by $S$, let $E[S]$ denote its edge set, and let $d_S(u)$ denote the degree of vertex $u$ in $G[S]$. 

\begin{definition}[Strong Density Polyhedron]
Let $G = (V, E)$ be an undirected graph.
The \emph{strong density polyhedron} $P_{SD}(G)$ is defined as
\[
P_{SD}(G) := \left\{ x\in [0,1]^V \;:\; \sum_{u \in S} (d_S(u) - 1)\, x_u \;\ge\; |E[S]| - |S| + 1 \quad \forall\, S \subseteq V \text{ with } E[S] \ne \emptyset \right\}.
\]

For an edge set $F\subseteq E$ of a graph $G=(V, E)$, let $V(F)$ denote the set of vertices incident to at least one edge in $F$, and we write $G[F]$ to denote the subgraph $(V(F), F)$. We overload notation and let $d_F(u)$ denote the degree of vertex $u$ in the subgraph $(V(F), F)$. For the most part, the notation overload will not cause any confusion. We will alert the reader when we overload.

\end{definition}

\begin{definition}[Edge Strong Density Polyhedron]
Let $G = (V, E)$ be an undirected graph.
The \emph{edge strong density polyhedron} $P_{\ESD}(G)$ is defined as:
\[
P_{\ESD}(G):=\left\{
x\in [0,1]^V: 
\sum_{u \in V(F)} (d_F(u) - 1) x_u \ge |F| - |V(F)| + 1 \quad \forall \emptyset\neq F \subseteq E\right\}. 
\]
  
\end{definition}

The definition of $P_{\ESD}(G)$ imposes constraints for all non-empty edge-subsets (including acyclic edge-subsets). 
The use of the term ``density'' in the names of the formulation is due to connection to the densest subgraph problem that arose in recent work \cite{Fiorini21,CCFKW25}, and will be explained later. We now explain the meaning of the constraints and the relationship between the two formulations. The formulation $P_{SD}(G)$ is from \cite{Chudaketal}.  Suppose $Z \subseteq V$ is a feedback vertex set of $G$. Then $G[V-Z]$ is a forest and therefore has at most $|V-Z|-1$ edges. An edge of $G$ is either in $G[V-Z]$ or is incident to a vertex of $Z$. Thus, $\sum_{u \in Z} d(u) + |V-Z| - 1 \ge |E|$ where $d(u)$ is the degree of $u$ in the graph; an edge between two vertices in $Z$ is double counted in this summation, and hence the inequality. Rearranging gives the inequality $\sum_{u \in Z} (d(u) - 1) \ge |E| - |V| + 1$.  Now suppose $\bar{x} \in \{0,1\}^V$ is an indicator vector for a feedback vertex set in the graph, then we can express the preceding inequality as
$$\sum_{u \in V} (d(u) - 1)x_u \ge |E| - |V| + 1.$$
Note that the summation is over all vertices since the terms for $u \in V-Z$ do not contribute because $\bar{x}_u =0$ for those vertices. Hence this is a valid inequality for an integer linear program for FVS. This valid inequality can be applied to any vertex induced subgraph $G[S]$ since a feedback vertex set $Z$ for $G$ induces a feedback vertex set $Z \cap S$ in $G[S]$. $P_{SD}(G)$ is the intersection of these valid inequalities when applied to every vertex induced subgraph of $G$. Now we explain the formulation $P_{\ESD}(G)$.  Instead of writing the valid inequality for only the vertex induced subgraphs of $G$ we can write it for every \emph{edge induced} subgraph of $G$. This is exactly $P_{\ESD}(G)$. Thus, $P_{\ESD}(G) \subseteq P_{SD}(G)$
and the inclusion is, in fact, strict---see Figure \ref{fig:esd-stronger-than-sd} for an example showing the strict inclusion. There are  advantages in working with
$P_{\ESD}(G)$ instead of $P_{SD}(G)$ that we will discuss later. It is perhaps a bit surprising that $P_{\ESD}(G)$ was not previously considered explicitly.   

\begin{figure}[!t]
    \centering
\includegraphics[width=0.4\textwidth]{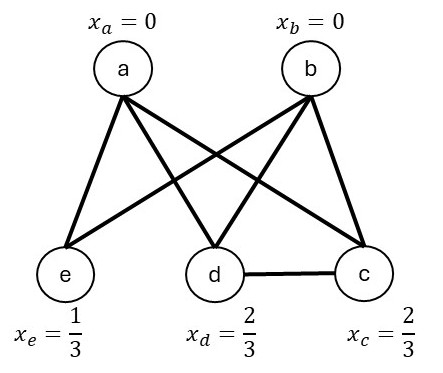}
\caption{A graph $G=(V, E)$ for which $P_{\ESD}(G)$ is strictly contained in $P_{\SD}(G)$. The point $x$ shown in the figure is in $P_{\SD}(G)$ but not in $P_{\ESD}(G)$. In particular, it violates the constraint for the edge-subset $F=E-\{\{c,d\}\}$.}
\label{fig:esd-stronger-than-sd} 
\end{figure}

The following are our main structural results. 
\begin{theorem}[Extreme Point Property of Strong Density Polyhedron]\label{thm:extreme-point}
Let $G$ be a graph containing a cycle and $x$ be an extreme point solution of $P_{SD}(G)$. Then, there exists $u \in V$ such that $x_u \ge 1/2$.
\end{theorem}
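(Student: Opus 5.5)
Assume for contradiction that $x$ is an extreme point of $P_{SD}(G)$ with $x_u<1/2$ for all $u\in V$. Vertices with $x_u=0$ are tight for their bound constraints; the other nonzero coordinates are then determined by a family $\mathcal{S}$ of tight sets ($\sum_{u\in S}(d_S(u)-1)x_u=|E[S]|-|S|+1$). Let $V^+=\{u: x_u>0\}$. By standard uncrossing, I will choose $\mathcal{S}$ so that its characteristic vectors, restricted to $V^+$, span the row space of all tight constraints and have rank $|V^+|$.

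First I establish the structural facts needed for uncrossing. Write $f(S)=|E[S]|-|S|+1$ and $g_x(S)=\sum_{u\in S}(d_S(u)-1)x_u$. For $A,B$ tight with $A\cap B$ carrying an edge, I compare $g_x(A)+g_x(B)$ with $g_x(A\cup B)+g_x(A\cap B)$. Supermodularity of $|E[\cdot]|$ gives $f(A)+f(B)\le f(A\cup B)+f(A\cap B)$. The coefficient difference is $\sum_u (d_{A\cup B}(u)+d_{A\cap B}(u)-d_A(u)-d_B(u))x_u$, which equals $\sum_u |\{$edges between $A\setminus B$ and $B\setminus A$ at $u\}|\,x_u$. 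This difference is nonnegative. Hence the feasibility slack is supermodular only up to a correction term weighted by $x$ on cross edges.

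This is where $x<1/2$ enters. Each cross edge contributes at most $x_a+x_b<1$ to the correction but exactly $1$ to the gain in $f$. So if any cross edge exists, either $A\cup B$ or $A\cap B$ is violated, a contradiction. Therefore tight sets cross with no edges between $A\setminus B$ and $B\setminus A$. In that case both $A\cup B$ and $A\cap B$ are tight and the modular equality holds. This yields a laminar family $\mathcal{L}$ of tight sets whose constraint rows span, so $|\mathcal{L}|\ge |V^+|$.

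Next comes a token-counting argument. Restrict attention to the laminar forest. For a set $S\in\mathcal{L}$ with children $C_1,\dots,C_k$, subtract the children's tight equations from $S$'s. This gives an equation whose right side is an integer: edges in $S$ not inside any child, minus vertices in $S$ not in any child, plus $1-k$. Its left side is a combination of $x_u$ with integer coefficients, namely $d_S(u)-d_{C_i}(u)$ for $u\in C_i$ and $d_S(u)-1$ for uncovered vertices. Using $0<x_u<1/2$, I want to show each set needs a "private" fractional contribution. Each vertex $u\in V^+$ gets one token, distributed to the minimal sets containing $u$ or its incident edges in proportion to degree increments. The goal is to show every $S\in\mathcal{L}$ collects at least one token and some token is left over, giving $|\mathcal{L}|<|V^+|$.

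The main obstacle is this counting step. Coefficients $d_S(u)-1$ can be large, so a single vertex appears in many sets with differing coefficients. The integrality of the right-hand side together with $x<1/2$ must force each difference equation to involve enough distinct vertices. I would first try to show that each difference equation has integer right side at least $1$ whenever it is nonzero. Its left side is $\sum (\text{degree increments}) x_u < \tfrac12\sum(\text{increments})$, so increments sum to more than twice the right side. Charging half-tokens per unit increment to endpoints of the new edges should then pay for each set. Surplus should come from the root, or from a leaf set containing a cycle, which must contain at least two fractional vertices. If this fails, I would refine the uncrossing (for example, using edge-disjoint crossing) to obtain a chain-like structure where counting is simpler.
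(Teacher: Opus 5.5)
Your overall plan (uncross, then show the tight family is too small to pin down all positive coordinates) is the paper's. Your cross-edge computation is correct and is the paper's uncrossing lemma: with $x<1/2$, $[g_x(A\cup B)-b(A\cup B)]+[g_x(A\cap B)-b(A\cap B)]=\sum_{ab\in\delta(A\setminus B,\,B\setminus A)}(x_a+x_b-1)$. Hence crossing tight sets have no edges between $A\setminus B$ and $B\setminus A$, and they uncross with the modular row identity. (You do not address the case $|A\cap B|\ge 2$ with $E[A\cap B]=\emptyset$. The same identity excludes it, since then $g_x(A\cap B)-b(A\cap B)=|A\cap B|-1-x(A\cap B)>0$ would force $A\cup B$ to be violated.)

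The first genuine gap is the conclusion ``this yields a laminar family.'' Two tight sets meeting in exactly one vertex $v$ cannot be uncrossed. The set $\{v\}$ carries no constraint, and $A\cup B$ is tight only if $x_v=\sum_{ab\in\delta(A\setminus B,\,B\setminus A)}(1-x_a-x_b)$. Uncrossing therefore gives only an \emph{almost-laminar} family (any two members are nested or share at most one vertex), and the paper notes that $P_{SD}(G)$ does not admit a laminar basis in general. So the children $C_1,\dots,C_r$ of a set may share vertices. Your difference equation then picks up a sharing term $\sigma=\sum_j|C_j|-|\bigcup_j C_j|$ on the right-hand side, matched by terms $(m_u-1)x_u$ on the left, and the counting must control it.

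The second gap is that the counting step, which you flag as the main obstacle, is not supplied, and the charging you sketch does not balance. The degree increments of a vertex $u$ over all members containing it telescope to $d_{\hat{V}}(u)-1$. Half a token per unit increment therefore charges $u$ about $(d_{\hat{V}}(u)-1)/2$, more than its single token once $d_{\hat{V}}(u)\ge 4$. The paper avoids fractional tokens and uses a combinatorial induction built from four ingredients you are missing.

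(i) The \emph{union feasibility inequality}: for $r\ge 2$ tight sets pairwise sharing at most one vertex, summing their tight equations and using feasibility of the union gives $\sum_v(m_v-1)x_v\ge\sigma-r+1$. Since $x_v<1/2$, this yields $\sigma\le 2r-3$.

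(ii) Refining the basis so every member induces a $2$-connected subgraph. Splitting an inclusionwise minimal non-$2$-connected member at a cut vertex $v$ forces $x_v=0$ and preserves span and almost-laminarity. $2$-connectivity plus tightness then shows that whenever $A\subsetneq B$ in the family, $B\setminus A$ contains a vertex of positive value.

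(iii) Since every member has at least three vertices, almost-laminarity makes the Hasse diagram a forest.

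(iv) Induction on this forest gives $|\mathrm{NZ}(S)|\ge k_S+2$, where $k_S$ is the number of members contained in $S$. A leaf contains a cycle, hence three positive vertices (not two as you wrote). A set with a single child is handled by (ii). A set with $r\ge 2$ children is handled by (i), because $\sum_j(k_{C_j}+2)-(2r-3)=k_S+2$.

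Combining (iv) over the roots, with (i) bounding their mutual sharing, shows that the number of positive coordinates exceeds $|\mathcal{L}|$. This contradicts that $\mathcal{L}$ together with the zero coordinates forms a basis.
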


\begin{theorem}[Extreme Point Property of Edge Strong Density Polyhedron]\label{thm:esd-extreme-point}
Let $G$ be a graph containing a cycle and $x$ be an extreme point solution of $P_{\ESD}(G)$. Then, there exists $u \in V$ such that $x_u \ge 1/2$.
\end{theorem}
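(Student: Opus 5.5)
We argue by contradiction: let $x$ be an extreme point of $P_{\ESD}(G)$ with $x_u<1/2$ for every $u\in V$.

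\textbf{Step 1: reduce to a tight set of edge-subsets.} Since $x_u<1/2<1$, no bound $x_u\le 1$ is tight. Let $V_0=\{u: x_u=0\}$ and $V_+=V\setminus V_0$. Extremality says $x$ is the unique solution of the tight constraints. So there is a family $\Fset$ of tight edge-subsets with $|\Fset|=|V_+|$ whose characteristic rows $(d_F(u)-1)_{u\in V_+}$, restricted to $V_+$, are linearly independent.

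\textbf{Step 2: uncrossing.} For the edge version, define $f(F)=|F|-|V(F)|+1$ and $g_x(F)=\sum_{u\in V(F)}(d_F(u)-1)x_u$.
\begin{itemize}
\item The function $\sum_{u\in V(F)} d_F(u)x_u=2\sum_{e\in F}x(e)$, with $x(e)$ the sum of $x$ over the endpoints of $e$, is modular in $F$.
\item The function $-\sum_{u\in V(F)}x_u$ is supermodular in $F$, since $|V(F)|$ and $x(V(F))$ are submodular coverage functions.
\item The function $f$ is supermodular for the same reason.
\end{itemize}
Hence the slack $g_x-f$ is not directly of one sign under uncrossing, and I would instead uncross via the rank-style identity: for tight $A,B$ with $A\cap B\neq\emptyset$, compare $A\cup B$ and $A\cap B$. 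The intended conclusion is that $\Fset$ can be chosen to be a laminar chain-like family. A natural target is that one may take every tight set to be connected and 2-edge-connected-ish, dropping pendant edges, since pendant structure only weakens the constraint. I would also aim to show $\Fset$ laminar, with each $F\in\Fset$ spanning a connected subgraph.

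\textbf{Step 3: counting/token argument.} Build the laminar forest of $\Fset$. For a node $F$ with children $F_1,\dots,F_k$, subtract the children's tight equalities from that of $F$. Using linear independence, the difference $R=F\setminus\bigcup F_i$ is a nonempty edge set whose contribution is
\[
\sum_u (d_F(u)-\textstyle\sum_i d_{F_i}(u))x_u-\sum_i[\,\mathbb{1}_{u\in V(F)}-\mathbb{1}_{u\in V(F_i)}]x_u ,
\]
and this must equal a positive integer: the increment of the cyclomatic number of $F$ over the disjoint union of the $F_i$. This increment is at least $1$ when the $F_i$ are connected and linear independence forces a change. On the other hand, each new edge in $R$ contributes at most $x_a+x_b<1$ to the left side, and vertices newly covered subtract. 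The plan is to show, by charging each edge of $R$ and each merging of components, that the left side is strictly less than the cyclomatic increment whenever all $x_u<1/2$. This would contradict the equation and finish the proof.

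The delicate inequality is that merging $c$ components with $m$ new edges gives increment $m-(\text{new vertices})-(c-1)$ on the right. On the left it gives at most $\sum_{e\in R}x(e)$ minus the new-vertex weight. A tree-like $R$ yields increment $0$, which contradicts independence.

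\textbf{Main obstacle.} The main obstacle is Step 3's bound when $R$ is small but cuts through high-degree vertices shared by many children. There the $x$-weight of one vertex is counted across several $d_{F_i}$, so the naive edge-charging fails. I would first try a fractional token distribution: each vertex $u$ distributes $1-2x_u>0$ tokens to its incident edges in the laminar tree. I would then show that every node of the laminar tree collects at least one full token while some token is left over, giving $|\Fset|<|V_+|$ and contradicting the rank equality.

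The SD version follows the same scheme, with tight vertex sets and the known supermodularity of $|E[S]|$.
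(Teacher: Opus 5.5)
Your proposal has the right overall shape (uncross to a laminar basis, then count), but both substantive steps are missing their key ingredient.

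\textbf{Step~2.} You conclude that the slack is ``not directly of one sign'' and leave the uncrossing unspecified. The correct grouping is $f_x:=b-g_x=\sum_{ab\in F}(1-x_a-x_b)-\sum_{u\in V(F)}(1-x_u)+1$. The first sum is modular in $F$, and the second is a coverage function with nonnegative weights $1-x_u$. Hence $f_x$ is supermodular, with $f_x(A\cap B)+f_x(A\cup B)-f_x(A)-f_x(B)=\sum_{u\in (V(A)\cap V(B))\setminus V(A\cap B)}(1-x_u)$. So for tight $A,B$ with $A\cap B\neq\emptyset$, both $A\cap B$ and $A\cup B$ are tight and $V(A)\cap V(B)=V(A\cap B)$. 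The latter identity is exactly what gives $\row(A)+\row(B)=\row(A\cap B)+\row(A\cup B)$, and hence a laminar basis.

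You also need genuine $2$-connectivity of the basis sets, not ``2-edge-connected-ish''. Splitting a tight set $F$ at a cut vertex $v$ forces $x_v=0$ and $\row(F)=\row(F_1)+\row(F_2)+1_v$, so the basis can be refined. $2$-connectivity is what shows that $A\subsetneq B$ in the basis forces a support vertex in $V(B)\setminus V(A)$. Without it, your claim that the cyclomatic increment at a node is positive ``because linear independence forces a change'' is unjustified.

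\textbf{Step~3.} This is the more serious gap. You cannot prove a node-by-node strict inequality ``left side $<$ increment'' from the equations at a node and its children together with $x<1/2$ and independence, because these hypotheses are consistent. Take the triangle $abc$ and the theta graph obtained by adding the path $awb$, with all values $1/3$. This point is feasible, and both edge sets are tight. Their rows $(1,1,1,0)$ and $(2,2,1,1)$ (coordinates $a,b,c,w$) are independent, and the left side and the increment both equal $1$. The contradiction must therefore come from the global equality $|\mathcal{L}|=|V_+|$.

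Your token fallback does not supply this. Take $t\ge 2$ tight, pairwise edge-disjoint triangles $va_ib_i$ with $x_v=0.4$ and $x_{a_i}=x_{b_i}=0.3$. Their tokens $1-2x_u$ total only $0.2+0.8t<t$. So ``each node collects a full token'' cannot come from local charging.

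\textbf{The missing idea} is the union feasibility inequality. Let $A_1,\dots,A_r$ be pairwise edge-disjoint tight sets and $U=\bigcup_j A_j$. The constraint for $U$ gives $\sum_v (m_v-1)x_v\ge \sigma-r+1$, where $\sigma=\sum_j|V(A_j)|-|V(U)|$. Since $x_v<1/2$, this yields $\sigma\le 2r-3$ for $r\ge 2$.

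Counting integer support vertices rather than fractional tokens, this gives by induction $|\NZ(S)|\ge k_S+2$ for every basis set $S$, where $k_S$ is the number of basis sets inside $S$:
\begin{itemize}
\item a leaf has at least $3$ support vertices, since it contains a cycle;
\item a node with a single child gains $+1$ from the new support vertex;
\item a node with $r\ge 2$ children loses at most $2r-3$ to sharing among them.
\end{itemize}
Summing over the $t$ roots and using $|\mathcal{L}|=|V_+|$ forces the support sharing loss $\sum_{v\in V_+}(m_v-1)$ among the roots to be at least $2t$. Applying the same inequality to the roots bounds it by $2t-3$ if $t\ge 2$, and it is $0$ if $t=1$, a contradiction.
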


\paragraph{Solving the LP Relaxations and Iterative Rounding Algorithms.} Our structural result in Theorem \ref{thm:extreme-point} does not immediately imply iterative rounding algorithms since it is not obvious how to solve the LP relaxation $P_{SD}(G)$; no polynomial-time separation oracle is known. Nevertheless, we show that the extreme point results and additional structural properties of the constraint system can be leveraged, along with the Ellipsoid method, to obtain $2$-approximation algorithms via a round-or-cut approach embedded into iterative rounding --- see Appendix \ref{sec:rounding-algorithm}. 
On the other hand, we are able to derive a polynomial-time separation oracle for $P_{\ESD}(G)$ which, along with Ellipsoid and our extreme point result in Theorem \ref{thm:esd-extreme-point}, leads to a straight-forward iterative rounding based $2$-approximation --- see Section \ref{sec:esd-rounding-algorithm}. 
In another contribution, we show that a polynomial-sized extended formulation for FVS that was developed in \cite{CCFKW25}, based on orientation constraints that arise in densest subgraph problems, is an extended formulation of $P_{\ESD}(G)$. This implies that there is an efficient algorithm to optimize over $P_{\ESD}(G)$ while retaining the extreme point property (see Section \ref{sec:iter-orient}). One can use this extended formulation, along with the structural results, to obtain another iterative rounding $2$-approximation that does not rely on using the Ellipsoid method ---see Section \ref{sec:iter-orient}).
Note that we do not yet know an efficient separation oracle for $P_{SD}(G)$, which perhaps suggests that the stronger relaxation $P_{\ESD}(G)$ is more natural in a certain sense.

\paragraph{Motivations for the Extreme Point Conjecture.} 
The extreme point conjecture regarding strong density polyhedron for FVS \cite{Fiorini21,CCFKW25} was itself formulated as a stepping stone towards better approximations for two related generalizations:  Treewidth Deletion and Subset-FVS (SFVS). 
In the $\eta$-treewidth deletion problem, the input is a graph $G$ with non-negative vertex costs, and the goal is to remove a minimum-cost subset of vertices so that the remaining graph has treewidth at most $\eta$. 
A recent work showed a randomized constant-approximation for every fixed constant $\eta$ via combinatorial techniques \cite{wlodarczyk2025losing}; previously the unit-cost version of the problem admitted constant approximation \cite{fomin2012planar,GLLMW19}. Designing LP-based approximations for this problem remains open. See \cite{fomin2012planar,wlodarczyk2025losing} for important applications of this problem to a large class of vertex deletion problems and graph structure theory.
In Subset Feedback Vertex Set (SFVS), the input is a graph $G=(V,E)$ with non-negative vertex costs and a subset $S \subset V$ of terminals, and the goal is to remove a minimum-cost subset $X$ of vertices so that there is no cycle containing any terminal. 
The precise approximability of this problem is still undetermined, with the best-known lower bound being $2$ (coming from FVS) and the best-known upper bound being $8$ \cite{EvenNSZ00}.

In terms of techniques, most iterated rounding based algorithms follow extreme point properties for polyhedral relaxations that have $\{0,1\}$-coefficients in their constraint matrices \cite{LRS-book}. $P_{SD}(G)$ and $P_{\ESD}(G)$ do not fit this set up and provide interesting new examples where iterated rounding still works.

LP-based approximations via solvable LPs are of value in practice: they can be used to infer better instance-based approximation guarantees. In particular, we can solve the LP on the given instance and compare the optimum objective value of the LP to the solution generated by rounding algorithms to observe better approximation on the instance than the naive $2$-factor guarantee. 

\subsection{Technical Overview and AI Disclosure}\label{sec:techniques-and-ai-disclosure}
All graphs are finite, undirected, simple, and loopless unless explicitly stated otherwise. The proof of the structural result uses a high-level template that is inspired by Jain's seminal work on the Survivable Network Design Problem (SNDP) \cite{jain} and subsequent developments on iterated rounding proofs \cite{LRS-book}. However, there are several important differences and interesting challenges that we outline. We observe that the coefficients of the constraints describing both our polyhedra are not necessarily in $\{0, 1\}$ and could in fact be arbitrary non-negative integers. 
Almost all extreme point results in the literature have been shown only for polyhedra whose constraint coefficients are in $\{0, 1\}$. To the best of the authors' knowledge, the only exception is the extreme point result for pseudo-forest deletion shown in \cite{CCFKW25}. Despite this challenge, \cite{CCFKW25} were able to show and exploit the existence of a laminar basis for the extreme point, while the strong-density polyhedron $P_{\SD}(G)$ studied in this work does not admit a laminar basis.

We focus on $P_{\SD}(G)$. 
Let $x$ be a basic feasible solution. Under the assumption that there is no vertex $u$ with $x_u \ge 1/2$, we derive a contradiction. To obtain this contradiction, we show that there is a structured basis for $x$ (recall that a basis is a set of $n:=|V|$ tight constraints from the polyhedron that uniquely determine $x$).  In several iterated rounding proofs, especially in network design, one typically can assume that $x$ is fully fractional; however, that is not the case here. The set of vertices $Z = \{u \mid x_u = 0\}$ forms tight constraints in the basis. Recall that each row of the polyhedron corresponds to a subset $S \subseteq V$. Given $x$ and $S \subseteq V$ the
constraint corresponding to $S$ can be rewritten as $f_x(S) \le 0$ where $f_x(S) := \sum_{uv \in E[S]} (1-x_u-x_v) - \sum_{u \in S} (1-x_u) +1$.
An important observation was made in \cite{CCFKW25}: under the assumption that $x_u < 1/2$ for all $u \in V$, 
the set function $f_x$ is supermodular (this was termed conditional supermodularity). Using uncrossing techniques, we show that there is a basis of $x$ in which the tight non-trivial sets form an \emph{almost laminar family}: this is a family of sets in which every pair of sets satisfies either containment or intersect in at most one vertex. Our attempts to understand the structure of this almost-laminar family were unsuccessful, although we obtained several interesting technical observations. We used AI tools Claude (Opus 4.7) and Gemini (3.1 Pro) (via their desktop versions) that were fed our research notes consisting of our almost-laminar basis proof and other observations. Over multiple rounds of interaction, the tools helped unravel the proof. Two key ideas that the tools proposed were the following. If one considers the poset structure induced by sets in an almost-laminar family (via the usual set inclusion), then the Hasse diagram of this poset for an almost-laminar family is a forest if all sets are of size $\ge 2$; in our setting, the non-trivial sets in the basis are of size at least $3$ (since they need to contain a cycle). A second key observation was an elegant and non-trivial counting lemma that leads to a contradiction. Unlike well-known (fractional) token-counting arguments (e.g., see \cite{LRS-book}), the proof relies on a clever combinatorial inductive argument. For SNDP, a combinatorial inductive argument was given in \cite{chekuri2018note}. 

The AI-assisted proof for $P_{SD}(G)$ inspired us to consider the edge-density polyhedron $P_{\ESD}(G)$. We were aware of the edge-density polyhedron for some time, however our experiments with AI  focused primarily on $P_{\SD}(G)$. Note that the constraints of $P_{\ESD}(G)$ correspond to edge subsets instead of vertex subsets. In this setting, uncrossing yields a basis defined by a \emph{laminar} family of tight sets (indexed by edge subsets). Laminarity is simpler and more standard to deal with when compared to almost-laminarity; however, the vertex subsets induced by the edge subsets in the laminar family can intersect. We adapted the counting argument for $P_{SD}(G)$ to this setting to arrive at our second structural result.

Converting the structural results into an iterative rounding algorithm via the Ellipsoid method is via the, by-now standard, round-or-cut approach.
However, it is not straightforward and requires some care for $P_{SD}(G)$, and relies on the conditional supermodularity property that we mentioned above---see Appendix \ref{sec:rounding-algorithm}. Finally, we intuited the equivalence between $P_{\ESD}(G)$ and the extended formulation for FVS from \cite{CCFKW25} that was based on orientation variables. It was a natural conjecture that the orientation LP from \cite{CCFKW25} is equivalent to $P_{\ESD}(G)$ since a similar phenomenon was observed in \cite{CCFKW25} for a related problem called the pseudo-forest deletion problem (PFDS).

\paragraph{Declaration.} We used AI to write various parts of the proofs and rewrote them for clarity and readability. The authors assume full responsibility for all content. 

\paragraph{Discussion.} 
Laminarity has long served as a central tool in extreme point arguments for LPs with exponentially many constraints. 
Although we were able to prove the existence of an almost-laminar basis, the lack of laminarity made us uncertain about the conjecture. 
AI helped in identifying structure within the almost-laminar basis, which is quite simple in retrospect. Even with this observation, the counting argument is also elegant and may have taken time to figure out. 
The (dis)advantage of AI is that it does not pause for human collaborators.
We hope that the community will also benefit from the ideas in this proof.

\subsection{Related Work}
Vertex deletion to every non-trivial hereditary property is NP-Complete \cite{lewis1980node} of which FVS is a well-studied special case. There is extensive work on FVS and its generalizations and hence we limit our discussion here to closely related lines of work.

Fujito considered a matroidal generalization of FVS and showed that the primal-dual 
algorithm for FVS yields a $2$-approximation for a class of sparse-matroids \cite{Fuj-matroid-FVS}. This class includes the pseudo-forest deletion problem (PFDS), which is closely related to FVS; here we want to delete a min-cost subset of vertices such that the residual graph is a pseudo-forest\footnote{A pseudo-forest is a graph whose connected components are pseudo-trees (a tree plus one edge).}. 

Of particular relevance to this paper is the connection to the densest subgraph problem (DSG) and its deletion version. In DSG, the input is a graph $G=(V,E)$ and the goal is to find a subset $S \subseteq V$ that maximizes the edge density $|E(S)|/|S|$ where $E(S)$ is the set of edges with both endpoints in $S$. It is a well-known poly-time solvable problem (via reduction to network flows or to submodular function minimization). Charikar \cite{charikar_greedy_2000} gave an exact LP relaxation for this problem whose dual can be interpreted as a fractional orientation LP. FVS can be viewed as a density deletion problem: given a graph $G=(V,E)$, remove a min-cost subset of vertices such that the densest subgraph in the residual graph has density strictly less than $1$ (PFDS is the problem where we want density to be at most $1$). This connection to density 
inspired the extended formulations for FVS and PFDS in \cite{CCFKW25}, who used the density nomenclature. We continue to use the density nomenclature for the same reason. 
See \cite{ChandrasekaranCK2025} for the approximability of the density deletion problem when the density threshold is larger than $1$.

Subset Feedback Vertex Set (SFVS) is a generalization of FVS. The known lower bound on the approximability is $2$ (coming from FVS) and the best-known upper bound on the approximability is $8$ \cite{EvenNSZ00}. 
Chekuri and Madan \cite{chekuri-madan16} described an LP relaxation for SFVS and showed that its integrality gap is at most $13$. Recently \cite{CCFKW25} showed that the integrality gap of the formulation in \cite{chekuri-madan16} for the special case of FVS is $2$. 

FVS in directed graphs (DFVS) is also a well-studied problem: the goal is to remove a min-cost subset of vertices such that there are no directed cycles in the remaining graph. The best known approximation is $O(\log n \log \log n)$ via the hitting set LP \cite{Seymour1995,EvenNSS1998} and under UGC it is known that no constant factor is possible \cite{Svensson12b,GuruswamiLee2016}. FVS and DFVS have also been important problems in parameterized complexity. Both are known to be in FPT parameterized by the solution size \cite{cygan2015parameterized}.

As mentioned before, PFDS is closely related to FVS from the density deletion perspective. PFDS admits a $2$-approximation \cite{LFFW19} and does not admit a $(2-\epsilon)$-approximation for every fixed constant $\epsilon>0$ assuming UGC (via approximation preserving reduction from Vertex Cover). Chandrasekaran, Chekuri, Fiorini, Kulkarni, and Weltge \cite{CCFKW25} gave a weak density polyhedral formulation for PFDS that closely resembles the strong density polyhedron $P_{\SD}(G)$ for FVS. They showed an extreme point property for the weak density polyhedron for PFDS---namely, every extreme point has a coordinate with a value at least $1/3$. Both their polyhedron and $P_{\SD}(G)$ have constraint coefficients that are not necessarily in $\{0, 1\}$. However, their proof for the weak density polyhedron was able to show and exploit the existence of a laminar basis (for the extreme point all of whose coefficients are less than $1/3$), while our argument for $P_{\SD}(G)$ has to deal with an almost-laminar basis (for the extreme point all of whose coefficients are less than $1/2$). The almost-laminar basis necessitated substantially different arguments in this work.

\paragraph{Organization.} We prove our two structural results in Sections~\ref{sec:extreme-point} and \ref{sec:esd-extreme-point}. The two proofs are similar in several ways but have several technical differences. We have kept them self-contained so that a reader can read them independently in any order. The iterated rounding algorithm based on $P_{\ESD}(G)$ is simple and is included in Section \ref{sec:esd-extreme-point}. The corresponding algorithm based on $P_{SD}(G)$ is involved and is deferred to Appendix~\ref{sec:rounding-algorithm}. We discuss the Orientation polyhedron, its equivalence to $P_{\ESD}(G)$, and an iterative rounding via the Orientation polyhedron in Section~\ref{sec:ESD-and-orient-equivalence}. 
\section{Extreme Point Property of Strong Density Polyhedron}
\label{sec:extreme-point}
We prove Theorem \ref{thm:extreme-point} in this section. 
The proof proceeds by contradiction. We assume throughout that $x$ is an extreme point of $P_{SD}(G)$ satisfying $x_u < 1/2$ for all $u \in V$, and derive a contradiction.
Throughout we will let $n$ denote the number of vertices of the graph which is also the number of variables in the LP.

\paragraph{Notation.} Define functions $f_x, g_x, b : 2^V \to \mathbb{R}$, where for all $S\subseteq V$, we have 
\begin{align*}
f_x(S) &:= \sum_{uv \in E[S]} (1 - x_u - x_v) - \sum_{u \in S} (1 - x_u) + 1, \\
g_x(S) &:= \sum_{u \in S} (d_S(u) - 1) x_u, \text{ and}\\
b(S) &:= |E[S]| - |S| + 1. 
\end{align*}

The constraints of $P_{SD}(G)$ are of the form $g_x(S) \ge b(S)$ for all $S \subseteq V$ with $E[(S)] \neq \emptyset$. We observe that $f_x(S) = b(S) - g_x(S)$. Thus, the constraint $g_x(S) \ge b(S)$ is equivalent to $f_x(S) \le 0$.

\begin{proposition}\label{prop:sd-ineqs-equiv}
Let $x\in [0,1]^V$. Then, 
  $x\in P_{SD}(G)$ if and only if $f_x(S)\le 0$ for every $S\subseteq V$ with $E[S]\neq \emptyset$.
\end{proposition}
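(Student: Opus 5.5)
The plan is to verify directly that $f_x(S) = b(S) - g_x(S)$ for every $S \subseteq V$ and every $x \in [0,1]^V$. The proposition then follows because each defining constraint $g_x(S) \ge b(S)$ of $P_{SD}(G)$, for $S$ with $E[S] \neq \emptyset$, is equivalent to $b(S) - g_x(S) \le 0$, that is, to $f_x(S) \le 0$. The box constraints $x \in [0,1]^V$ appear on both sides of the equivalence, so nothing further needs checking.

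The identity itself is a double-counting computation. I will expand
\[
f_x(S) = |E[S]| - \sum_{uv \in E[S]} (x_u + x_v) - |S| + \sum_{u \in S} x_u + 1 .
\]
The edge sum can be regrouped by endpoints. Each vertex $u \in S$ lies in exactly $d_S(u)$ edges of $E[S]$, so
\[
\sum_{uv \in E[S]} (x_u + x_v) = \sum_{u \in S} d_S(u)\, x_u .
\]
Substituting this gives
\[
f_x(S) = \bigl(|E[S]| - |S| + 1\bigr) - \sum_{u \in S} (d_S(u) - 1)\, x_u = b(S) - g_x(S).
\]

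The only step needing any care is this handshake-style regrouping, and it is routine. There is no real obstacle here; the proposition is essentially a reformulation of the constraints that sets up the supermodularity arguments used later.
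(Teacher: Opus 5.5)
Your proposal is correct and follows the same approach as the paper: it rests on the identity $f_x(S) = b(S) - g_x(S)$, which you verify by regrouping the edge sum by endpoints (the paper simply states this identity). With the identity in hand, each constraint $g_x(S) \ge b(S)$ is exactly $f_x(S) \le 0$, as you conclude.
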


We say that a set $S$ with $E[S] \ne \emptyset$ is \emph{tight} (for a feasible $x$) if $g_x(S) = b(S)$.  Let $\mathcal{T} := \{ S \subseteq V : E[S] \ne \emptyset,\; g_x(S) = b(S) \}$ denote the family of tight sets.  For a subset $S\subseteq V$, define the row vector $\mathrm{row}(S)\in \mathbb{R}^V$ by \[ \mathrm{row}(S)_u := \begin{cases}
    d_S(u) - 1 \text{ if } u\in S, \\
    0 \text{ if } u\not\in S.
\end{cases}
\]
Let $Z := \{ u \in V : x_u = 0 \}$ denote the set of zero-coordinate vertices. For $S\subseteq V$, let $\mathrm{NZ}(S) := \{ u \in S : x_u > 0 \}$ denote the \emph{support} of $x$ in $S$ (equivalently the set of non-zero vertices in $S$).

Via standard polyhedral theory, an extreme point $x$ of $P_{SD}(G)$ is
the unique solution to a set of $|V|$ linearly independent tight
inequalities from the set of constraints. We call such a set of tight
inequalities a basis for $x$. There can be multiple bases that define
$x$, and later we will show the existence of a structured basis to
derive the desired contradiction. We will work with bases that include
all the tight constraints corresponding to $Z$ ($x_u = 0, u \in Z$).

First, we observe that the hitting set inequalities
for FVS are implied in $P_{SD}(G)$. 

\begin{lemma}\label{lem:cycle-constraints-implied}
  Let $C$ be a cycle of $G$. If $x \in P_{SD}(G)$ then $\sum_{u \in V(C)} x_u \ge 1$ holds.
\end{lemma}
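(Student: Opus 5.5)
Take $S=V(C)$. Since $C$ is a cycle, $E[S]$ contains all edges of $C$, so $E[S]\neq\emptyset$ and the constraint for $S$ is present in $P_{SD}(G)$:
\[
\sum_{u\in S}(d_S(u)-1)x_u \;\ge\; |E[S]|-|S|+1 .
\]
The difficulty is that $G[S]$ may contain chords, which give some vertices large coefficients $d_S(u)-1$. We therefore cannot directly compare the left side with $\sum_{u\in S}x_u$. The plan is to bound each coefficient using $x_u\le 1$ and then account for the chords.

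\textbf{Step 1 (degree bookkeeping).} Let $k=|S|$ and let $m=|E[S]|\ge k$. Every vertex of $S$ has degree $2$ in $C$, so $d_S(u)-1 = 1 + c_u$, where $c_u\ge 0$ is the number of chords of $C$ in $G[S]$ incident to $u$. Summing over $S$ gives $\sum_{u\in S} c_u = 2(m-k)$.

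\textbf{Step 2 (split the left side).} Using $x_u\le 1$ on the chord part,
\[
\sum_{u\in S}(d_S(u)-1)x_u \;=\; \sum_{u\in S}x_u + \sum_{u\in S}c_u x_u \;\le\; \sum_{u\in S}x_u + 2(m-k).
\]
Combined with the constraint this yields $\sum_{u\in S}x_u \ge (m-k+1)-2(m-k) = 1-(m-k)$. This suffices only when $G[S]$ is chordless ($m=k$). So the naive bound fails in the presence of chords, and this is the main obstacle.

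\textbf{Step 3 (handling chords).} Instead of taking $S=V(C)$ with arbitrary $C$, argue by minimality. Among all cycles $C$ with $\sum_{u\in V(C)}x_u<1$, if any exist, choose one with the fewest vertices. If $G[V(C)]$ has a chord $ab$, the chord splits $C$ into two shorter cycles $C_1$ and $C_2$. Their vertex sets cover $V(C)$ and overlap exactly in $\{a,b\}$, so $\sum_{V(C_i)}x\le \sum_{V(C)}x<1$ for each $i$, because $x\ge 0$. Either one contradicts the minimality of $C$. Hence the minimal violating cycle is induced, $m=k$, and Step 2 gives $\sum_{u\in V(C)}x_u\ge 1$, a contradiction. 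So no violating cycle exists, which proves the lemma for every cycle $C$.
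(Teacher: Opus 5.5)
Your proof is correct and is essentially the paper's argument. Both reduce to a chordless cycle on a subset of $V(C)$, where every coefficient $d_S(u)-1$ equals $1$ and $b(S)=1$, and then use $x\ge 0$. You reach that chordless cycle through a minimal counterexample, which also justifies the existence of a chordless subcycle that the paper simply asserts; your Step 2 bound is unnecessary but harmless.
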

\begin{proof}
  Consider $S = V(C)$ and the induced subgraph $G[S]$. Since $x \in P_{SD}(G)$,
  the inequality $g_x(S) \ge b(S)$ holds. Consider the case when $d_S(u) = 2$ for all $u \in S$ which means that
  the cycle $C$ has no chords. In this case $|E[S]| = |S|$ and hence, $b(S) = 1$. Therefore, $g_x(S) = \sum_{u \in S} x_u \ge b(S) = 1$ which is the desired inequality.
  Suppose $C$ has a chord. Then there is a cycle $C'$ with $V(C') \subset V(C)$ such that $C'$ does not have a chord. Hence the previous analysis applied to $C'$ implies that $\sum_{u \in V(C')} x_u \ge 1$ holds.
\end{proof}

We obtain the following corollary from Lemma \ref{lem:cycle-constraints-implied} since $x_u < 1/2$ for all $u\in V$.
\begin{corollary}
  \label{cor:three-nz}
  For every $S \subseteq V$ such that $G[S]$ contains a cycle, $|\mathrm{NZ}(S)| \ge 3$.
\end{corollary}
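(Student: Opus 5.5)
The corollary follows directly from Lemma~\ref{lem:cycle-constraints-implied} together with the standing assumption of this section that $x_u < 1/2$ for every $u \in V$.

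Let $S \subseteq V$ be such that $G[S]$ contains a cycle $C$. Since $C$ is a subgraph of $G[S]$, we have $V(C) \subseteq S$. Moreover, $x$ is an extreme point of $P_{SD}(G)$, so $x \in P_{SD}(G)$. Lemma~\ref{lem:cycle-constraints-implied} therefore gives $\sum_{u \in V(C)} x_u \ge 1$.

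Vertices with $x_u = 0$ contribute nothing to this sum, so
\[
1 \le \sum_{u \in V(C)} x_u = \sum_{u \in V(C) \cap \mathrm{NZ}(S)} x_u .
\]

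Now suppose, for contradiction, that $|\mathrm{NZ}(S)| \le 2$. Then $V(C) \cap \mathrm{NZ}(S)$ has at most two elements. Each of them satisfies $x_u < 1/2$, so the right-hand side is strictly less than $2 \cdot \tfrac12 = 1$. This contradicts the display above. Hence $|\mathrm{NZ}(S)| \ge 3$.

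The only point needing care is that the inequality is strict. Strictness comes from the strict assumption $x_u<1/2$: the bound $|\mathrm{NZ}(S)| \ge 3$ would fail if values equal to $1/2$ were allowed, since two vertices with $x_u = 1/2$ could already cover a cycle. No other step is delicate.
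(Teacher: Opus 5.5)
Your proof is correct and follows the paper's approach exactly. The paper derives this corollary in one line from Lemma~\ref{lem:cycle-constraints-implied}, applied to a cycle inside $G[S]$, together with the standing assumption $x_u<1/2$. You have simply made explicit that at most two support vertices, each with value below $1/2$, cannot reach total value $1$.
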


\subsection{Conditional Supermodularity and Uncrossing}

A key observation, that we borrow from \cite{CCFKW25}, is the conditional supermodularity of the function $f_x$.
  
\begin{lemma}[Conditional Supermodularity]\label{lem:supermod}
  Suppose $x_u < 1/2$ for all $u \in V$. Then, $f_x$ is a supermodular function.
\end{lemma}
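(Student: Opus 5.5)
The plan is to split $f_x$ into a modular part and an edge part, and to show that only the edge part fails to be modular, and that it fails in the supermodular direction exactly because of the hypothesis $x_u<1/2$. Concretely, write
\[
f_x(S) = w_x(S) - m_x(S) + 1, \qquad w_x(S) := \sum_{uv \in E[S]} (1 - x_u - x_v), \qquad m_x(S) := \sum_{u \in S} (1 - x_u).
\]
The function $m_x$ is modular: for any $A,B\subseteq V$ we have $m_x(A)+m_x(B) = m_x(A\cup B)+m_x(A\cap B)$, because each vertex is counted equally often on both sides. The constant $1$ appears once on each side of the supermodularity inequality, so it cancels. It therefore suffices to show that $w_x$ is supermodular, i.e.\ $w_x(A)+w_x(B)\le w_x(A\cup B)+w_x(A\cap B)$ for all $A,B\subseteq V$. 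This holds for every pair of subsets, including the empty set and sets with $E[S]=\emptyset$, so no case distinction is needed.

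The key step is a comparison of edge sets. Since $E[A\cap B] = E[A]\cap E[B]$ and $E[A]\cup E[B]\subseteq E[A\cup B]$, inclusion--exclusion over edges gives
\[
w_x(A\cup B)+w_x(A\cap B) - w_x(A) - w_x(B) \;=\; \sum_{uv \in E[A\cup B]\setminus (E[A]\cup E[B])} (1-x_u-x_v).
\]
The edges in $E[A\cup B]\setminus(E[A]\cup E[B])$ are exactly those with one endpoint in $A\setminus B$ and the other in $B\setminus A$. The hypothesis $x_u<1/2$ for all $u$ gives $1-x_u-x_v>0$ for every edge, so the right-hand side is nonnegative. This is the only place the hypothesis is used, and it is what makes the supermodularity conditional: without it, edge weights could be negative and the inequality could reverse.

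There is no real obstacle here. The only point needing care is the exact identification of the edges of $G[A\cup B]$ that lie in neither $G[A]$ nor $G[B]$ (the crossing edges between $A\setminus B$ and $B\setminus A$), so that the inclusion--exclusion bookkeeping over edges is done correctly.
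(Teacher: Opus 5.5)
Your proposal is correct and follows the paper's proof: the paper likewise writes $f_x(S) = w(E[S]) - |S| + x(S) + 1$ with $w_{uv} = 1 - x_u - x_v > 0$, and observes that the positive-weight edge part is supermodular while the remainder is modular. Your explicit identification of the surplus as the sum over crossing edges between $A\setminus B$ and $B\setminus A$ is the same identity the paper writes out later in the proof of its uncrossing lemma.
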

\begin{proof}
Write $w_{uv} := 1 - x_u - x_v$ for each edge $uv \in E$. Since $x_u < 1/2$ for all $u$, we have $w_{uv} > 0$ for all edges. We can rewrite $f_x(S) = w(E[S]) - |S| + x(S) + 1$, where $w(E[S]) = \sum_{uv \in E[S]} w_{uv}$ and $x(S) = \sum_{u \in S} x_u$.
The function $w(E[S])$ is supermodular since $w_{uv} > 0$ for every edge $uv\in E$. The function $-|S| + x(S) + 1$ is modular. Therefore, $f_x$ is supermodular.
\end{proof}

We observe that a set $S\subseteq V$ with $E[S] \ne \emptyset$ is tight iff $f_x(S) = 0$. We have the following uncrossing lemma for tight sets. 

\begin{lemma}[Uncrossing Tight Sets]\label{lem:uncrossing}
Let $A, B\in \mathcal{T}$ with $|A\cap B|\ge 2$. Then, 
$E[A \cap B] \ne \emptyset$, both $A \cap B$ and $A \cup B$ are tight, and 
\[
\mathrm{row}(A) + \mathrm{row}(B) = \mathrm{row}(A \cap B) + \mathrm{row}(A \cup B).
\]
\end{lemma}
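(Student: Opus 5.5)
Throughout, the standing assumption is that $x_u<1/2$ for all $u$, so Lemma~\ref{lem:supermod} applies and $f_x$ is supermodular. Feasibility gives $f_x(S)\le 0$ for every $S$ with $E[S]\neq\emptyset$ (Proposition~\ref{prop:sd-ineqs-equiv}). The plan has three steps: show that $E[A\cap B]\ne\emptyset$, sandwich $f_x(A\cap B)+f_x(A\cup B)$ between $0$ and $0$, and then verify the row identity by a local degree computation at each vertex.

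\textbf{Step 1: $A\cap B$ spans an edge.} The hypothesis $|A\cap B|\ge 2$ alone does not give this, so I would use tightness. Suppose $E[A\cap B]=\emptyset$. Then $f_x(A\cap B)=-|A\cap B|+x(A\cap B)+1$. Since $|A\cap B|\ge 2$ and each $x_u<1/2$, we have $x(A\cap B)<|A\cap B|/2$, and therefore $f_x(A\cap B)<1-|A\cap B|/2\le 0$. On the other hand, $A\cup B$ has an edge (it contains $E[A]$), so feasibility gives $f_x(A\cup B)\le 0$. Combining these two bounds with $f_x(A)+f_x(B)=0$ and supermodularity gives
\[
0=f_x(A)+f_x(B)\le f_x(A\cap B)+f_x(A\cup B)<0,
\]
a contradiction. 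Hence $E[A\cap B]\neq\emptyset$.

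\textbf{Step 2: both sets are tight.} Now that $A\cap B$ and $A\cup B$ both span edges, feasibility gives $f_x\le 0$ on each. Supermodularity gives $0=f_x(A)+f_x(B)\le f_x(A\cap B)+f_x(A\cup B)\le 0$. Equality must hold throughout, so both values are $0$ and both sets are tight.

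\textbf{Step 3: the row identity.} This is where care is needed, because the identity should hold as vectors for every $x$. Supermodularity of $w(E[\cdot])$ is equality only when no edge of $E[A\cup B]$ is missing from $E[A]\cup E[B]$. So I would first extract from the equality case in Step 2 that every edge of $E[A\cup B]$ lies in $E[A]\cup E[B]$, that is, there is no edge between $A\setminus B$ and $B\setminus A$. This follows because the supermodularity slack equals the sum of $w_{uv}$ over such crossing edges, and each $w_{uv}>0$.

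With no crossing edges, $d_{A\cup B}(u)+d_{A\cap B}(u)=d_A(u)+d_B(u)$ for every $u$. I would check this by cases:
\begin{itemize}
\item If $u\in A\cap B$, this is edge-counting by location of the neighbor.
\item If $u\in A\setminus B$, then $d_{A\cup B}(u)=d_A(u)$ and the $B$-terms vanish.
\item The case $u\in B\setminus A$ is symmetric.
\end{itemize}
The $-1$ terms also match, since each $u$ contributes $-1$ equally often to both sides (twice if $u\in A\cap B$, once otherwise). Hence $\row(A)+\row(B)=\row(A\cap B)+\row(A\cup B)$.

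The main obstacle is recognizing that the row identity needs the absence of crossing edges, and extracting this from the strictness of $w>0$ rather than from the hypotheses directly.
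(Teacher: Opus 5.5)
Your proposal is correct and follows the paper's proof step for step. You rule out $E[A\cap B]=\emptyset$ via $f_x(A\cap B)<1-|A\cap B|/2\le 0$ plus supermodularity, sandwich $f_x(A\cap B)+f_x(A\cup B)$ to get tightness, and then use $w_{uv}>0$ to exclude edges between $A\setminus B$ and $B\setminus A$ before checking the row identity vertex by vertex (your placement of the strict inequality is in fact slightly more careful than the paper's displayed chain).
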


\begin{proof}
Since $A$ and $B$ are tight, $f_x(A) = f_x(B) = 0$.

For the sake of contradiction, suppose $|A \cap B| \ge 2$ with $E[A \cap B] = \emptyset$. 
We have that 
\begin{align*}
f_x(A \cap B) 
&= 0 - |A \cap B| + x(A \cap B) + 1 \quad \quad  \text{(since $E[A\cap B]=\emptyset$)}\\
&< 1 - \frac{|A \cap B|}{2} \quad \quad \text{(since $x_u < 1/2$ for all $u\in A\cap B$)}\\
&<  0. \quad \quad \text{(since $|A\cap B|\ge 2$)}
\end{align*}
Also $f_x(A \cup B) \le 0$ (by LP feasibility, since $A$ and $B$ both have edges, so $A \cup B$ has edges). 
By supermodularity, $0 = f_x(A) + f_x(B) \le f_x(A \cap B) + f_x(A \cup B) < 0$, a contradiction.
Therefore $E[A \cap B] \ne \emptyset$ whenever $|A \cap B| \ge 2$.

Next, we show that $A\cap B$ and $A\cup B$ are tight: we may assume that $|A\cap B|\ge 2$ and $E[A \cap B] \ne \emptyset$. 
Now $f_x(A \cap B) \le 0$ (by LP feasibility since $E[A\cap B]\neq \emptyset$) and $f_x(A \cup B) \le 0$ (by LP feasibility since $E[A\cup B]\neq \emptyset$). Supermodularity gives $0 = f_x(A) + f_x(B) \le f_x(A \cap B) + f_x(A \cup B) \le 0$, so equality holds throughout: $f_x(A \cap B) = f_x(A \cup B) = 0$, i.e., both $A\cap B$ and $A\cup B$ are tight.

Next, we prove the row identity. We first observe that equality in supermodularity implies there are no edges between $A \setminus B$ and $B \setminus A$. Indeed, recall that $w(E[S]) = \sum_{uv \in E[S]} w_{uv}$ where $w_{uv} = 1 - x_u - x_v > 0$. The supermodularity relies on the identity
\[
w(E[A]) + w(E[B]) = w(E[A \cap B]) + w(E[A \cup B]) - \sum_{uv \in \delta(A \setminus B,\, B \setminus A)} w_{uv},
\]
where $\delta(A \setminus B, B \setminus A)$ denotes edges with one endpoint in $A \setminus B$ and the other in $B \setminus A$. (Such edges are counted in $E[A \cup B]$ but in neither $E[A]$ nor $E[B]$, and they appear in neither $E[A \cap B]$.) Since we have equality $f_x(A) + f_x(B) = f_x(A \cap B) + f_x(A \cup B)$ and the modular terms cancel, we obtain $w(E[A]) + w(E[B]) = w(E[A \cap B]) + w(E[A \cup B])$. Since $w_{uv} > 0$ for all $uv \in E$, the sum over $\delta(A \setminus B, B \setminus A)$ must be zero, so $\delta(A \setminus B, B \setminus A) = \emptyset$.

Now we verify $\mathrm{row}(A)_u + \mathrm{row}(B)_u = \mathrm{row}(A \cap B)_u + \mathrm{row}(A \cup B)_u$ for each vertex $u$ by cases:
\begin{itemize}
\item $u \in A \cap B$: $d_A(u) + d_B(u) = d_{A \cap B}(u) + d_{A \cup B}(u)$, since every edge incident to $u$ within $A \cup B$ goes to $A \cap B$, $A \setminus B$, or $B \setminus A$. The edges to $A \cap B$ are counted on both sides. The edges to $A \setminus B$ contribute to $d_A(u)$ and $d_{A \cup B}(u)$; edges to $B \setminus A$ contribute to $d_B(u)$ and $d_{A \cup B}(u)$. Both sides sum to $d_{A \cap B}(u) + d_{A \setminus B}(u) + d_{B \setminus A}(u)$ where $d_X(u)$ counts edges from $u$ to $X$. Subtracting $1$ from each row: both sides give $d_A(u) - 1 + d_B(u) - 1 = d_{A \cap B}(u) - 1 + d_{A \cup B}(u) - 1$.
\item $u \in A \setminus B$: $\mathrm{row}(B)_u = \mathrm{row}(A \cap B)_u = 0$, so we need $\mathrm{row}(A)_u = \mathrm{row}(A \cup B)_u$, i.e., $d_A(u) = d_{A \cup B}(u)$. Since $\delta(A \setminus B, B \setminus A) = \emptyset$, vertex $u \in A \setminus B$ has no edges to $B \setminus A$, so $d_{A \cup B}(u) = d_A(u)$.
\item $u \in B \setminus A$: Symmetric to the previous case.
\item $u \notin A \cup B$: All four row entries are $0$.
\end{itemize}

\end{proof}

\subsection{Almost-Laminar Basis Structure}
The uncrossing lemma (Lemma \ref{lem:uncrossing}) allows us to show
the existence of a well-structured basis. We define the
well-structured property below and show the existence of such a basis
with additional properties in Theorem \ref{thm:basis}.

\begin{definition}[Almost-Laminar Family] A family $\mathcal{L}$ of subsets of $V$ is \emph{almost-laminar} if for every $A, B \in \mathcal{L}$, one of the following holds: (i) $A \subseteq B$, (ii) $B \subseteq A$, and (iii) $|A \cap B| \le 1$.
\end{definition}

We will work with row vectors coming from the constraints of the
polyhedron. Since there are $n$ variables these are $n$ dimensional
vectors.  For a set of (row) vectors $A$, we let $\subspan(A)$ to be
set of all (row) vectors spanned by the vectors in $A$. For a vertex
$u \in V$ we let $1_u$ denote the unit row vector correponding to $u$.
We let $\cZ = \{1_u \mid u \in Z\}$ denote the collection of the unit
row vectors corresponding to the vertices in $Z$.

First, we show some simple properties of non-singleton sets 
in a basis. 

\begin{lemma}
  \label{lem:basis-set-properties}
 Let $\cL$ be a basis for $x$ and let $\row(S) \in \cL$ where $|S|\ge 2$. Then $G[S]$ is connected and has a cycle.
\end{lemma}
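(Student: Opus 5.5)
The plan is to prove both properties by contradiction, using the fact that a basis consists of linearly independent tight rows. This independence forbids $\row(S)$ from being a combination of other tight rows together with the unit vectors $\cZ$. Throughout, $\row(S)\in\cL$ with $|S|\ge 2$ means that $S$ is a tight set: $E[S]\neq\emptyset$, $g_x(S)=b(S)$, and $x_u<1/2$ for all $u$.

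\emph{Cycle.} Suppose $G[S]$ is a forest with $c\ge 1$ components. Then $|E[S]|=|S|-c$, so $b(S)=1-c\le 0$. Tightness gives $f_x(S)=0$. Write $f_x(S)=\sum_{uv\in E[S]}(1-x_u-x_v)-\sum_{u\in S}(1-x_u)+1$, and bound each edge term by $1$ minus something nonnegative. It is cleaner to use $g_x$ directly. We have
\[
g_x(S)=\sum_{u\in S}(d_S(u)-1)x_u \ge -\sum_{u\in S,\,d_S(u)=0}x_u > -\tfrac12\,|\{u\in S: d_S(u)=0\}|.
\]
The isolated vertices form $c_0$ of the $c$ components, and since $E[S]\ne\emptyset$ we have $c\ge c_0+1$. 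Hence $g_x(S)>-c_0/2\ge -(c-1)/2\ge 1-c=b(S)$ whenever $c\ge 1$. This gives $g_x(S)>b(S)$, contradicting tightness. (When $c=1$, $b(S)=0$ and $g_x(S)\ge 0$; I must still rule out equality. Here $S$ is a tree with $|S|\ge2$ and no isolated vertices, and $g_x(S)=0$ forces $x_u=0$ for every $u$ with $d_S(u)\ge 2$. Then $\row(S)$ is supported on vertices of degree $\ge2$, all lying in $Z$, plus leaves with coefficient $0$. So $\row(S)\in\subspan(\cZ)$, contradicting linear independence of the basis, which contains $\cZ$.) More generally, the same computation with the weaker bound $g_x(S)\ge -c_0/2$ shows that strict inequality holds unless $c=1$, and the $c=1$ case is handled by the $\subspan(\cZ)$ argument. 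So $G[S]$ contains a cycle.

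\emph{Connectivity.} Suppose $G[S]$ has components with vertex sets $S_1,\dots,S_k$, $k\ge2$. Both $g_x$ and $b-1$ are additive over components: $g_x(S)=\sum_i g_x(S_i)$ and $b(S)=\sum_i b(S_i)-(k-1)$. For a component with $E[S_i]\ne\emptyset$, feasibility gives $g_x(S_i)\ge b(S_i)$. For an isolated vertex, $g_x(S_i)=-x_u>-1/2$ while $b(S_i)=1$, so $g_x(S_i)-b(S_i)+1>-1/2+0$, which is still nonnegative in the sense needed below. Tightness therefore gives
\[
0=g_x(S)-b(S)=\sum_i\bigl(g_x(S_i)-b(S_i)\bigr)+(k-1).
\]
Each non-isolated term is $\ge 0$, and each isolated term is $-x_u-1>-3/2$. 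Counting more carefully, with $c_0$ isolated vertices and $k-c_0\ge1$ edge components, the right side is $>\,-3c_0/2+(k-1)\cdot 1$. Since $k-1\ge c_0$, this is $\ge -c_0/2$, which is not immediately positive. The cleaner route is to separate the isolated vertices from the edge components. Let $S'$ be the union of the components that have edges. Then $g_x(S)=g_x(S')-x(\text{isolated})$ and $b(S)=b(S')-c_0$. So $g_x(S)-b(S)=g_x(S')-b(S')+\sum(1-x_u)>g_x(S')-b(S')\ge0$ if $c_0\ge1$. Thus $c_0=0$. With all components having edges, each term $g_x(S_i)-b(S_i)\ge0$ and $k-1\ge1$, so $g_x(S)-b(S)\ge1>0$, contradicting tightness.

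I expect the main subtlety to be the tree case of the cycle claim. There tightness is numerically possible, with $b=0$ and $g_x=0$, and the argument must use that a basis includes $\cZ$ and is linearly independent rather than relying on feasibility alone.
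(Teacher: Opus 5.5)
Your proof is correct and uses the same three ingredients as the paper's proof: discarding isolated vertices by comparing with the feasible edge-bearing part $S'$, additivity of $g_x$ and $b-1$ over components to rule out disconnection, and the observation that a tight tree forces $\row(S)\in\subspan(\cZ)$; handling all forests with $c\ge 2$ components by a direct count, instead of first proving connectivity as the paper does, is only a reordering. Your abandoned first attempt at connectivity uses $b(\{u\})=1$ for an isolated vertex (it is $0$), and your bound $g_x(S)>-c_0/2$ is strict only when $c_0\ge 1$, but the arguments you finally settle on avoid both slips and are sound.
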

\begin{proof}
  We argue in steps. First, there are no isolated vertices in $S$.
  If $v \in S$ is isolated in $G[S]$, then
  $g_x(S \setminus \{v\}) = g_x(S) + x_v = b(S) + x_v=b(S\setminus
  \{v\})-1 + x_v = b(S\setminus\{v\})-(1-x_v)<b(S\setminus \{v\})$,
  violating feasibility.

  Second, $G[S]$ is connected. Otherwise, $G[S]$ has connected
  components $S_1, S_2$ with $E[S_i] \ne \emptyset$ for both
  $i\in [2]$ (from the first property), then
  $g_x(S) = g_x(S_1) + g_x(S_2) \ge b(S_1) + b(S_2) = b(S) + 1 >
  b(S)$, contradicting tightness of $S$.

  Third, $G[S]$ contains a cycle. Otherwise, by the previous
  properties, $G[S]$ is a tree. Then $b(S) = 0$. Consequently,
  $g_x(S) = 0$ by tightness.  If there is a vertex $u \in S$ with
  $d_S(u) \ge 2$ and $x_u > 0$ then $g_x(S)$ has a strictly positive
  term $(d_S(u) - 1)x_u$ while $b(S) = 0$ which violates tightness
  of $S$. 
  Therefore, $x_u = 0$ for all $u$ with $d_S(u) \ge 2$. But
  this implies that $\row(S) \in \subspan(\cZ)$ contradicting linear
  independence of the vectors in the basis.
\end{proof}

\begin{theorem}[Almost-Laminar Basis]\label{thm:basis}
There exists an almost-laminar family $\mathcal{L} \subseteq \mathcal{T}$ such that:
\begin{enumerate}
\item 
the vectors $\{ \mathrm{row}(S) : S \in \mathcal{L} \} \cup \{ 1_u : u \in Z \}$ are linearly independent and $|\mathcal{L}| + |Z| = |V|$.
\item For every $S \in \mathcal{L}$, the graph $G[S]$ is $2$-connected 
and $|NZ(S)|\ge 3$.
\item If $A, B \in  \mathcal{L}$ and $A \subset B$, there exists $v \in B \setminus A$ with $x_v > 0$.
\end{enumerate}
\end{theorem}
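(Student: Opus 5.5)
The approach is the standard Jain-style uncrossing argument. Let $\cZ$ be as above and take a maximal almost-laminar family $\cL \subseteq \cT$ such that $\{\row(S): S\in\cL\}\cup\cZ$ is linearly independent. We claim that $\subspan(\{\row(S):S\in\cL\}\cup\cZ)$ contains $\row(T)$ for every $T\in\cT$. Since $x$ is an extreme point, the tight rows together with $\cZ$ span $\R^V$, so this gives $|\cL|+|Z|=|V|$ and hence property 1.

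To prove the claim, suppose some $T\in\cT$ has $\row(T)$ outside the span. Among all such $T$, choose one that minimizes the number of sets of $\cL$ it \emph{crosses}, where $T$ crosses $S$ if neither contains the other and $|S\cap T|\ge 2$. If $T$ crosses nothing, then $\cL\cup\{T\}$ is almost-laminar, which contradicts maximality. Otherwise $T$ crosses some $S\in\cL$. By Lemma~\ref{lem:uncrossing}, both $S\cap T$ and $S\cup T$ are tight and
\[
\row(T)=\row(S\cap T)+\row(S\cup T)-\row(S),
\]
so one of $S\cap T$, $S\cup T$ lies outside the span. The remaining work is the usual counting: each of $S\cap T$ and $S\cup T$ crosses strictly fewer sets of $\cL$ than $T$ does. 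Concretely, if $R\in\cL$ crosses $S\cap T$ (or $S\cup T$), then $R$ crosses $T$, and $S$ is crossed by $T$ but by neither of the new sets.

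This case analysis is the main obstacle, because almost-laminarity replaces disjointness with "$|A\cap B|\le1$". For example, take $R$ with $|R\cap S|\le1$ and $|R\cap(S\cap T)|\ge 2$; this is impossible since $R\cap(S\cap T)\subseteq R\cap S$. The other cases must be checked in the same way, using that $R$ and $S$ are comparable or meet in at most one vertex. Sets $R\supseteq S$ or $R\subseteq S$ need the standard care. I expect each case to go through with the size-one intersections playing the role that disjointness plays in the laminar proof.

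For property 2, Lemma~\ref{lem:basis-set-properties} already gives that each $S\in\cL$ is connected and contains a cycle. Corollary~\ref{cor:three-nz} then gives $|\NZ(S)|\ge3$; singletons do not arise because they are not tight, since $E[S]\ne\emptyset$. For 2-connectivity, suppose $v$ is a cut vertex of $G[S]$ and write $S=S_1\cup S_2$ with $S_1\cap S_2=\{v\}$ and no edges between $S_1\setminus v$ and $S_2\setminus v$. Then $g_x$ is additive in the sense that $g_x(S)=g_x(S_1)+g_x(S_2)+x_v$ (because $d_S(v)-1=(d_{S_1}(v)-1)+(d_{S_2}(v)-1)+1$), and $b(S)=b(S_1)+b(S_2)$. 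Since $G[S]$ is connected, both $E[S_i]\neq\emptyset$, so $g_x(S)\ge b(S)+x_v$. Tightness therefore forces $x_v=0$ and both $S_i$ tight, and the row identity $\row(S)=\row(S_1)+\row(S_2)+1_v$ holds. To exploit this, I would choose $\cL$ during the construction to minimize $\sum_{S\in\cL}|S|$ among valid maximal families. Replacing $S$ by whichever of $S_1$, $S_2$ keeps independence (one must, since $1_v\in\cZ$) yields a smaller family. This new family should stay almost-laminar, because $S_1$ and $S_2$ arise as intersections with $S$ in the relevant sense; if that fails, I would instead fold the cut-vertex decomposition into the span argument, showing that every tight $S$ with a cut vertex is spanned by tight 2-connected sets plus $\cZ$, and run the uncrossing argument only over $2$-connected tight sets, uncrossing 2-connected sets with intersection at least 2 and decomposing the resulting sets further as needed.

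For property 3, suppose $A\subset B$ in $\cL$ and every vertex of $B\setminus A$ has $x_v=0$. I would show that $\row(B)-\row(A)$ is spanned by $\cZ$, which contradicts independence. For $u\in A$ with $x_u>0$ we have $\row(B)_u-\row(A)_u=d_B(u)-d_A(u)$, so it suffices to show that no edge joins $\NZ(A)$ to $B\setminus A$. Tightness gives this: $g_x(B)-g_x(A)=\sum_{u\in A}x_u\cdot|\text{edges } u\to B\setminus A|$, and $b(B)-b(A)=|E[B]|-|E[A]|-|B\setminus A|$. Since $G[B]$ is 2-connected and $A\subsetneq B$, a counting argument should show that $b(B)-b(A)\ge$ (number of edges leaving $A$) $-\,|B\setminus A|$. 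The degree-sum contribution then has to be compared with the requirement $x_u<1/2$. If this direct inequality approach stalls, the fallback is to use the extra edges together with $|E[B]|-|E[A]|$ to deduce that those edge weights are exactly balanced, which forces them to vanish.
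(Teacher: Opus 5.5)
Your outline has the same architecture as the paper: uncross to an almost-laminar family, split members at cut vertices, then count for (3). However, the step you call ``the usual counting'' fails as you intend to verify it. The implication ``if $R\in\mathcal{L}$ crosses $S\cup T$ then $R$ crosses $T$'' is false set-theoretically in exactly the case you did not check. Let $R\in\mathcal{L}$ be incomparable with $S$, with $R\cap S=\{v\}$ for some $v\in S\setminus T$ and $R\cap T=\{w\}$ for some $w\in T\setminus S$. Then $|R\cap(S\cup T)|=2$, so $R$ may cross $S\cup T$ without crossing $T$. One-vertex overlaps do not play the role of disjointness for unions, and the crossing number of $S\cup T$ need not drop. (The intersection side is fine: every case for $S\cap T$ checks out.) So whenever the set left outside the span is $S\cup T$, your minimal-counterexample argument stalls.

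The paper avoids unions altogether. It takes $T$ outside the span of \emph{maximum cardinality}, so $S\cup T$ is automatically spanned and hence $S\cap T$ is not. It also takes $S$ inclusion-minimal among the members crossed by $T$, which makes $\mathcal{L}\cup\{S\cap T\}$ almost-laminar directly, since a member $R\subsetneq S$ crossing $S\cap T$ would cross $T$. Alternatively, you can keep your potential, but then you must use the graph structure, not just set containment. Lemma~\ref{lem:uncrossing} applied to the tight sets $R$ and $S\cup T$ gives $vw\in E$. Its proof applied to $S,T$ shows there is no edge between $S\setminus T$ and $T\setminus S$. So the bad configuration cannot occur.

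Two later steps are also left open. For 2-connectivity, replacing $S$ by $S_1$ can break almost-laminarity: a member $R\subsetneq S$ containing the cut vertex $v$ and vertices on both sides of it crosses $S_1$. Saying $S_1$ ``arises as an intersection with $S$'' does not prevent this. You need to pick $S$ inclusion-minimal among non-2-connected members. Then every member strictly inside $S$ is 2-connected, hence lies in $S_1$ or in $S_2$, and your minimize-$\sum_{S\in\mathcal{L}}|S|$ argument goes through.

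For (3), let $\ell:=|\delta_G(A,B\setminus A)|$. Your bound $b(B)-b(A)\ge \ell-|B\setminus A|$ is trivially true but too weak. The missing ingredient is the degree count $2|E[B\setminus A]|+\ell=\sum_{v\in B\setminus A}d_B(v)\ge 2|B\setminus A|$, which holds because the 2-connected graph $G[B]$ has minimum degree $2$. It gives $b(B)-b(A)=\ell+|E[B\setminus A]|-|B\setminus A|\ge \ell/2$. On the other hand, $g_x(B)-g_x(A)$ is a sum of $\ell$ values $x_u<1/2$, one per edge of $\delta_G(A,B\setminus A)$ taken at its endpoint in $A$, so it is $<\ell/2$ since $\ell\ge 1$. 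This contradicts tightness of $A$ and $B$ outright, so you do not need the detour through showing $\mathrm{row}(B)-\mathrm{row}(A)\in\mathrm{span}(\mathcal{Z})$.
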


\begin{proof}
We construct $\mathcal{L}$ in two stages: first uncrossing to an almost-laminar family, then refining to ensure $2$-connectivity. We will subsequently show that the third property also holds by exploiting the $2$-connectivity property. 

Fix a maximum cardinality $\cL\subseteq \cT$ such that 
\begin{enumerate}
    \item $\cL$ is almost-laminar and
    \item $\{\row(S): S\in \cL\}$ is linearly independent. 
    \end{enumerate}
    
    Let $\cW:=\subspan(\cL)$ and $\cQ:=\subspan(\cT)$.
    The following is the key lemma. 
\begin{lemma}\label{lem:span-tight-sets}
We have that $\cW=\cQ$. 
\end{lemma}
\begin{proof}
  Since $\cL\subseteq \cT$, it follows that $\cW\subseteq \cQ$. For the sake of contradiction, suppose $\cW\subsetneq \cQ$. Then, there exists $A\in \cT$ such that $\row(A)\not\in \cW$; among all such $A$, choose one that maximizes $|A|$.

  If $\cL\cup \{A\}$ is almost-laminar, then $\cL\cup \{A\}$
  contradicts the choice of $\cL$. Hence, $\cL\cup \{A\}$ is not
  almost laminar. Hence, there exists $B\in \cL$ such that
  $A-B\neq \emptyset$, $B-A\neq \emptyset$, and $|A\cap B|\ge 2$.
  Pick an inclusionwise minimal $B\in \cL$ that satisfies this
  property.

  Let $I:=A\cap B$ and $U:=A\cup B$. By Lemma \ref{lem:uncrossing}
  applied to $A, B\in \cT$, we have that
    \begin{enumerate}
        \item $I, U\in \cT$ and 
        \item $\row(A) + \row(B) = \row(I) + \row(U)$. 
    \end{enumerate}
    Since $B-A \neq \emptyset$, we have that $|U|>|A|$. If
    $\row(U)\not\in \cW$, then $U$ contradicts the choice of
    $A$. Therefore, $\row(U)\in \cW$. If $\row(I) \in \cW$ we would
    have $\row(A) \in \cW$ because
    $\row(A) + \row(B) = \row(I) + \row(U)$. However, by assumption on
    $\row(A) \not \in \cW$, and hence $\row(I) \not \in \cW$.

    Claim \ref{claim:almost-laminar} below
    shows that $\cL\cup \{I\}$ is almost-laminar.  Assuming the claim,
    $I\in \cT$,    $\row(I)\not\in \cW$, and $\cL\cup\{I\}$ is
    almost-laminar. Consequently, $\cL\cup \{I\}$ contradicts the
    choice of $\cL$, finishing the proof of the lemma.
    \end{proof}

    \begin{claim}\label{claim:almost-laminar}
        $\cL\cup\{I\}$ is almost-laminar. 
    \end{claim}
    \begin{proof}
        Let $T\in \cL$. It suffices to show that either $I\subseteq T$ or $T\subseteq I$ or $|I\cap T|\le 1$.
        Since $B, T\in \cL$, we have that 
        \begin{align*}
            &\text{either $B\subseteq T$ or $T\subseteq B$ or $|B\cap T|\le 1$.}
        \end{align*}
        We case based on the relationship between $B$ and $T$. 
        
        \noindent\textbf{Case 1.} $B\subseteq T$: Then, $I\subseteq B\subseteq T$ and consequently, $I\subseteq T$. 

        \noindent\textbf{Case 2.} $|B\cap T|\le 1$: Then, $I\cap T \subseteq B\cap T$ and consequently, $|I\cap T|\le 1$. 

        \noindent\textbf{Case 3.} $T\subseteq B$: 
        If $T=B$, then $I=B\cap A \subseteq B=T$ and we are done. Hence, we may assume that $T\subsetneq B$. 
        For the sake of contradiction, suppose $I-T\neq \emptyset$, $T-I\neq \emptyset$, and $|T\cap I|\ge 2$. Since $I= B\cap A$, we have that $A- T\supseteq I- T\neq \emptyset$, $T-A=T-I \neq \emptyset$, and $T\cap A = T\cap I$. Thus, $A-T\neq \emptyset$, $T-A\neq \emptyset$, and $|T\cap A|\ge 2$. Since $T\in \cL$ and $T\subsetneq B$, we have that $T$ contradicts the choice of $B$. 
      \end{proof}
        
        We pick $\cL_0$ to be an inclusion-wise maximal subfamily of
        $\cL$ such that $\{\row(S): S\in \cL_0\}\cup \cZ$
        are linearly independent. Recall that $\cZ = \{1_u \mid u \in
        Z\}$.  We note that
        $\subspan(\cL_0\cup \cZ)=\subspan(\cL\cup Z)=\subspan(\cT\cup
        Z)=\R^V$, where the first equality is because of the
        inclusion-wise maximal choice of $\cL_0$, the second equality
        is by Lemma \ref{lem:span-tight-sets}, and the third equality
        is because $\cT$ is the family of tight sets corresponding to
        the extreme point $x$. Thus, $\cL_0$ is an almost-laminar
        basis satisfying Property (1).

\medskip\noindent\textbf{Stage 2: Refinement to $2$-connected members.}
Starting from $\mathcal L_0$, we repeatedly perform the following operation.
If some member of $\cL_0$ is not $2$-connected, then choose 
an inclusionwise minimal 
set $S\in \cL_0$ that is not $2$-connected. 
By Lemma \ref{lem:basis-set-properties}, $G[S]$ is connected and has a cycle. 
Suppose $S \in \mathcal{L}_0$ has a cut vertex $v$ in $G[S]$. 
Choose a non-empty proper union $C_1$ of components of $G[S]-v$, and let $C_2$ be the union of the remaining components of $G[S]-v$. Define $S_i:=C_i\cup \{v\}$ for both $i\in [2]$. 
We observe that $g_x(S)
= g_x(S_1) + g_x(S_2) + x_v$  and $b(S) = b(S_1) + b(S_2)$.

Since $g_x(S) = b(S)$, $g_x(S_i) \ge b(S_i)$, and $x_v \ge 0$, we
conclude $x_v = 0$ and both $S_1, S_2 \in \mathcal{T}$. Moreover,
$\mathrm{row}(S) = \mathrm{row}(S_1) + \mathrm{row}(S_2) + 1_v$. Since
$x_v = 0$, we have that $1_v \in Z$.  Since
$\mathrm{row}(S) \notin \mathrm{span}(\mathcal{L}_0 \setminus \{S\}
\cup \cZ)$), at least one of $S_1$ and $S_2$ has
$\row(S_i)\not\in \mathrm{span}(\mathcal{L}_0 \setminus \{S\} \cup
\cZ)$. Without loss of generality, suppose it is $S_1$.
Then, we replace $S$ by $S_1$. This forms a new basis for $x$ since
the span of the collection is preserved. 

\emph{Almost-laminarity is preserved:} Consider an arbitrary
$T \in \mathcal{L}_0\setminus \{S\}$. If $T$ and $S$ were
incomparable, then $|T \cap S| \le 1$, so $|T \cap S_1| \le 1$. If
$T \supsetneq S$, then $T \supsetneq S_1$. Suppose $T \subsetneq S$. Then, by the inclusionwise minimal choice of $S$, the subgraph $G[T]$ is $2$-connected. Hence, $T$ lies entirely within $S_1$ or $S_2$ (as
$v$ separates $S_1 \setminus \{v\}$ from $S_2 \setminus \{v\}$ in
$G[S]$); if $T \subseteq S_1$ we have comparability, and if
$T \subseteq S_2$ then $T \cap S_1 \subseteq \{v\}$, giving
$|T \cap S_1| \le 1$.

This replacement reduces the total number of cut vertices across all
sets in the family. Repeating, we obtain $\mathcal{L}$ where every set
induces a $2$-connected graph. Let $S\in \mathcal{L}$.  By Lemma \ref{lem:basis-set-properties}, $G[S]$ contains a cycle. Since $G[S]$
is $2$-connected and contains a cycle, by Corollary
\ref{cor:three-nz}, we have $|\mathrm{NZ}(S)| \ge 3$.

This proves Properties (1) and (2).

\medskip\noindent\textbf{Property (3).}
Suppose $A \subsetneq B$ with $A, B \in \mathcal{L}$ and $x_v = 0$ for all $v \in B \setminus A$. We derive a contradiction.

Let $\ell := |\delta_G(A, B \setminus A)|$ denote the number of edges
between $A$ and $B \setminus A$. Since $G[B]$ is $2$-connected by
Property (2), we have $\ell \ge 2$. Since both $A$ and $B$ are tight and $x_v = 0$ for all $v \in B \setminus A$, we have that 
\begin{align*}
\frac{\ell}{2}
&> \sum_{uv \in \delta_G(A, B \setminus A)} x_u \quad \quad \text{(since $x_u<1/2$ for all $u\in A$)}\\
&= \sum_{u \in A}(d_B(u) - d_A(u))x_u\\ 
&= \sum_{u \in A}(d_B(u)-d_A(u))x_u + \sum_{u \in B-A} (d_B(u) - 1) x_u \quad \quad \text{(since $x_u=0$ for all $u\in B\setminus A$)}\\
&= \sum_{u \in B} (d_B(u)-1)x_u - \sum_{u \in A}(d_A(u) - 1)x_u \\
&=g_x(B) - g_x(A) \\
&=b(B) - b(A) \quad \quad \text{(since $A$ and $B$ are tight)}\\
& = |E[B]|-|B| - |E[A]|+|A| \\
& = \delta_G(A, B\setminus A) + |E[B \setminus A]| - |B \setminus A|\\
&= \ell + |E[B \setminus A]| - |B \setminus A|.
\end{align*}
Rearranging, we obtain that 
\begin{equation}\label{eq:ell-ub}
|E[B \setminus A]| < |B \setminus A| - \frac{\ell}{2}.
\end{equation}

On the other hand, since $G[B]$ is $2$-connected (by Property (2)), every vertex in $B$ has $d_B(v) \ge 2$. Thus, 
\[
2|E[B \setminus A]| + \ell = 
\sum_{v \in B \setminus A} d_B(v) \ge 2|B \setminus A|,
\]
giving
\begin{equation*}
|E[B \setminus A]| \ge |B \setminus A| - \frac{\ell}{2}, 
\end{equation*}
a contradiction to inequality \eqref{eq:ell-ub}. 

\end{proof}

\subsection{Forest Structure of Almost-Laminar Family}
Let $\mathcal{L}$ be the almost-laminar family from
Theorem~\ref{thm:basis}.  It naturally forms a partially ordered set
(poset) under set inclusion. We now analyze the structure of this
poset.  Consider the Hasse diagram of the poset
$(\mathcal{L}, \subseteq)$: recall that vertices of this diagram
correspond to sets in $\mathcal{L}$ and we have an edge from $A$ to
$B$ if $A$ is a minimal proper superset of $B$.  We will call two sets
$A$ and $B$ to be \emph{incomparable} if $A$ is not contained in $B$
and $B$ is not contained in $A$.

\begin{lemma}[Poset Structure]\label{lem:poset}
For every set $S\in \mathcal{L}$, there exists at most one minimal proper superset in $\mathcal{L}$. Consequently, the Hasse diagram of $(\mathcal{L}, \subseteq)$ is a forest (since every set has a unique parent in the Hasse diagram). 
\end{lemma}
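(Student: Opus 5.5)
We argue by contradiction. Suppose some $S\in\mathcal{L}$ has two distinct minimal proper supersets $A,B\in\mathcal{L}$. The aim is to show that $A$ and $B$ must then be comparable, which contradicts their minimality.

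\textbf{Step 1: $A$ and $B$ are incomparable.} If, say, $A\subsetneq B$, then $S\subsetneq A\subsetneq B$. So $B$ is not a minimal proper superset of $S$, a contradiction. The case $B\subsetneq A$ is symmetric. Since $A\neq B$, the two sets are incomparable.

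\textbf{Step 2: $A$ and $B$ share at least two vertices.} By almost-laminarity of $\mathcal{L}$, incomparable sets satisfy $|A\cap B|\le 1$. On the other hand, $S\subseteq A\cap B$. By Theorem~\ref{thm:basis}(2), $|\mathrm{NZ}(S)|\ge 3$, so $|S|\ge 3$ and hence $|A\cap B|\ge 3$. This contradicts $|A\cap B|\le 1$. Note that this step only needs $|S|\ge 2$; in particular, no uncrossing is required here.

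\textbf{Step 3: the Hasse diagram is a forest.} Each set has at most one minimal proper superset, so each node of the Hasse diagram has at most one parent (upward edge). Every edge of the diagram joins a set to its parent. Hence any cycle in the underlying undirected graph would have to traverse edges consistently toward parents. Following parents strictly increases set size, so no cycle can close. Each component therefore has a unique maximal element acting as the root, and the diagram is a forest.

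There is no real obstacle in this argument. The only point needing care is Step 2: it is the lower bound $|S|\ge 3$ from Theorem~\ref{thm:basis} (every set contains a cycle) that excludes the $|A\cap B|\le 1$ alternative of almost-laminarity.
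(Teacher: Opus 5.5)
Your proof is correct and follows the paper's argument. Two distinct minimal proper supersets of $S$ must be incomparable, so almost-laminarity forces $|A\cap B|\le 1$, which contradicts $S\subseteq A\cap B$ together with $|S|\ge 3$ from Theorem~\ref{thm:basis}(2). Your Step~3 spells out the forest consequence more fully than the paper, which only cites the unique-parent property; for completeness, the claim that a cycle must run consistently toward parents follows because each node has at most one upward edge, so any cycle would be a directed upward cycle, which is impossible since set size strictly increases.
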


\begin{proof}
The only obstruction to a forest Hasse diagram for an almost-laminar family is the
possibility that a singleton set has two incomparable parents. Here, every set contains at
least three support vertices, so this obstruction cannot occur. We give a formal proof. 
  For every $S\in \mathcal{L}$, $|S| \ge 3$ by Property (2) of
  Theorem~\ref{thm:basis}. For every pair of incomparable
  $A, B \in \mathcal{L}$, almost-laminarity gives $|A \cap B| \le 1$
  (since neither $A \subseteq B$ nor $B \subseteq A$).

Now suppose for contradiction, let $S \in \mathcal{L}$ have two distinct minimal proper supersets $P_1, P_2 \in \mathcal{L}$, i.e., $S \subsetneq P_1$, $S \subsetneq P_2$, and there is no set $B\in \mathcal{L}$ such that $S\subsetneq B \subsetneq P_1$ (resp.\ $P_2$). Since both are minimal proper supersets of $S$, neither $P_1 \subseteq P_2$ nor $P_2 \subseteq P_1$ (otherwise one would not be minimal). Thus, $P_1$ and $P_2$ are incomparable sets in $\mathcal{L}$ and hence, $|P_1\cap P_2|\le 1$. But $S \subseteq P_1 \cap P_2$, so $|P_1 \cap P_2| \ge |S| \ge 3$, a contradiction. 
\end{proof}

We note that the preceding lemma relies on the fact that $\mathcal{L}$
does not contain singleton sets. If singleton sets are allowed then
almost-laminarity does not suffice to obtain the forest structure.

Consider the Hasse diagram of the poset $(\mathcal{L}, \subseteq)$.
Since the Hasse diagram is a forest, each connected component is a
rooted tree (rooted at its unique maximal element, with edges directed
from parent to child in the containment order). A set may have
multiple children (sets for which it is the parent), but at most one
parent.  We define each connected component of this forest as a
\emph{block} and let $\numberofblocks$ denote the number of
blocks. Let $\mathcal{B}_1, \ldots, \mathcal{B}_\numberofblocks$
denote the blocks.  For each block $\mathcal{B}_i$, let
$k_i = |\mathcal{B}_i|$ be the number of sets it contains, and let
$\hat{V}_i \in \mathcal{B}_i$ be its unique maximal set (i.e., the
root of the tree). 

\begin{remark}[Inter-block structure]\label{rem:interblock}
Consider distinct $i, j\in [\numberofblocks]$. Then, for distinct blocks $\mathcal{B}_i, \mathcal{B}_j$, their roots $\hat{V}_i, \hat{V}_j$ are incomparable in $\mathcal{L}$, so $|\hat{V}_i \cap \hat{V}_j| \le 1$. More generally, every two sets from distinct blocks are incomparable: if $A \in \mathcal{B}_i$ and $B \in \mathcal{B}_j$ are comparable, say $A \subseteq B$, then $A \subseteq B \subseteq \hat{V}_j$, meaning $A$ would be comparable with $\hat{V}_j$ and hence in the same component as $\hat{V}_j$, contradicting $A \in \mathcal{B}_i$. Therefore, $|A \cap B| \le 1$ for every $A \in \mathcal{B}_i$, $B \in \mathcal{B}_j$. 
Since $|\hat{V}_i \cap \hat{V}_j| \le 1$, distinct blocks share no edges: $E[\hat{V}_i] \cap E[\hat{V}_j] = \emptyset$.
\end{remark}

\subsection{Counting Lemmas for Contradiction}
The goal is to derive a contradiction via the forest structure of the
Hasse diagram and a counting argument. We set up some basic notation.
From the basis structure, we have that $|\mathcal{L}|+|Z|=n$. We let
$z$ denote $|Z|$ and use $p = n-z$ to denote the number of vertices
$u$ with strictly positive $x_u$ value. We have
$p = |\cL| = \sum_{i=1}^t k_i$.

First consider the simple case when each block consists of a single
set. Thus $\cB_i = \{\hat{V}_i\}$. Suppose we further assume that
$\hat{V}_1,\ldots,\hat{V}_t$ are pairwise disjoint. Then a
contradiction is quite easy as follows. We have $p = |\cL| = t$ since
each block has a single set.  However, $|NZ(\hat{V}_i)| \ge 3$ for each $i$ by
Theorem~\ref{thm:basis} which implies that
$\sum_{i=1}^t |NZ(\hat{V}_i)| \ge 3t$. If the maximal sets are
pairwise disjoint then $p \ge \sum_{i=1}^t |NZ(\hat{V}_i)|$ which is a
contradiction since $t \ge 1$. 

However, the assumption
$\hat{V}_1,\ldots,\hat{V}_t$ are pairwise disjoint is too strong. We
only have almost-laminarity which implies that
$|\hat{V}_i \cap \hat{V_j}| \le 1$ for all $i \neq j$. Thus a vertex
$v$ may belong to multiple maximal sets and we can no longer obtain a
contradiction easily. We need to account for the overlap of the sets even in this restricted case.
This motivates the following definition. 

\begin{definition}[Global sharing loss]
For each vertex $v$ with $x_v > 0$, let 
\[m_v := |\{i \in [\numberofblocks] : v \in NZ(\hat{V}_i)\}|\] 
denote the number of blocks containing $v$. The \emph{global sharing loss} is
\[
\sigma_{\mathrm{global}} := \sum_{\substack{v \in V \\ x_v > 0}} (m_v - 1).
\]
\end{definition}

We observe that support vertices $v$ with $m_v = 1$ contribute zero in
the definition of $\sigma_{\mathrm{global}}$, so only support vertices
shared between multiple blocks contribute positive amount to
$\sigma_{\mathrm{global}}$.  
Moreover, every vertex $v\not\in Z$ lies in at least one root $\hat{V}i$ and hence, have $m_v\ge 1$; otherwise, all rows $row(S)$ for $S \in L$, and all unit rows $1_u$ for $u \in Z$, would have zero in coordinate $v$, contradicting that they form a basis of $\R^V$. 
The following proposition is easy from the definitions.

\begin{proposition}\label{prop:intermed}
We have that $\sum_{i=1}^\numberofblocks |NZ(\hat{V}_i)| = \nnz +
\sigma_{\mathrm{global}}$.
\end{proposition}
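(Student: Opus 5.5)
The plan is a double counting of the incidence pairs $(v,i)$ with $x_v>0$ and $v\in \hat{V}_i$. Counting these pairs block by block gives $\sum_{i=1}^{\numberofblocks} |NZ(\hat{V}_i)|$, since $NZ(\hat{V}_i)$ is exactly the set of support vertices lying in the root of block $\cB_i$. Counting them vertex by vertex instead gives $\sum_{v:\,x_v>0} m_v$, because by definition $m_v$ is the number of indices $i$ with $v\in NZ(\hat{V}_i)$. The two counts agree, so
\[
\sum_{i=1}^{\numberofblocks} |NZ(\hat{V}_i)| \;=\; \sum_{v:\,x_v>0} m_v \;=\; \sum_{v:\,x_v>0} 1 \;+\; \sum_{v:\,x_v>0}(m_v-1) \;=\; \nnz + \sigma_{\mathrm{global}}.
\]
The last step uses two facts: the number of support vertices is $\nnz = n - z$, since $Z$ is exactly the set of zero coordinates, and the second sum is $\sigma_{\mathrm{global}}$ by definition.

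The identity itself needs no lower bound on $m_v$; the summation is formal. The observation made just before the proposition, that $m_v\ge 1$ for every $v\notin Z$, only matters later. There it ensures that every term $m_v-1$ is nonnegative, so that $\sigma_{\mathrm{global}}\ge 0$ genuinely measures overlap between roots.

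The only point to check is that the indices $i$ range over blocks with one root each, so no root is counted twice. This holds because the Hasse diagram is a forest by Lemma~\ref{lem:poset}, so each block $\cB_i$ has a unique maximal set $\hat{V}_i$. There is no real obstacle; the statement is a direct rearrangement of the definitions.
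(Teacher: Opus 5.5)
Your proof is correct and is the same double-counting argument as the paper's: exchange the order of summation over blocks and support vertices to get $\sum_{v\notin Z} m_v$, then split off $|V-Z|=\nnz$ to leave $\sigma_{\mathrm{global}}$. Your closing check that no root is counted twice is harmless but unnecessary, because $m_v$ counts indices $i\in[\numberofblocks]$, so the identity is purely formal.
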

\begin{proof}
We observe that 
\begin{align*}
\sum_{i=1}^\numberofblocks |NZ(\hat{V}_i)| 
&= \sum_{i=1}^\numberofblocks \sum_{u\in V-Z} 1_{u\in \hat{V}_i} \\
&= \sum_{u\in V-Z}\sum_{i=1}^\numberofblocks 1_{u\in \hat{V}_i} = \sum_{u \in V-Z}m_v = \sum_{u\in V-Z}(m_v -1) + |V-Z| \\
& = \nnz + \sigma_{\mathrm{global}}. 
\end{align*}
\end{proof}

\paragraph{Two key lemmas.} The rest of the analysis is based on two
key technical lemmas. We state the lemmas and use them to complete the proof of Theorem \ref{thm:extreme-point} here. We will prove these two lemmas subsequently. The first bounds the sharing loss for a
collection of tight sets.

\begin{lemma}[Union Feasibility Inequality]\label{lem:feasibility}
Let $\mathcal{A} = \{A_1, \ldots, A_r\} \subseteq \mathcal{T}$ be a collection of tight sets such that $|A_j \cap A_\ell| \le 1$ for all distinct $j, \ell\in [r]$. 
Let $U=\cup_{j=1}^r A_j$ and for each $v\in U$, let 
\[
m_v^{\mathcal A} := |\{j\in [r]: v\in A_j\}|.
\]
Define the \emph{sharing loss} of $\mathcal{A}$ to be $\sigma(\mathcal{A}) := \sum_{v \in U} (m_v^{\mathcal A} - 1)$ (equivalently, $\sigma(\mathcal{A}) = \sum_{j=1}^r |A_j| - |U|$). 
If $r\ge 2$, then \[\sigma(\mathcal{A}) \le 2r - 3.\]
\end{lemma}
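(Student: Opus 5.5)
The plan is to apply LP feasibility to the single set $U$. Its violation function $f_x(U)$ can be compared directly with the sum of the $f_x(A_j)$, which all vanish because the $A_j$ are tight. Throughout we use the standing assumption $x_u<1/2$ for all $u\in V$, and we write $f_x(S)=w(E[S])-|S|+x(S)+1$ with $w_{uv}=1-x_u-x_v>0$, as in the proof of Lemma~\ref{lem:supermod}.

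\textbf{Step 1: the edge sets are disjoint.} Since $|A_j\cap A_\ell|\le 1$ for $j\neq\ell$, no edge lies in both $E[A_j]$ and $E[A_\ell]$. Hence the sets $E[A_1],\dots,E[A_r]$ are pairwise disjoint subsets of $E[U]$. Because all weights are positive,
\[
w(E[U])\ \ge\ \sum_{j=1}^r w(E[A_j]).
\]

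\textbf{Step 2: use tightness and feasibility of $U$.} Each $A_j$ is tight, so $f_x(A_j)=0$, which gives $w(E[A_j])=|A_j|-x(A_j)-1$. The set $U$ has edges, so feasibility (Proposition~\ref{prop:sd-ineqs-equiv}) gives $f_x(U)\le 0$. Combining this with Step 1,
\[
0\ \ge\ f_x(U)\ \ge\ \sum_{j}|A_j| - \sum_j x(A_j) - r - |U| + x(U) + 1 .
\]
Now use $\sum_j|A_j|-|U|=\sigma(\mathcal A)$ and $\sum_j x(A_j)-x(U)=\sum_{v\in U}(m^{\mathcal A}_v-1)x_v$. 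The inequality becomes
\[
\sigma(\mathcal A)-\sum_{v\in U}(m^{\mathcal A}_v-1)x_v\ \le\ r-1 .
\]

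\textbf{Step 3: bound the correction term and round.} Since $x_v<1/2$, we have $\sum_{v}(m^{\mathcal A}_v-1)x_v\le \sigma(\mathcal A)/2$. This inequality is strict whenever $\sigma(\mathcal A)>0$, because some $v$ then has $m_v^{\mathcal A}\ge 2$.

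If $\sigma(\mathcal A)=0$, the claim $\sigma(\mathcal A)\le 2r-3$ holds trivially because $r\ge 2$.

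Otherwise, $\sigma(\mathcal A)/2<r-1$, so $\sigma(\mathcal A)<2r-2$. Since $\sigma(\mathcal A)$ is an integer, this gives $\sigma(\mathcal A)\le 2r-3$.

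The only delicate point is this strictness: it is what converts $\le 2r-2$ into $\le 2r-3$. The strictness comes exactly from the strict inequality $x_v<1/2$ at a shared vertex.
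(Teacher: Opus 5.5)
Your proof is correct and follows essentially the same route as the paper. The paper expands $g_x(U)-b(U)$ by partitioning $E[U]$ into the sets $E[A_j]$ and the crossing edges. After dropping the nonnegative crossing-edge term $\sum(1-x_u-x_v)$, it obtains $\sum_{v\in U}(m_v^{\mathcal A}-1)x_v\ge\sigma(\mathcal A)-r+1$, which is exactly your Step 2 inequality derived via $f_x$ and $w(E[U])\ge\sum_j w(E[A_j])$; your Step 3 (strictness from $x_v<1/2$ at a shared vertex, then integrality) matches the paper's concluding argument.
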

We will see later in the proof of Theorem \ref{thm:extreme-point} that we are interested in the sharing loss of support vertices only, but the above lemma is phrased in terms of the sharing loss of all vertices for notational ease. 
The second lemma builds on the preceding to show a block surplus property. 

\begin{lemma}[Block Surplus]\label{lem:block-surplus}
For each $i\in [\numberofblocks]$, we have that 
\[
|NZ(\hat{V}_i)| \ge k_i + 2.
\]
\end{lemma}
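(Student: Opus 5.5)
We prove Lemma~\ref{lem:block-surplus} by strong induction on the tree structure of the block, establishing the stronger claim that for every set $S\in\mathcal{L}$, if $k(S)$ denotes the number of sets of $\mathcal{L}$ contained in $S$ (including $S$ itself, i.e.\ the size of the subtree of the Hasse forest rooted at $S$), then $|NZ(S)| \ge k(S)+2$. Applying this to the root $\hat V_i$ gives the lemma, since the subtree of $\hat V_i$ is exactly $\mathcal{B}_i$.

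For the base case, a leaf $S$ has $k(S)=1$, and Property (2) of Theorem~\ref{thm:basis} gives $|NZ(S)|\ge 3 = k(S)+2$. For the inductive step, let $S$ have children $C_1,\dots,C_r$ in the Hasse diagram, with $r\ge 1$. The children are pairwise incomparable, since each has $S$ as its unique parent by Lemma~\ref{lem:poset}, and neither can be contained in the other without breaking the Hasse edge. By almost-laminarity, $|C_j\cap C_\ell|\le 1$. Let $U=\bigcup_j C_j$, and work with the support vertices. We have $|NZ(U)| = \sum_j |NZ(C_j)| - \sigma_{NZ}$, where $\sigma_{NZ}=\sum_{v\in NZ(U)}(m_v-1)\le \sigma(\{C_j\})$.

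\textbf{Case $r=1$.} Property (3) of Theorem~\ref{thm:basis} gives a support vertex in $S\setminus C_1$. Hence $|NZ(S)|\ge |NZ(C_1)|+1 \ge k(C_1)+3 = k(S)+2$.

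\textbf{Case $r\ge 2$.} Lemma~\ref{lem:feasibility} applied to the tight children gives $\sigma\le 2r-3$. Therefore
\[
|NZ(U)| \ge \sum_j (k(C_j)+2) - (2r-3) = (k(S)-1)+3 = k(S)+2,
\]
and since $NZ(S)\supseteq NZ(U)$ we are done. We do not even need an extra vertex outside $U$ here: the slack $-3$ in Lemma~\ref{lem:feasibility} exactly pays for $S$ itself.

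The only subtle point is checking that the hypotheses hold, namely that the children are tight and pairwise intersect in at most one vertex. Both follow from $\mathcal{L}\subseteq\mathcal{T}$ and Lemma~\ref{lem:poset}. The real difficulty lives in Lemma~\ref{lem:feasibility}, which we assume. We should also make sure the sharing count restricted to support vertices is dominated by the full sharing loss, which is immediate since each term $m_v-1\ge 0$.
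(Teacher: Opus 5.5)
Your proposal is correct and follows essentially the same argument as the paper: induction over the Hasse-forest subtree, a leaf base case via Property (2), Property (3) of Theorem~\ref{thm:basis} when $r=1$, and Lemma~\ref{lem:feasibility} (with the support-vertex sharing loss bounded by the full sharing loss) when $r\ge 2$. Your additional checks also hold: the children are pairwise incomparable, and the number of sets of $\mathcal{L}$ contained in $S$ equals the size of its subtree.
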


\begin{corollary}[Lower Bound on Global Sharing Loss]\label{cor:lower}
$\sigma_{\mathrm{global}} \ge 2\numberofblocks$.
\end{corollary}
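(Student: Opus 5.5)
This follows directly from Proposition~\ref{prop:intermed} and Lemma~\ref{lem:block-surplus}, taking Lemma~\ref{lem:block-surplus} as given. We sum the block surplus inequality over all blocks and compare the result with the exact count of Proposition~\ref{prop:intermed}.

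Summing $|NZ(\hat{V}_i)| \ge k_i + 2$ over $i \in [\numberofblocks]$ gives
\[
\sum_{i=1}^{\numberofblocks} |NZ(\hat{V}_i)| \;\ge\; \sum_{i=1}^{\numberofblocks} k_i + 2\numberofblocks \;=\; |\mathcal{L}| + 2\numberofblocks \;=\; \nnz + 2\numberofblocks .
\]
The first equality holds because the blocks partition $\mathcal{L}$. The second holds because $|\mathcal{L}| + |Z| = n$ by Property~(1) of Theorem~\ref{thm:basis}, so $|\mathcal{L}| = n - z = \nnz$.

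On the other hand, Proposition~\ref{prop:intermed} gives $\sum_{i} |NZ(\hat{V}_i)| = \nnz + \sigma_{\mathrm{global}}$. Comparing the two expressions yields $\nnz + \sigma_{\mathrm{global}} \ge \nnz + 2\numberofblocks$, hence $\sigma_{\mathrm{global}} \ge 2\numberofblocks$.

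The corollary itself needs no further work. All the difficulty sits in Lemma~\ref{lem:block-surplus}, which we expect to prove by induction over the Hasse tree of each block, using Lemma~\ref{lem:feasibility} to control how much the children of a set can overlap. Lemma~\ref{lem:feasibility} is then applied a second time to the roots $\hat{V}_1,\dots,\hat{V}_{\numberofblocks}$ to bound $\sigma_{\mathrm{global}}$ from above. This is where the final contradiction comes from. When $\numberofblocks \ge 2$, restricting the sharing loss to support vertices gives $\sigma_{\mathrm{global}} \le \sigma(\{\hat V_i\}) \le 2\numberofblocks - 3 < 2\numberofblocks$. When $\numberofblocks = 1$, we have $\sigma_{\mathrm{global}} = 0 < 2$.
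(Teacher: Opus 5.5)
Your proof is correct and is the same as the paper's: sum Lemma~\ref{lem:block-surplus} over the blocks, use $\sum_i k_i = |\mathcal{L}| = \nnz$, and compare with Proposition~\ref{prop:intermed}. Your closing remarks on how Lemma~\ref{lem:feasibility} then gives the contradiction also match the paper's proof of Theorem~\ref{thm:extreme-point}.
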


\begin{proof}
Combining Proposition \ref{prop:intermed} 
and Lemma~\ref{lem:block-surplus} gives
\[
\nnz + \sigma_{\mathrm{global}} = \sum_{i=1}^\numberofblocks |NZ(\hat{V}_i)| \ge \sum_{i=1}^\numberofblocks (k_i + 2) = \nnz + 2\numberofblocks. \qedhere
\]
\end{proof}

We postpone the proofs of these lemmas and obtain the desired
contradiction first.

\subsubsection{Proof of Theorem \ref{thm:extreme-point}}

\begin{proof}[Proof of Theorem~\ref{thm:extreme-point}]
Assume for contradiction that $x_u < 1/2$ for all $u \in V$.

Since $G$ contains a cycle $C$, Corollary~\ref{cor:three-nz} implies that
$\nnz \ge 3$, and hence $\numberofblocks \ge 1$. 

By Corollary~\ref{cor:lower}, we have that $\sigma_{\mathrm{global}} \ge 2\numberofblocks$.

If $\numberofblocks = 1$, then we have only one block and consequently no shared vertices across blocks, and hence,  $\sigma_{\mathrm{global}} = 0$, a contradiction. Thus, we may assume that $\numberofblocks \ge 2$. The block roots $\hat{V}_1, \ldots, \hat{V}_\numberofblocks$ are incomparable, so $|\hat{V}_i \cap \hat{V}_j| \le 1$ for every distinct $i, j\in [\numberofblocks]$. 
By Lemma~\ref{lem:feasibility} applied to the collection $\mathcal{A} = \{\hat{V}_1, \ldots, \hat{V}_\numberofblocks\}$, the total sharing loss $\sigma(\mathcal{A})$ among the roots satisfies $\sigma(\mathcal{A}) \le 2\numberofblocks - 3$.
Since $\sigma_{\mathrm{global}}$ counts the sharing loss of only support vertices while $\sigma(\mathcal{A})$ counts the sharing loss of all vertices in $\cup_{i\in [\numberofblocks]}\hat{V}_i$, 
we have $\sigma_{\mathrm{global}} \le \sigma(\mathcal{A}) \le 2\numberofblocks - 3$, a contradiction. 

\end{proof}

\subsubsection{Proof of the Union Feasibility Inequality}
\label{sec:ufi}
The intuition for the inequality is the following: When equations for tight sets are summed up, the right-hand side $b(·)$ gains a surplus equal to the number of vertex identifications, namely the sharing loss $\sigma$, minus the number of merged components. Feasibility of the union forces the fractional mass on shared vertices to pay for this surplus. Since each coordinate is below $1/2$, a shared vertex cannot pay one full unit, which bounds the total amount of sharing.
\begin{proof}[Proof of Lemma \ref{lem:feasibility}]
We first show the following inequality which will be useful to prove the lemma. 
\[
\sum_{v \in U} (m_v^{\mathcal A} - 1)\, x_v \;\ge\; \sigma(\mathcal{A}) - r + 1.
\]
  Because $|A_j \cap A_\ell| \le 1$, the sets share no edges, i.e.,
  $E[A_j] \cap E[A_\ell] = \emptyset$ for every distinct
  $j, \ell\in [r]$.  Let $F := E[U]\setminus (\cup_{j=1}^r E[A_j])$
  (essentially, $F$ is the set of crossing edges). The sets
  $E[A_1], \ldots, E[A_r], F$ partition $E[U]$. 

For each $u \in U$, we have $d_{U}(u) = \sum_{j\in [r]: u \in A_j}
d_{A_j}(u) + |F \cap \delta(u)|$. Therefore, 
\begin{align*}
g_x(U) 
&= \sum_{u \in U} d_U(u)\, x_u - x(U) 
= \sum_{j=1}^r \sum_{uv \in E[A_j]} (x_u + x_v) + \sum_{uv \in F} (x_u + x_v) - x(U).
\end{align*}
Using $x(U) = \sum_{j=1}^r x(A_j) - \sum_{v \in U} (m_v^{\mathcal A} - 1) x_v$ and $\sum_{uv\in E[A_j]}(x_u + x_v) = g_x(A_j) + x(A_j)$, we get that 
\begin{equation}\label{eq:gx}
g_x(U) = \sum_{j=1}^r g_x(A_j) + \sum_{v \in U} (m_v^{\mathcal A} - 1)\, x_v + \sum_{uv \in F} (x_u + x_v).
\end{equation}

Furthermore, 
\begin{align*}
|E[U]| &= \sum_{j=1}^r |E[A_j]| + |F| \text{ and}\\
|U| &= \sum_{j=1}^r |A_j| - \sigma(\mathcal{A}).
\end{align*}
Therefore, 
\begin{equation}\label{eq:bV}
b(U) = |E[U]| - |U| + 1 = \sum_{j=1}^r \big(|E[A_j]| - |A_j| + 1\big) + \sigma(\mathcal{A}) - r + 1 + |F|.
\end{equation}

Since each $A_j$ is tight, $g_x(A_j) = b(A_j)$. Subtracting \eqref{eq:bV} from \eqref{eq:gx} gives
\[
g_x(U) - b(U) = \sum_{v \in U} (m_v^{\mathcal A} - 1)\, x_v - (\sigma(\mathcal{A}) - r + 1) - \sum_{uv \in F} (1 - x_u - x_v).
\]
Since $x_u < 1/2$ for all $u\in V$, we have that $\sum_{uv\in
  F}(1-x_u-x_v)\ge 0$ (could be equal to zero if $F = \emptyset$). By LP feasibility, since $U$ has edges (as each $A_j$ does), $g_x(U) - b(U) \ge 0$. Rearranging yields
\[
\sum_{v \in U} (m_v^{\mathcal A} - 1)\, x_v \ge \sigma(\mathcal{A}) - r + 1 + \sum_{uv\in F} (1-x_u-x_v) \ge \sigma(\mathcal{A}) - r + 1.
\]

Suppose $r\ge 2$. If $\sigma(\mathcal A)=0$, the bound is immediate.  Otherwise,
using $x_v<1/2$ for all $v$ gives
\[
\frac{\sigma(\mathcal{A})}{2} > \sum_{v \in U} (m_v^{\mathcal A} - 1)\, x_v \ge \sigma(\mathcal{A}) - r + 1 \implies \frac{\sigma(\mathcal{A})}{2} < r - 1 \implies \sigma(\mathcal{A}) < 2r - 2.
\]
Since $\sigma(\mathcal{A})$ is an integer, we have that $\sigma(\mathcal{A}) \le 2r - 3$.
\end{proof}

\subsubsection{Proof of the Block Surplus Lemma}

\begin{proof}[Proof of Lemma \ref{lem:block-surplus}]
  We prove by induction on the tree structure of the block that for
  every $S \in \mathcal{B}_i$, if
  $\mathcal{T}_S \subseteq \mathcal{B}_i$ is the subtree of sets
  contained in $S$, and $k_S = |\mathcal{T}_S|$, then
  $|\mathrm{NZ}(S)| \ge k_S + 2$. Applying this to the root
  $S = \hat{V}_i$ yields the lemma.

\textbf{Base case ($k_S = 1$):} $S$ is a leaf in the containment tree. By Theorem~\ref{thm:basis}(2), 
$|\mathrm{NZ}(S)| \ge 3 = k_S + 2$.

\textbf{Inductive step:} Suppose $S$ has children $C_1, \ldots, C_r$ in $\mathcal{L}$. By induction, $|\mathrm{NZ}(C_j)| \ge k_{C_j} + 2$ for each child. We note that $k_S = 1 + \sum_{j=1}^r k_{C_j}$: the subtree $\mathcal{T}_S$ consists of $S$ itself together with the subtrees $\mathcal{T}_{C_1}, \ldots, \mathcal{T}_{C_r}$, which are disjoint (since the children are in the same block and the Hasse diagram is a forest, no set in $\mathcal{T}_{C_j}$ belongs to $\mathcal{T}_{C_\ell}$ for $j \ne \ell$).

Because $C_1, \ldots, C_r$ are incomparable in $\mathcal{L}$, we have that $|C_j \cap C_\ell| \le 1$ for every distinct $j, \ell\in [r]$. Let $U := \bigcup_{j=1}^r C_j$ and $\sigma_{\mathrm{NZ}}:=\sum_{j=1}^r |C_j\cap (V-Z)| - |U\cap (V-Z)|$. We note that $\sigma_{\mathrm{NZ}}$ is the sharing loss of support vertices in $U$ among the children. 
The size of the support of $U$ is:
\[
|\mathrm{NZ}(U)| = \sum_{j=1}^r |\mathrm{NZ}(C_j)| - \sigma_{\mathrm{NZ}},
\]
where $\sigma_{\mathrm{NZ}}$ is the sharing loss of support vertices among the children. 

\textbf{Case $r = 1$:} $U = C_1$, so $|\mathrm{NZ}(U)| = |\mathrm{NZ}(C_1)| \ge k_{C_1} + 2$. By Theorem~\ref{thm:basis}(3), since $C_1 \subsetneq S$, there is a vertex $v \in S \setminus C_1$ with $x_v > 0$. Therefore, $|\mathrm{NZ}(S)| \ge |\mathrm{NZ}(U)| + 1 \ge k_{C_1} + 3 = k_S + 2$.

\textbf{Case $r \ge 2$:} By Lemma~\ref{lem:feasibility} applied to the collection $\mathcal{A}:=\{C_1, \ldots, C_r\}$, the total sharing loss among the children is at most $\sigma(\mathcal{A}) \le 2r - 3$. We also have that $\sigma_{\mathrm{NZ}} \le \sigma(\mathcal{A})$ since $\sigma_{\mathrm{NZ}}$ counts the sharing loss of only support vertices in $U$ among the children while $\sigma(\mathcal{A})$ counts the sharing loss of all vertices in $U$ among the children. Hence, we have $\sigma_{\mathrm{NZ}} \le 2r - 3$.
Therefore:
\begin{align*}
|\mathrm{NZ}(S)| 
&\ge |\mathrm{NZ}(U)| 
= \sum_{j=1}^r |\mathrm{NZ}(C_j)| - \sigma_{\mathrm{NZ}}
\ge \sum_{j=1}^r (k_{C_j} + 2) - (2r - 3) \\
&= \sum_{j=1}^r k_{C_j} + 2r - 2r + 3 
= \sum_{j=1}^r k_{C_j} + 3 = k_S + 2.
\end{align*}
\end{proof}
\section{Extreme Point Property of Edge Strong Density Polyhedron}\label{sec:esd-extreme-point}
We prove Theorem \ref{thm:esd-extreme-point} in this section and use it to design a $2$-approximation via iterative rounding in Section \ref{sec:esd-rounding-algorithm}. 
The proof of Theorem \ref{thm:esd-extreme-point} proceeds by contradiction. We assume throughout that $x$ is an extreme point of $P_{\ESD}(G)$ satisfying $x_u < 1/2$ for all $u \in V$, and derive a contradiction. 
We recall that $P_{\ESD}(G)$ is defined by constraints on edge subsets (in contrast to $P_{\SD}(G)$, which is defined by constraints on vertex subsets). We also recall that for an edge-subset $F\subseteq E$, the vertex-subset $V(F)$ denotes the subset of vertices incident to edges in $F$, the subgraph $G[F]$ denotes $(V(F), F)$, and $d_F(u)$ denotes the degree of vertex $u$ in the subgraph $G[F]$. The overall proof is similar to that of Theorem \ref{thm:extreme-point}, but there are some technical differences. We have opted not to compress the proof so that readers can read the two sections independently. 

\paragraph{Notation.}
Throughout, we use $n$ to denote the number of vertices of $G$. 
Define functions $f_x, g_x, b: 2^E\rightarrow \R$, where for all $F\subseteq E$, we have 
\begin{align*}
 f_x(F)&:=\sum_{uv\in F}(1-x_u-x_v)-\sum_{u\in V(F)}(1-x_u)+1,\\
 g_x(F)&:=\sum_{u\in V(F)}(d_F(u)-1)x_u, \text{ and}\\
 b(F)&:=|F|-|V(F)|+1.
\end{align*}
The constraints of $P_{\ESD}(G)$ are of the form $g_x(F)\ge b(F)$ for all non-empty subset $F\subseteq E$. We observe that $f_x(F) = b(F) -g_x(F)$. Thus, the constraint $g_x(F)\ge b(F)$ is equivalent to $f_x(F)\le 0$. 

\begin{proposition}
Let $x\in [0,1]^V$. Then, 
    $x\in P_{\ESD}(G)$ if and only if $f_x(F)\le 0$ for every non-empty subset $F\subseteq E$. 
\end{proposition}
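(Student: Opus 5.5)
The statement is the equivalence between membership in $P_{\ESD}(G)$ and the condition $f_x(F)\le 0$ for every non-empty $F\subseteq E$. It reduces to the algebraic identity $f_x(F)=b(F)-g_x(F)$ for every $F\subseteq E$. Granting the identity, $g_x(F)\ge b(F)$ holds if and only if $f_x(F)\le 0$. Since both conditions range over the same family of constraints (all non-empty $F\subseteq E$) together with $x\in[0,1]^V$, the two descriptions cut out the same set.

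To verify the identity, I expand $f_x(F)$ term by term. The edge sum splits as
\[
\sum_{uv\in F}(1-x_u-x_v)=|F|-\sum_{uv\in F}(x_u+x_v)=|F|-\sum_{u\in V(F)} d_F(u)\,x_u,
\]
using the handshake-style regrouping: each vertex $u\in V(F)$ appears in exactly $d_F(u)$ edges of $F$, and vertices outside $V(F)$ appear in none. The vertex sum gives
\[
-\sum_{u\in V(F)}(1-x_u)=-|V(F)|+\sum_{u\in V(F)}x_u .
\]
Adding these together with the constant $+1$ yields
\[
f_x(F)=|F|-|V(F)|+1-\sum_{u\in V(F)}(d_F(u)-1)x_u=b(F)-g_x(F).
\]

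There is no genuine obstacle here. The only point needing care is the regrouping of the edge sum by endpoints, which is valid precisely because $d_F(u)$ is the degree in $G[F]=(V(F),F)$ and the sum is restricted to $V(F)$. This is the same computation as the observation $f_x(S)=b(S)-g_x(S)$ preceding Proposition~\ref{prop:sd-ineqs-equiv}, with $E[S]$ and $S$ replaced by $F$ and $V(F)$.
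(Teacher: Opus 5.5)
Your proof is correct and matches the paper's reasoning. The paper only states the identity $f_x(F)=b(F)-g_x(F)$ and concludes that $g_x(F)\ge b(F)$ is equivalent to $f_x(F)\le 0$; your term-by-term expansion, regrouping the edge sum by endpoints via $d_F(u)$, just writes that identity out.
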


We say that a nonempty edge subset $F\subseteq E$ is \emph{tight} if $g_x(F)=b(F)$.  Let $\mathcal T:=\{\emptyset\ne F\subseteq E: g_x(F)=b(F)\}$ denote the family of tight edge sets.  For an edge subset $F\subseteq E$, define the row vector
$\row(F)\in \mathbb R^V$ by
\[
        \row(F)_u:=
        \begin{cases}
        d_F(u)-1, & u\in V(F),\\
        0, & u\notin V(F).
        \end{cases}
\]
Let $Z:=\{u\in V:x_u=0\}$ denote the set of zero-coordinate vertices. For $F\subseteq E$, let $\NZ(F):=\{u\in V(F):x_u>0\}$ denote the \emph{support} of $x$ in $V(F)$ (equivalently, the set of non-zero vertices in $V(F)$). 

Via standard polyhedral theory, an extreme point $x$ of $P_{\ESD}(G)$ is the unique solution to a set of $|V|$ linearly independent tight inequalities from the set of constraints. We call such a set of tight inequalities a basis for $x$. There can be multiple bases that define $x$, and later we will show the existence of a structured basis to derive the desired contradiction. We will work with bases that include all the tight constraints corresponding to $Z$ (namely, $x_u=0$ for $u\in Z$). 

\begin{lemma}[Large Support in Cyclic Subgraphs]\label{lem:esd-three-nz}
For every $F \subseteq E$ such that $G[F]$ contains a cycle, we have that $|\NZ(F)| \ge 3$.
\end{lemma}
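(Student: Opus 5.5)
The plan mirrors Lemma~\ref{lem:cycle-constraints-implied} and Corollary~\ref{cor:three-nz}, now using the edge-subset constraints of $P_{\ESD}(G)$. Throughout, $x\in P_{\ESD}(G)$ with $x_u<1/2$ for all $u\in V$.

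\textbf{Step 1: a cycle covering inequality.} Since $G[F]$ contains a cycle, pick a cycle $C$ with all edges in $F$, and let $F_C\subseteq F$ be its edge set. Apply the constraint of $P_{\ESD}(G)$ to the non-empty edge subset $F_C$. Every vertex of $V(F_C)=V(C)$ has $d_{F_C}(u)=2$, and $|F_C|=|V(F_C)|$, so $b(F_C)=1$. The constraint $g_x(F_C)\ge b(F_C)$ therefore reads
\[
\sum_{u\in V(C)} x_u \ge 1 .
\]
Unlike the proof of Lemma~\ref{lem:cycle-constraints-implied}, we need no argument about chords. The constraint uses the edge-induced subgraph $(V(C),F_C)$, which is exactly the cycle.

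\textbf{Step 2: counting the support.} The vertices of $V(C)$ with $x_u=0$ contribute nothing to this sum, so $\sum_{u\in \NZ(F_C)}x_u\ge 1$. Each term is $<1/2$, so at least three terms are needed. Hence $|\NZ(F_C)|\ge 3$.

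\textbf{Step 3: monotonicity.} Since $F_C\subseteq F$, we have $V(F_C)\subseteq V(F)$, and so $\NZ(F_C)\subseteq \NZ(F)$. This gives $|\NZ(F)|\ge 3$.

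There is no real obstacle. The only point needing care is noticing that the standing assumption $x_u<1/2$ is what turns the bound $\sum x_u\ge 1$ into a bound of at least three nonzero vertices. If all terms were allowed to reach $1/2$, two nonzero vertices would suffice.
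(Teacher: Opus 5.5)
Your proof is correct and matches the paper's: apply the $P_{\ESD}(G)$ constraint to the edge set of a cycle contained in $F$ to get $\sum_{u\in V(C)}x_u\ge 1$, use $x_u<1/2$ to force at least three positive coordinates, and note that these vertices lie in $V(F)$. Your remark that no chord argument is needed here, unlike in Lemma~\ref{lem:cycle-constraints-implied}, is also accurate, since the edge-subset constraint only involves the cycle's own edges.
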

\begin{proof}
    Let $C\subseteq F$ be the edge set of a simple cycle.  For this set, every vertex
of $V(C)$ has degree two in $C$, and hence the constraint corresponding to $C$ is
\[
        \sum_{u\in V(C)} x_u \ge 1.
\]
Since every coordinate is strictly smaller than $1/2$, at least three vertices
of $C$ have positive $x$-value.  Such vertices also belong to $V(F)$.
\end{proof}

\subsection{Supermodularity and Uncrossing}
We start with the supermodularity property of $f_x$ and a consequence of the tightness of the supermodularity inequality. The results of this section do not need $x_u < 1/2$ for all $u\in V$, unlike in the setting of $P_{SD}(G)$. They do need $x_u<1$ for all $u\in V$. 
\begin{lemma}[Supermodularity]\label{lem:esd-supermod}
Let $x\in [0, 1)^V$. Then, 
\begin{enumerate}
    \item $f_x$ is a supermodular function, and 
    \item for non-empty $A, B\subseteq E$, we have that $f_x(A\cap B) + f_x(A\cup B) =f_x(A) + f_x(B)$ if and only if $V(A) \cap V(B) =V(A\cap B)$. 
\end{enumerate}
\end{lemma}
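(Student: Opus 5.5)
Rewrite $f_x(F)=w(F)-|V(F)|+x(V(F))+1$ with $w_{uv}:=1-x_u-x_v$, and compare the four quantities $f_x(A)+f_x(B)$ and $f_x(A\cap B)+f_x(A\cup B)$ term by term. The edge-weight part $w(\cdot)$ is modular on edge sets: $w(A)+w(B)=w(A\cap B)+w(A\cup B)$ exactly, with no sign condition on $w$ needed. So everything hinges on the vertex part $h(F):=-\sum_{u\in V(F)}(1-x_u)=-\sum_{u\in V(F)}c_u$, where $c_u:=1-x_u>0$ because $x\in[0,1)^V$. The constant $+1$ cancels on both sides. When $A\cap B=\emptyset$ we use $V(\emptyset)=\emptyset$ and evaluate $f_x(\emptyset)=1$ by the formula. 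This does not disturb the identity, since the constant appears once on each side.

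Next, compute the vertex sets. We have $V(A\cup B)=V(A)\cup V(B)$, and $V(A\cap B)\subseteq V(A)\cap V(B)$, with possibly strict containment: a vertex can have an $A$-edge and a $B$-edge but no edge in $A\cap B$. Writing $c(X):=\sum_{u\in X}c_u$, inclusion--exclusion gives
\[
c(V(A))+c(V(B))=c(V(A)\cup V(B))+c(V(A)\cap V(B)).
\]
Hence
\[
h(A)+h(B)-h(A\cap B)-h(A\cup B)=-\bigl(c(V(A)\cap V(B))-c(V(A\cap B))\bigr)=-c\bigl((V(A)\cap V(B))\setminus V(A\cap B)\bigr)\le 0.
\]
Adding the modular edge part, we get
\[
f_x(A)+f_x(B)=f_x(A\cap B)+f_x(A\cup B)-c\bigl((V(A)\cap V(B))\setminus V(A\cap B)\bigr),
\]
which proves supermodularity (part 1).

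For part 2, the defect equals $c$ of the set $(V(A)\cap V(B))\setminus V(A\cap B)$. Since each $c_u>0$, this defect is zero exactly when that set is empty. Combined with the containment $V(A\cap B)\subseteq V(A)\cap V(B)$, that is precisely the condition $V(A)\cap V(B)=V(A\cap B)$.

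There is no real obstacle here. The only points needing care are the boundary case $A\cap B=\emptyset$, handled via the convention $V(\emptyset)=\emptyset$, and the observation that strict positivity of $1-x_u$ (from $x_u<1$) is what makes the equality characterization an "if and only if".
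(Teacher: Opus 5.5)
Your proof is correct and is the same argument as the paper's: both show that $f_x(A\cap B)+f_x(A\cup B)-f_x(A)-f_x(B)=\sum_{u\in (V(A)\cap V(B))\setminus V(A\cap B)}(1-x_u)$, then use $1-x_u>0$ to get supermodularity and the equality characterization. You spell out the derivation of this identity (modularity of the edge term, $V(A\cup B)=V(A)\cup V(B)$, and inclusion--exclusion on the vertex term), which the paper states in one line.
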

\begin{proof}
Let $A, B\subseteq E$. Then, 
\[
 f_x(A\cap B)+f_x(A\cup B)-f_x(A)-f_x(B)
 =\sum_{u\in (V(A)\cap V(B))\setminus V(A\cap B)}(1-x_u)\ge 0.
\]
The last inequality is because $x_u\le 1$ for every $u\in V$. This proves supermodularity of $f_x$ and moreover, $f_x(A\cap B) + f_x(A\cup B) =f_x(A) + f_x(B)$ if and only if $V(A) \cap V(B) =V(A\cap B)$ (since $x_u<1$ for all $u\in V$). 
\end{proof}

We observe that a non-empty subset $F\subseteq E$ is tight iff $f_x(F) = 0$. We have the following uncrossing lemma for tight sets. 

\begin{lemma}[Uncrossing tight edge sets] \label{lem:esd-uncrossing}
Let $A,B\in\mathcal T$ with $A\cap B\ne\emptyset$.  Then,
$A\cap B$ and $A\cup B$ are tight and 
\[
        \row(A)+\row(B)=\row(A\cap B)+\row(A\cup B).
\]
\end{lemma}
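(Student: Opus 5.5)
The plan mirrors Lemma~\ref{lem:uncrossing}, but the edge-indexed setting removes most of the case analysis. The argument has two steps: first obtain tightness of $A\cap B$ and $A\cup B$ from supermodularity together with the equality case in Lemma~\ref{lem:esd-supermod}, and then verify the row identity coordinatewise.

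\textbf{Step 1: tightness.} We assume $x<1/2$ throughout this section, so in particular $x\in[0,1)^V$ and Lemma~\ref{lem:esd-supermod} applies. Since $A\cap B\neq\emptyset$, both $A\cap B$ and $A\cup B$ are non-empty edge sets, so LP feasibility gives $f_x(A\cap B)\le 0$ and $f_x(A\cup B)\le 0$. Supermodularity then yields
\[
0=f_x(A)+f_x(B)\le f_x(A\cap B)+f_x(A\cup B)\le 0 .
\]
Hence equality holds throughout, and both $A\cap B$ and $A\cup B$ are tight. Moreover, the equality characterization in part (2) of Lemma~\ref{lem:esd-supermod} gives $V(A)\cap V(B)=V(A\cap B)$. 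This is the structural fact the row identity needs, and it is the analogue of the ``no edges between $A\setminus B$ and $B\setminus A$'' observation in the vertex setting.

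\textbf{Step 2: row identity.} I would check the identity coordinatewise. For every vertex $u$ and any edge sets, degrees are modular: $d_A(u)+d_B(u)=d_{A\cap B}(u)+d_{A\cup B}(u)$, because each edge at $u$ is counted equally on both sides. The row entries, however, are $d_F(u)-1$ when $u\in V(F)$ and $0$ otherwise. These agree with $d_F(u)-1$ except that they replace the value $-1$ by $0$ when $u\notin V(F)$. So the identity reduces to checking that the indicator terms balance:
\[
[u\in V(A)]+[u\in V(B)]=[u\in V(A\cap B)]+[u\in V(A\cup B)].
\]
Since $V(A\cup B)=V(A)\cup V(B)$ always holds, this equation is equivalent to $[u\in V(A)\cap V(B)]=[u\in V(A\cap B)]$. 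That is exactly the equality established in Step 1.

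I do not expect a real obstacle here. The only care needed is with the bookkeeping at vertices outside some $V(F)$, where the row entry is $0$ rather than $-1$. The degree-modularity identity handles every vertex uniformly once the $-1$ offsets are accounted for via the indicators, so no separate case analysis by vertex location is required.
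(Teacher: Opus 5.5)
Your proposal is correct and matches the paper's proof. The paper also gets tightness of $A\cap B$ and $A\cup B$ from feasibility plus supermodularity, then uses the equality case of Lemma~\ref{lem:esd-supermod} to obtain $V(A)\cap V(B)=V(A\cap B)$, and finishes with modularity of the degrees and of the vertex indicators; as you note, only $x\in[0,1)^V$ is needed here.
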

\begin{proof}
    Since $A\cap B$ and $A\cup B$ are nonempty, feasibility
gives $f_x(A\cap B)\le 0$ and $f_x(A\cup B)\le 0$.  Since $A$ and $B$ are tight,
$f_x(A)=f_x(B)=0$.  By supermodularity of $f_x$ (as shown in Lemma \ref{lem:esd-supermod}),
\[
        0=f_x(A)+f_x(B)\le f_x(A\cap B)+f_x(A\cup B)\le 0.
\]
Thus equality holds throughout, and $A\cap B,A\cup B\in\mathcal T$.

It remains to prove the row identity.  By Lemma \ref{lem:esd-supermod}, we have that 
$V(A)\cap V(B)=V(A\cap B)$.  For each vertex $u$, let
$\eta_F(u):=1$ if $u\in V(F)$ and $\eta_F(u):=0$ otherwise.  Since edge degrees
are modular with respect to union and intersection,
\[
        d_A(u)+d_B(u)=d_{A\cap B}(u)+d_{A\cup B}(u) \qquad \forall u\in V.
\]
Also $\eta_A(u)+\eta_B(u)=\eta_{A\cap B}(u)+\eta_{A\cup B}(u)$ for all $u\in V$ since $V(A)\cap V(B)=V(A\cap B)$. Hence, $\row(A)_u + \row(B)_u = d_A(u) + d_B(u) - \eta_A(u) - \eta_B(u) = d_{A\cap B}(u)+d_{A\cup B}(u) - \eta_{A\cap B}(u)-\eta_{A\cup B}(u) = \row(A\cap B)_u+\row(A\cup B)_u$ for all $u\in V$. 

\end{proof}

\subsection{Laminar Basis Structure}
The uncrossing lemma (Lemma \ref{lem:esd-uncrossing}) allows us to show the existence of a laminar basis.

\begin{definition}[Edge-laminar family]
A family $\mathcal L\subseteq 2^E$ is \emph{laminar} if for every
$A,B\in\mathcal L$, one of the following holds: $A\subseteq B$ or $B\subseteq A$
or $A\cap B=\emptyset$.
\end{definition}

We emphasize that for sets $A, B\in \mathcal{L}$ for a laminar family $\mathcal{L}\subseteq 2^E$, we need not necessarily have that $V(A)\subseteq V(B)$ or $V(B)\subseteq V(A)$ or $V(A)\cap V(B)=\emptyset$. 

We will work with row vectors coming from the constraints of the polyhedron. Since there are $n$ variables, these are $n$-dimensional vectors. For a set of (row) vectors $A$, we let span$(A)$ be the set of all (row) vectors spanned by the vectors in $A$. For a vertex $u\in V$, we let $1_u$ denote the unit row vector corresponding to $u$.
We let $\cZ = \{1_u \mid u \in Z\}$ denote the collection of the unit row vectors corresponding to the vertices in $Z$. 

First, we show some simple properties of non-singleton sets in a basis. 
\begin{lemma}\label{lem:esd-simple-properties}
    Let $\mathcal{L}$ be a basis for $x$ and let row$(F)\in \mathcal{L}$ for some non-empty subset $F\subseteq E$. Then, $G[F]=(V(F), F)$ is connected and has a cycle. 
\end{lemma}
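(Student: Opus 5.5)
I will mirror the proof of Lemma \ref{lem:basis-set-properties}, adapted to edge subsets. The argument has three steps, each deriving a contradiction with either feasibility, tightness, or the linear independence of the basis. Note that $G[F]=(V(F),F)$ has no isolated vertices by definition, so the first step of the vertex-version is unnecessary here.

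\emph{Step 1 (connectivity).} Suppose $G[F]$ is disconnected, and let $F_1$ be the edge set of one component and $F_2:=F\setminus F_1$. Both are nonempty and vertex-disjoint, so $V(F)=V(F_1)\sqcup V(F_2)$ and $d_F(u)=d_{F_i}(u)$ for $u\in V(F_i)$. Hence $g_x(F)=g_x(F_1)+g_x(F_2)$, while $b(F)=b(F_1)+b(F_2)-1$. Feasibility of $F_1,F_2$ then gives $g_x(F)\ge b(F_1)+b(F_2)=b(F)+1>b(F)$, which contradicts the tightness of $F$.

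\emph{Step 2 (a cycle exists).} Suppose $G[F]$ is connected and acyclic, i.e.\ a tree. Then $|F|=|V(F)|-1$, so $b(F)=0$ and tightness gives $g_x(F)=\sum_{u\in V(F)}(d_F(u)-1)x_u=0$. Every term is nonnegative, since $d_F(u)\ge 1$ on $V(F)$ and $x\ge 0$. Therefore $x_u=0$ whenever $d_F(u)\ge 2$. The coordinates of $\row(F)$ are $d_F(u)-1$, which is $0$ at leaves and nonzero only at vertices with $d_F(u)\ge 2$, all of which lie in $Z$. Hence $\row(F)\in\subspan(\cZ)$. Since we work with bases that contain all unit rows $1_u$ for $u\in Z$, this contradicts the linear independence of the basis.

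\emph{Main subtlety.} The only delicate point is the degenerate tree in which $\row(F)=0$, for instance $|F|=1$ or any star whose center has $x=0$. This case is still excluded, because the zero vector cannot belong to a linearly independent set. I do not expect any step to be a real obstacle. Unlike the SD setting, none of this requires $x_u<1/2$; it uses only feasibility, tightness, and the basis convention.
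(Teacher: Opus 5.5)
Your proof is correct and takes the paper's route. The only cosmetic difference is that the paper splits $G[F]$ into all $k\ge 2$ components at once, getting $g_x(F)\ge b(F)+k-1$, where you split into one component versus the rest; the tree case is handled exactly as you do, via $\row(F)\in\subspan(\cZ)$ contradicting linear independence.
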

\begin{proof}
Let the edge-sets of the connected components of $G[F]$ be $F_1,\ldots,F_k$ with $k\ge 2$. These components are vertex-disjoint and
\[
        g_x(F)=\sum_{i=1}^k g_x(F_i)
        \ge \sum_{i=1}^k b(F_i)
        = b(F)+k-1>b(F),
\]
contradicting tightness. Thus, $G[F]$ is connected.  If $G[F]$ is a tree, then
$b(F)=0$, and tightness gives $g_x(F)=0$.  Since all coefficients
$d_F(u)-1$ are nonnegative, every vertex with $d_F(u)\ge 2$ has $x_u=0$.
Consequently, $\row(F)$ lies in $\operatorname{span}\{1_u:u\in Z\}$, contradicting the linear independence of the vectors in the basis.
\end{proof}

\begin{theorem}[Laminar tight basis]\label{thm:esd-basis}
There exists a laminar family $\mathcal L\subseteq\mathcal T$ such that:
\begin{enumerate}
    \item The vectors $\{\row(F):F\in\mathcal L\}\cup\{1_u:u\in Z\}$ are
    linearly independent and $|\mathcal L|+|Z|=|V|$.
    \item For every $F\in\mathcal L$, the graph $G[F]$ is 2-connected and
    $|\NZ(F)|\ge 3$.
    \item For every $A,B\in\mathcal L$ with $A\subsetneq B$, there exists
    $v\in V(B)\setminus V(A)$ with $x_v>0$.
\end{enumerate}
\end{theorem}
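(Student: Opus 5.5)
We follow the three-stage template used for Theorem~\ref{thm:basis}, now with edge subsets; the structure is simpler because Lemma~\ref{lem:esd-uncrossing} uncrosses every pair of tight sets that share an edge.

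\textbf{Stage 1: laminar uncrossing.} Fix a maximum-cardinality $\cL\subseteq\cT$ that is laminar (in $2^E$) and has linearly independent rows. We show $\subspan(\cL)=\subspan(\cT)$ by the standard argument. Suppose not, and pick $A\in\cT$ with $\row(A)\notin\cW:=\subspan(\cL)$ maximizing $|A|$. Since $\cL\cup\{A\}$ is not laminar, some $B\in\cL$ crosses $A$; take such a $B$ inclusion-wise minimal. Lemma~\ref{lem:esd-uncrossing} gives $A\cap B, A\cup B\in\cT$ and $\row(A)+\row(B)=\row(A\cap B)+\row(A\cup B)$. Maximality of $|A|$ forces $\row(A\cup B)\in\cW$, hence $\row(A\cap B)\notin\cW$. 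The analogue of Claim~\ref{claim:almost-laminar} shows $\cL\cup\{A\cap B\}$ is laminar: for $T\in\cL$, if $B\subseteq T$ or $B\cap T=\emptyset$ we are done, and if $T\subsetneq B$ crosses $A\cap B$ then $T$ crosses $A$, contradicting the minimality of $B$. This contradicts the maximality of $\cL$. We then take a maximal subfamily $\cL_0$ whose rows, together with $\cZ$, are independent; its span is $\R^V$, which gives Property~(1).

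\textbf{Stage 2: 2-connectivity.} Choose an inclusion-minimal $F\in\cL_0$ that is not 2-connected. By Lemma~\ref{lem:esd-simple-properties}, $G[F]$ is connected and contains a cycle, so it has a cut vertex $v$. Split $F=F_1\sqcup F_2$ along $v$, with each $F_i$ a nonempty union of the edge sets of the $v$-branches. Then $g_x(F)=g_x(F_1)+g_x(F_2)+x_v$ and $b(F)=b(F_1)+b(F_2)$. Tightness of $F$ together with feasibility of $F_1,F_2$ forces $x_v=0$ and $F_1,F_2\in\cT$. Moreover $\row(F)=\row(F_1)+\row(F_2)+1_v$ with $1_v\in\cZ$, so one of the $F_i$ can replace $F$ while keeping independence and the span.

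Laminarity is preserved. Supersets of $F$ and sets disjoint from $F$ are harmless. A proper subset $T\subsetneq F$ in the family is 2-connected by the minimal choice of $F$, so its edges lie in a single $v$-block side. Hence $T\subseteq F_1$ or $T\cap F_1=\emptyset$; this is easier than in the vertex case.

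Each replacement lowers the potential $\sum_{F\in\cL}|F|$, so the process terminates. The bound $|\NZ(F)|\ge 3$ then follows from Lemma~\ref{lem:esd-three-nz}.

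\textbf{Stage 3: Property (3).} This is the step needing the most care, because $V(A)$ and $V(B)$ interact differently than in the vertex setting. Suppose $A\subsetneq B$ and $x_v=0$ on $V(B)\setminus V(A)$. Let $D:=B\setminus A\ne\emptyset$. Computing $g_x(B)-g_x(A)$, only vertices of $V(A)$ contribute, and the difference is $\sum_{u\in V(A)}d_D(u)x_u$. This is strictly less than $\tfrac12\,\ell$, where $\ell$ is the number of (edge, endpoint in $V(A)$) incidences in $D$; the inequality is strict since $\ell\ge 1$. On the other side, $b(B)-b(A)=|D|-|V(B)\setminus V(A)|$.

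Let $W:=V(B)\setminus V(A)$. In the 2-connected graph $G[B]$, every vertex of $W$ has $d_B\ge 2$, and all of its edges lie in $D$. Counting endpoints of $D$ gives $2|D|=\sum_{w\in W}d_D(w)+\ell\ge 2|W|+\ell$. Hence $b(B)-b(A)\ge \ell/2$, which contradicts the strict upper bound $<\ell/2$.

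One case needs checking separately: $W=\emptyset$. Then $|D|\ge 1$ and $\ell=2|D|$, so the same inequality $|D|\ge |D|$ still yields the contradiction. This is exactly where the argument differs from the vertex case, and the chain of (in)equalities must be verified carefully at this point.
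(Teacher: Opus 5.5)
Your proposal is correct and follows the paper's proof essentially step for step. It uses maximum-cardinality laminar uncrossing with a size-maximal bad $A$ and an inclusion-minimal crossing $B$, then splits an inclusion-minimal non-2-connected member at a cut vertex (forcing $x_v=0$), and closes Property (3) with the incidence-counting contradiction $|D|-|W|<\ell/2\le |D|-|W|$. The one step you assert without proof is $\ell\ge 1$ when $W\neq\emptyset$. As in the paper's argument for its $p>0$, it follows from connectivity of $G[B]$, which forces an edge of $D=B\setminus A$ between $V(A)$ and $W$. Your separate $W=\emptyset$ case is already covered by the general count.
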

\begin{proof}
    We construct $\cL$ in two stages: we first uncross to a laminar basis, and then refine the members so
that they are 2-connected. We will subsequently show that the third property also holds by exploiting the $2$-connectivity property. 

\paragraph{Stage 1: uncrossing to a laminar family.}
Choose a laminar family $\mathcal L\subseteq\mathcal T$ of maximum
cardinality subject to the rows $\{\row(F):F\in\mathcal L\}$ being linearly
independent.  Let $W:=\operatorname{span}\{\row(F):F\in\mathcal L\}$ and $Q:=\operatorname{span}\{\row(F):F\in\mathcal T\}$. The following is the key lemma. 

\begin{lemma}\label{lem:esd-span-tight-sets}
We have that $\cW=\cQ$. 
\end{lemma}
\begin{proof}
Since $\cL\subseteq \mathcal{T}$, it follows that $\cW\subseteq \cQ$. For the sake of contradiction, suppose $\cW\subsetneq \cQ$. Then, there exists $A\in \mathcal{T}$ such that $\row(A)\not\in \cW$. 
\begin{align}
        &\text{Pick a max-sized $A\in \cT$ such that $\row(A)\not\in \cW$.}
\end{align}

If $\mathcal L\cup\{A\}$ is laminar, then
$\mathcal L\cup\{A\}$ contradicts the maximality of $\mathcal L$.  Hence there
exists $B\in\mathcal L$ such that $A\cap B\ne\emptyset$, $A\setminus B\ne
\emptyset$, and $B\setminus A\ne\emptyset$.  
\begin{align}
        &\text{Pick an inclusionwise minimal $B\in \cL$ such that $A\cap B\ne\emptyset$, $A\setminus B\ne \emptyset$, and $B\setminus A\ne\emptyset$.}
\end{align}
    
Let $I:=A\cap B$ and $U:=A\cup B$.  By Lemma \ref{lem:esd-uncrossing}, both $I$ and $U$ are
tight and
\[
        \row(A)+\row(B)=\row(I)+\row(U).
\]
Since $B\setminus A\ne\emptyset$, we have $|U|>|A|$, and therefore
$\row(U)\in \cW$ by the maximal choice of $A$.  As $\row(B)\in \cW$ and
$\row(A)\notin \cW$, the row identity implies $\row(I)\notin \cW$.

We next show that $\mathcal L\cup\{I\}$ is laminar.  Let $T\in\mathcal L$.
Since $B,T\in\mathcal L$, either $B\subseteq T$, or $T\subseteq B$, or
$B\cap T=\emptyset$.  If $B\subseteq T$, then $I\subseteq T$.  If
$B\cap T=\emptyset$, then $I\cap T=\emptyset$.  Finally suppose $T\subseteq B$.
If $T=B$, then $I\subseteq T$.  Otherwise $T\subsetneq B$.  If $I$ and $T$ were
not laminar, then $A$ and $T$ would also not be laminar: indeed,
$I\cap T\ne\emptyset$, $I\setminus T\ne\emptyset$, and $T\setminus I\ne
\emptyset$ imply $A\cap T\ne\emptyset$, $A\setminus T\ne\emptyset$, and
$T\setminus A\ne\emptyset$.  Thus, $T$ contradicts the minimal choice of $B$.
Thus $\mathcal L\cup\{I\}$ is laminar.  Thus, $I\in \cT$, $\row(I)\notin \cW$, and $\cL\cup \{I\}$ is laminar. Consequently, $\cL\cup \{I\}$ contradicts the choice of $\cL$. Therefore $\cW=\cQ$.
\end{proof}

Now choose an inclusionwise maximal subfamily $\mathcal L_0\subseteq\mathcal L$
such that
\[
        \{\row(F):F\in\mathcal L_0\}\cup\{1_u:u\in Z\}
\]
are linearly independent.  
We note that $\subspan(\cL_0\cup Z)=\subspan(\cL\cup Z)=\subspan(\cT\cup Z)=\R^V$, where the first equality is because of the inclusion-wise maximal choice of $\cL_0$, the second equality is by Lemma \ref{lem:esd-span-tight-sets}, and the third equality is because $\cT$ is the family of tight sets corresponding to the extreme point $x$. Thus, $\cL_0$ is a laminar basis satisfying Property (1). 

\paragraph{Stage 2: Refinement to $2$-connected members.}
Starting from $\mathcal L_0$, we repeatedly perform the following operation.
If some member of $\mathcal{L}_0$ is not $2$-connected, then choose 
an inclusionwise minimal such member $F$. 
By Lemma \ref{lem:esd-simple-properties}, $G[F]$ is connected and has a cycle. 
Let $v$ be a cut vertex of $G[F]$.  
Let $Q_1$ be a connected component of $G[F]-v$ and $Q_2$ be the remaining components of $G[F]-v$ that are not $Q_1$. For each $i\in [2]$, let $F_i$ be the union of the edge-set of $Q_i$ and the set of edges between $v$ and the vertices of $Q_i$. 
We note that $F_1$ and $F_2$ are non-empty. 
Then, $F=F_1\dot\cup F_2$  and $V(F_1)\cap V(F_2)=\{v\}$.
We observe that 
\begin{align*}
        g_x(F)&=g_x(F_1)+g_x(F_2)+x_v \text{ and}\\
        b(F)&=b(F_1)+b(F_2).
\end{align*}
Since $g_x(F)=b(F)$ and $g_x(F_i)\ge b(F_i)$ for both $i\in [2]$, we obtain $x_v=0$ and
$F_1,F_2\in\mathcal T$.  Moreover,
\[
        \row(F)=\row(F_1)+\row(F_2)+1_v.
\]
Since $v\in Z$, at least one
of $\row(F_1),\row(F_2)$ is 
not in $\mathrm{span}(\cL_0 \setminus \{F\}\cup Z)$. 
Replace $F$ by such an $F_i$ in the basis. This replacement preserves the basis property.

The replacement also preserves
laminarity:  Indeed, any set outside $F$ either contains $F$ or is
disjoint from $F$, so it either contains the chosen $F_i$ or is disjoint from it.
Any proper set $T\subsetneq F$ in the family is 2-connected by the minimal choice
of $F$.  Since $F_1$ and $F_2$ meet only at the cut vertex $v$, such a
2-connected $T$ cannot use edges from both $F_1$ and $F_2$; otherwise $v$ would
be a cut vertex of $G[T]$.  Hence, $T$ is either
contained in the chosen $F_i$ or is edge-disjoint from it.

This replacement reduces the total number of edges across all sets in the family. Repeating, we obtain $\mathcal{L}$ where every set induces a $2$-connected graph. Let $F\in \mathcal{L}$. By Lemma \ref{lem:esd-simple-properties}, $G[F]$ contains a cycle. Since $G[F]$ is $2$-connected and contains a cycle, by Lemma \ref{lem:esd-three-nz}, $|NZ(F)|\ge 3$. 

\paragraph{Property (3).}  Let $A,B\in\mathcal L$ with
$A\subsetneq B$, and suppose for contradiction that
$x_v=0$ for every $v\in V(B)\setminus V(A)$.  Set
$H:=B\setminus A$ and $W':=V(B)\setminus V(A)$.  Since $A$ and $B$ are tight,
\[
        g_x(B)-g_x(A)=b(B)-b(A)=|H|-|W'|.
\]
On the other hand, because the vertices of $W'$ have zero $x$-value,
\[
        g_x(B)-g_x(A)
        =\sum_{uv\in H}(x_u+x_v)-x(W')
        =\sum_{uv\in H}(x_u+x_v).
\]

Let $p:=|\{(e,a):e\in H, a\in e\cap V(A)\}|$. We observe that $p=\sum_{a\in V(A)}d_H(a)=2|H[V(A)]| + |\delta_H(V(A), W')|$. We observe that $p>0$: Since $A\subsetneq B$, we have that $H\neq \emptyset$. If $W'=\emptyset$, then every edge of $H$ has both end-vertices in $V(A)$, so $p=2|H|>0$. If $W'\neq \emptyset$, then $G[B]$ is connected and since $A\neq \emptyset$, some edge of $B$ must have one end-vertex in $V(A)$ and another end-vertex in $W'$. Such an edge cannot belong to $A$ and hence, it lies in $H$ and hence, $\delta_H(V(A), W')\neq \emptyset$ showing that $p>0$. 

We have that 
\begin{align}
|H|-|W'|
&=\sum_{uv\in H}(x_u+x_v) \notag\\
&= \sum_{(e, a): e\in H, a\in e\cap V(H)}x_a \notag\\
&= \sum_{(e, a): e\in H, a\in e\cap V(A)}x_a \quad \quad \text{(since $x_u=0$ for all $u\in W'$)} \notag\\
&< \frac{p}{2}. \quad \quad \text{(since $x_u<1/2$ for all $u\in V$ and $p>0$)} \label{eq:esd-1}
\end{align}

Let $q:=|\{(e,w): e\in H, w\in e\cap W'\}|$. We observe that $q = \sum_{w\in W'}d_H(w)$. Since $V(A)$ and $W'$ partition $V(B)$, we have that $2|H|=|\{(e,h): e\in H, h\in V(H)\}|=|\{(e,a):e\in H, a\in e\cap V(A)\}| + |\{(e,w): e\in H, w\in e\cap W'\}|=p+q$. Now, consider a vertex $w\in W'$. Every edge of $B$ incident to $w$ must lie in $H=B\setminus A$ because no edge of $A$ can be incident to a vertex outside $V(A)$. Hence, $d_H(w) = d_B(w)$ for every $w\in W'$. Since $G[B]$ is $2$-connected, every vertex of $G[B]$ has degree at least $2$. Therefore, 
\[
q=\sum_{w\in W'}d_H(w) = \sum_{w\in W'}d_B(w) \ge 2|W'|. 
\]
Using $p+q=2|H|$, we get 
\[
2|H|=p+q \ge p + 2|W'|,
\]
a contradiction to $|H|-|W'|<p/2$. Thus, property (iii) holds. 
\end{proof}

\subsection{Counting Lemmas for Contradiction}
\paragraph{Notation.} Let $\mathcal{L}$ be the laminar family from Theorem \ref{thm:esd-basis}. 
From the basis structure, we have that $|\mathcal{L}|+|Z|=n$. We let $z$ denote $|Z|$ and use $p=n-z$ to denote the number of vertices $u$ with strictly positive $x_u$ value. 
Let $t$ denote the number of maximal sets in $\mathcal{L}$ with $\widehat{F}_1, \widehat{F}_2, \ldots, \widehat{F}_t$ being the maximal sets in $\mathcal{L}$. Since $G$ contains a cycle $C$, Lemma \ref{lem:esd-three-nz} implies that $p\ge 3$, and hence  $t\ge 1$. 
For each $i\in [t]$, let $\cB_i:=\{A\in \cL: A\subseteq \widehat{F}_i\}$ and $k_i:=|\cB_i|$. We have $p=|\mathcal{L}|=\sum_{i=1}^t k_i$. 

First consider the simple case when each maximal set has no other sets contained within it and moreover, the subgraph induced by the maximal-sets are pairwise vertex-disjoint. Thus, $\cB_i = \{\widehat{F}_i\}$ and $V(\widehat{F}_i)\cap V(\widehat{F}_j)=\emptyset$ for distinct $i, j\in [t]$. Then, a
contradiction is quite easy as follows: We have $p = |\cL| = t$ since
$k_i=1$ for each $i\in [t]$.  However, each $|NZ(\widehat{F}_i)| \ge 3$ by
Theorem~\ref{thm:esd-basis} which implies that
$\sum_{i=1}^t |NZ(\widehat{F}_i)| \ge 3t$. Since the subgraph induced by the maximal sets are
pairwise vertex-disjoint, we have that $p \ge \sum_{i=1}^t |NZ(\widehat{F}_i)|$ which is a
contradiction since $t \ge 1$. 

However, the assumption that the subgraph induced by the maximal-sets 
$\widehat{F}_1,\ldots,\widehat{F}_t$ are pairwise vertex-disjoint is too strong. In particular, a vertex
$v$ may belong to multiple maximal sets and we can no longer obtain a
contradiction. We need to account for the overlap of the sets even in this restricted case.
This motivates the following definition.

\begin{definition}[Global sharing loss]
For each vertex $v\in V-Z$, let
\[
        m_v:=|\{i\in[t]: v\in V(\widehat F_i)\}|,
\]
denote the number of maximal sets containing $v$. The global sharing loss is 
\[
        \sigma_{\mathrm{global}}:=\sum_{v\in V-Z}(m_v-1).
\]
\end{definition}

We observe that support vertices $v$ with $m_v=1$ contribute zero in the definition of $\sigma_{\mathrm{global}}$, so only support vertices shared between multiple blocks contribute positive amount to $\sigma_{\mathrm{global}}$. The following proposition is easy from the definitions. 

\begin{proposition}\label{prop:esd-intermed}
We have that 
\begin{align}
        \sum_{i=1}^t |NZ(\widehat F_i)|&=p+\sigma_{\mathrm{global}}.
\end{align}
\end{proposition}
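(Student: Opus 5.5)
This is a double-counting identity, the same as Proposition~\ref{prop:intermed} in the vertex setting. We write each $|NZ(\widehat F_i)|$ as a sum of indicators over support vertices and exchange the order of summation:
\[
\sum_{i=1}^t |NZ(\widehat F_i)| = \sum_{i=1}^t \sum_{u\in V-Z} 1_{u\in V(\widehat F_i)} = \sum_{u\in V-Z} m_u .
\]
This holds because, by definition, $NZ(\widehat F_i)=V(\widehat F_i)\cap (V-Z)$ and $m_u$ counts the maximal sets whose vertex set contains $u$.

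Next we split $\sum_{u\in V-Z} m_u = \sum_{u\in V-Z}(m_u-1) + |V-Z|$. The first term is $\sigma_{\mathrm{global}}$ by definition, and the second is $|V-Z| = n-z = p$. This gives the identity.

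The only step needing care is that the identity itself does not require $m_u\ge 1$; the rewriting is purely algebraic, so the proposition holds as stated. For later use, though, we would also record that $m_u\ge 1$ for every $u\in V-Z$. Suppose some support vertex $u$ lay in no $V(\widehat F_i)$. By laminarity every $F\in\mathcal L$ is contained in some $\widehat F_i$, so $u\notin V(F)$ and $\row(F)_u=0$ for all $F\in\mathcal L$. Also $(1_w)_u=0$ for all $w\in Z$, since $u\notin Z$. Then the basis of Theorem~\ref{thm:esd-basis}(1) would not span $\R^V$, a contradiction. So every term $m_u-1$ in $\sigma_{\mathrm{global}}$ is nonnegative, and I expect no real obstacle here.
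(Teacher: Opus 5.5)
Your proof is correct and matches the paper's argument: you write $|NZ(\widehat F_i)|$ as a sum of indicators over $V-Z$, swap the sums to get $\sum_{u\in V-Z} m_u$, and split off $|V-Z|=p$. Your side observation that $m_u\ge 1$ for every support vertex (otherwise the basis would not span $\R^V$) also agrees with the paper's accompanying remark, and, as you note, the identity itself does not need it.
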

\begin{proof}
We observe that 
\begin{align*}
\sum_{i=1}^\numberofblocks |NZ(\widehat F_i)| 
&= \sum_{i=1}^\numberofblocks \sum_{u\in V-Z} 1_{u\in V(\widehat{F}_i)} \\
&= \sum_{u\in V-Z}\sum_{i=1}^\numberofblocks 1_{u\in V(\widehat{F}_i)} = \sum_{u \in V-Z}m_v = \sum_{u\in V-Z}(m_v -1) + |V-Z| \\
& = \sigma_{\mathrm{global}} + \nnz. 
\end{align*}
\end{proof}

\begin{remark}
    For distinct $i, j \in [t]$, $\widehat{F}_i$ and $\widehat{F}_j$ are edge-disjoint, but $V(\widehat{F}_i)$ and $V(\widehat{F}_j)$ may share vertices. Moreover, every support vertex lies in at least one root $\widehat F_i$; otherwise all basis vectors would have zero coordinate at that vertex.  
\end{remark}

\paragraph{Two key lemmas.} The rest of the analysis is based on two key technical lemmas. We state the lemmas and use them to complete the proof of Theorem \ref{thm:esd-extreme-point} here. We will prove these two lemmas subsequently. 
The first bounds the sharing loss for a collection of tight sets.

\begin{lemma}[Union Feasibility Inequality]\label{lem:esd-feasibility}
Let $\mathcal A=\{A_1,\ldots,A_r\}\subseteq\mathcal T$ be a collection of
pairwise edge-disjoint tight sets.  Let $U:=\bigcup_{j=1}^r A_j$, and for
each $v\in V(U)$ let
\[
        m_v^{\mathcal A}:=|\{j\in[r]: v\in V(A_j)\}|.
\]
Set
\[
        \sigma(\mathcal A):=\sum_{v\in V(U)}(m_v^{\mathcal A}-1)
        =\sum_{j=1}^r |V(A_j)|-|V(U)|.
\]
If 
$r\ge 2$, then \[\sigma(\mathcal A)\le 2r-3.\]
\end{lemma}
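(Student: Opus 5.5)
The plan mirrors the proof of Lemma~\ref{lem:feasibility}. We compare the constraint for the union $U$ with the sum of the tight equations for $A_1,\ldots,A_r$, and show that the slack is paid for only by the shared vertices. The edge version is cleaner, since the sets are edge-disjoint and $U$ itself is the union, so no crossing edges $F$ appear.

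\textbf{Step 1: decomposing $g_x(U)$.} Because the $A_j$ are pairwise edge-disjoint, for every $u\in V(U)$ we have $d_U(u)=\sum_{j: u\in V(A_j)} d_{A_j}(u)$. Therefore
\[
g_x(U)=\sum_{u\in V(U)} d_U(u)x_u - x(V(U)) = \sum_{j=1}^r \bigl(g_x(A_j)+x(V(A_j))\bigr) - x(V(U)).
\]
Using $\sum_j x(V(A_j)) = x(V(U)) + \sum_{v\in V(U)}(m_v^{\mathcal A}-1)x_v$, this becomes
\[
g_x(U)=\sum_{j=1}^r g_x(A_j)+\sum_{v\in V(U)}(m_v^{\mathcal A}-1)x_v.
\]

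\textbf{Step 2: decomposing $b(U)$.} We have $|U|=\sum_j|A_j|$ and $|V(U)|=\sum_j|V(A_j)|-\sigma(\mathcal A)$. Hence
\[
b(U)=\sum_{j=1}^r b(A_j)+\sigma(\mathcal A)-r+1.
\]

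\textbf{Step 3: combining with feasibility.} Each $A_j$ is tight, and $U$ is nonempty, so feasibility gives $g_x(U)\ge b(U)$. Subtracting the two decompositions yields
\[
\sum_{v\in V(U)}(m_v^{\mathcal A}-1)x_v\ \ge\ \sigma(\mathcal A)-r+1.
\]

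\textbf{Step 4: conclusion.} If $\sigma(\mathcal A)=0$, the bound $\sigma(\mathcal A)\le 2r-3$ holds because $r\ge 2$. Otherwise some coefficient $m_v^{\mathcal A}-1$ is positive, and $x_v<1/2$ gives a strict inequality on the left side:
\[
\frac{\sigma(\mathcal A)}{2} > \sigma(\mathcal A)-r+1.
\]
This gives $\sigma(\mathcal A)<2r-2$, and integrality of $\sigma(\mathcal A)$ yields $\sigma(\mathcal A)\le 2r-3$.

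The only step needing care is Step~1, specifically checking that edge-disjointness makes edge degrees additive. Vertex memberships, by contrast, overlap, and that overlap is exactly what the $(m_v^{\mathcal A}-1)x_v$ term records. I expect no further obstacle.
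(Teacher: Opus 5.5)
Your proposal is correct and follows the paper's proof essentially step for step. It uses the same decomposition $g_x(U)=\sum_{j} g_x(A_j)+\sum_{v\in V(U)}(m_v^{\mathcal A}-1)x_v$ from additivity of edge degrees under edge-disjointness, the same identity $b(U)=\sum_j b(A_j)+\sigma(\mathcal A)-r+1$, and the same finish via feasibility of $U$, $x_v<1/2$, and integrality of $\sigma(\mathcal A)$.
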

We will see later in the proof of Theorem \ref{thm:esd-extreme-point} that we are interested in the sharing loss of support vertices only, but the above lemma is phrased in terms of the sharing loss of all vertices for notational ease. 
The second lemma builds on the preceding to show a surplus property. 
\begin{lemma}[Surplus in maximal sets]\label{lem:esd-surplus}
For each $i\in [t]$, we have that 
\[
        |\NZ(\widehat F_i)|\ge k_i+2.
\]
\end{lemma}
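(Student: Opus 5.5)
We prove the stronger statement that every $F\in\mathcal L$ satisfies $|\NZ(F)|\ge k_F+2$, where $\mathcal T_F:=\{A\in\mathcal L: A\subseteq F\}$ and $k_F:=|\mathcal T_F|$. Applying it to $F=\widehat F_i$ gives the lemma, since $\mathcal T_{\widehat F_i}=\cB_i$. The proof is by induction along the laminar containment forest, mirroring the proof of Lemma~\ref{lem:block-surplus}. Because $\mathcal L\subseteq 2^E$ is genuinely laminar, the children of $F$ are well defined: they are the maximal members of $\mathcal L$ properly contained in $F$. These children $C_1,\ldots,C_r$ are pairwise edge-disjoint, and $k_F=1+\sum_{j}k_{C_j}$ since the subtrees are disjoint.

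\textbf{Base case ($F$ has no children).} Here $k_F=1$, and Theorem~\ref{thm:esd-basis}(2) gives $|\NZ(F)|\ge 3=k_F+2$.

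\textbf{Case $r=1$.} Theorem~\ref{thm:esd-basis}(3) supplies a vertex $v\in V(F)\setminus V(C_1)$ with $x_v>0$. Since $V(C_1)\subseteq V(F)$, we get $|\NZ(F)|\ge|\NZ(C_1)|+1\ge k_{C_1}+3=k_F+2$.

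\textbf{Case $r\ge 2$.} Let $U:=\bigcup_j C_j$, and let $\sigma_{\NZ}:=\sum_j|\NZ(C_j)|-|\NZ(U)|$ be the sharing loss counted over support vertices only. Then $|\NZ(F)|\ge|\NZ(U)|=\sum_j|\NZ(C_j)|-\sigma_{\NZ}$, because $V(U)\subseteq V(F)$. Each support vertex in the union is counted $m_v^{\mathcal A}-1$ extra times, so $\sigma_{\NZ}\le\sigma(\mathcal A)$ for $\mathcal A=\{C_1,\ldots,C_r\}$. The $C_j$ are pairwise edge-disjoint tight sets, so Lemma~\ref{lem:esd-feasibility} applies and gives $\sigma(\mathcal A)\le 2r-3$. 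Combining with the induction hypothesis,
\[
|\NZ(F)|\ge\sum_j(k_{C_j}+2)-(2r-3)=\sum_j k_{C_j}+3=k_F+2.
\]

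The only point needing care is that the edge version of the sharing loss uses vertex multiplicities $m_v^{\mathcal A}$ over $V(C_j)$. Vertex sets of edge-disjoint children may overlap in many vertices, not just one as in the almost-laminar case. This is exactly why Lemma~\ref{lem:esd-feasibility} is stated only under edge-disjointness, so no additional intersection bound is required.
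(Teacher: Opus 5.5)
Your proposal is correct and follows essentially the same approach as the paper. Both proceed by induction over the laminar containment tree: the leaf case comes from Theorem~\ref{thm:esd-basis}(2), the single-child case from property (3), and the case $r\ge 2$ bounds the support-only sharing loss by $\sigma(\{C_1,\ldots,C_r\})\le 2r-3$ via Lemma~\ref{lem:esd-feasibility}.
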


\begin{corollary}[Lower Bound on Global Sharing Loss]\label{coro:esd-lower}
$\sigma_{\mathrm{global}} \ge 2\numberofblocks$.
\end{corollary}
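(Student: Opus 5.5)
This corollary follows by combining Proposition~\ref{prop:esd-intermed} with Lemma~\ref{lem:esd-surplus}, in the same way as Corollary~\ref{cor:lower} in the strong-density setting.

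First, recall that the blocks $\cB_1,\ldots,\cB_t$ partition $\cL$. By laminarity, every $A\in\cL$ is contained in exactly one maximal set $\widehat F_i$: it lies in at least one maximal set, and two distinct maximal sets are disjoint (they are incomparable, hence disjoint by laminarity), so it cannot lie in two. Therefore
\[
\sum_{i=1}^t k_i=|\cL|=p,
\]
as already noted in the notation paragraph.

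Next, Proposition~\ref{prop:esd-intermed} gives
\[
\sum_{i=1}^t |\NZ(\widehat F_i)| = p+\sigma_{\mathrm{global}}.
\]
For the left-hand side, Lemma~\ref{lem:esd-surplus} bounds each term by $|\NZ(\widehat F_i)|\ge k_i+2$. Summing over $i\in[t]$ yields
\[
p+\sigma_{\mathrm{global}}\ge \sum_{i=1}^t (k_i+2) = p+2t.
\]
Cancelling $p$ gives $\sigma_{\mathrm{global}}\ge 2t$.

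All of the difficulty sits inside Lemma~\ref{lem:esd-surplus}, which is assumed here. The only point to check is that $\sum_i k_i=p$ uses disjointness of the blocks, which laminarity guarantees.
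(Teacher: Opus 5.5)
Your proposal is correct and is the same argument as the paper's: substitute the bound $|\NZ(\widehat F_i)|\ge k_i+2$ from Lemma~\ref{lem:esd-surplus} into the identity of Proposition~\ref{prop:esd-intermed}, then cancel $p=\sum_i k_i$. Your explicit check that the blocks partition $\cL$ is sound. It uses laminarity together with the fact that every member of $\cL$ is a nonempty edge set.
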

\begin{proof}
Combining Proposition \ref{prop:esd-intermed} 
and Lemma~\ref{lem:esd-surplus} gives
\[
\nnz + \sigma_{\mathrm{global}} = \sum_{i=1}^\numberofblocks |NZ(\widehat F_i)| \ge \sum_{i=1}^\numberofblocks (k_i + 2) = \nnz + 2\numberofblocks. 
\]
\end{proof}

\subsubsection{Proof of Theorem \ref{thm:esd-extreme-point}}
We complete the proof of Theorem \ref{thm:esd-extreme-point}. 
\begin{proof}[Proof of Theorem \ref{thm:esd-extreme-point}]

Assume for contradiction that $x_u<1/2$ for all $u\in V$. 

Since $G$ contains a cycle $C$, Lemma \ref{lem:esd-three-nz} implies that $p\ge 3$, and hence  $t\ge 1$. 
By Corollary \ref{coro:esd-lower}, we have that $\sigma_{\mathrm{global}}\ge 2t$.  

Suppose $t=1$. Then, there is only one maximal set $\widehat{F}_1$. Every support vertex lies in $V(\widehat{F}_1)$, hence $m_v=1$ for every support vertex $v$ and $\sigma_{\mathrm{global}}=0$, a contradiction.
Thus, we may assume that $t\ge 2$. The maximal sets $\widehat F_1,\ldots,\widehat F_t$ in $\mathcal{L}$ are pairwise edge-disjoint. By Lemma \ref{lem:esd-feasibility} applied to the collection $\mathcal{A}=\{\widehat F_1,\ldots,\widehat F_t\}$, we have that the total sharing loss $\sigma(\mathcal{A})\le 2t-3$.  Since $\sigma_{\mathrm{global}}$ counts the sharing loss of only support
vertices, while $\sigma(\mathcal{A})$ counts the sharing loss of all vertices in $\cup_{i\in [t]}V(\widehat{F}_i)$, we have that $\sigma_{\mathrm{global}}\le \sigma(\mathcal{A})\le 2t-3$, a contradiction. 
\end{proof}

\subsubsection{Proof of Lemma \ref{lem:esd-feasibility}}

\begin{proof}[Proof of Lemma \ref{lem:esd-feasibility}]
    Since the edge sets $A_j$ are pairwise edge-disjoint, the edges incident to a
vertex $v$ in the edge-union $U$ are partitioned among the sets $A_j$ containing
$v$.  With the convention $d_{A_j}(v)=0$ when $v\notin V(A_j)$, this gives
\[
        d_U(v)=\sum_{j=1}^r d_{A_j}(v)
        \qquad \forall v\in V(U).
\]
Starting from the definition of $g_x(U)$, we obtain
\[
\begin{aligned}
        g_x(U)
        &=\sum_{v\in V(U)}(d_U(v)-1)x_v \\
        &=\sum_{v\in V(U)}\left(\sum_{j=1}^r d_{A_j}(v)-1\right)x_v \\
        &=\sum_{j=1}^r\sum_{v\in V(A_j)} d_{A_j}(v)x_v
          -\sum_{v\in V(U)}x_v \\
        &=\sum_{j=1}^r\left(
            \sum_{v\in V(A_j)}(d_{A_j}(v)-1)x_v
            +\sum_{v\in V(A_j)}x_v\right)
          -\sum_{v\in V(U)}x_v \\
        &=\sum_{j=1}^r g_x(A_j)
          +\left(\sum_{j=1}^r\sum_{v\in V(A_j)}x_v
            -\sum_{v\in V(U)}x_v\right) \\
        &=\sum_{j=1}^r g_x(A_j)
          +\sum_{v\in V(U)}(m_v^{\mathcal A}-1)x_v.
\end{aligned}
\]

Similarly, from the definition of $b(U)$, pairwise edge-disjointness gives
$|U|=\sum_{j=1}^r |A_j|$, while the definition of $\sigma(\mathcal A)$ gives
\[
        \sigma(\mathcal A)=\sum_{v\in V(U)}(m_v^{\mathcal A}-1)
        =\sum_{j=1}^r |V(A_j)|-|V(U)|.
\]
Thus, $|V(U)|=\sum_{j=1}^r |V(A_j)|-\sigma(\mathcal A)$, and
\[
\begin{aligned}
        b(U)
        &=|U|-|V(U)|+1 \\
        &=\sum_{j=1}^r |A_j|-
          \left(\sum_{j=1}^r |V(A_j)|-\sigma(\mathcal A)\right)+1 \\
        &=\sum_{j=1}^r\bigl(|A_j|-|V(A_j)|+1\bigr)
          +\sigma(\mathcal A)-r+1 \\
        &=\sum_{j=1}^r b(A_j)+\sigma(\mathcal A)-r+1.
\end{aligned}
\]

Using tightness of every $A_j$ and feasibility of $U$, we get
\[
        0\le g_x(U)-b(U)
        =\sum_{v\in V(U)}(m_v^{\mathcal A}-1)x_v-(\sigma(\mathcal A)-r+1),
\]
which proves the first assertion.

Suppose $r\ge 2$. If $\sigma(\mathcal A)=0$, the bound is immediate.  Otherwise,
using $x_v<1/2$ for all $v$ gives
\[
        \frac{\sigma(\mathcal A)}2
        > \sum_{v\in V(U)}(m_v^{\mathcal A}-1)x_v
        \ge \sigma(\mathcal A)-r+1.
\]
Hence $\sigma(\mathcal A)<2r-2$, and since $\sigma(\mathcal A)$ is integral,
$\sigma(\mathcal A)\le 2r-3$.
\end{proof}

\subsubsection{Proof of Lemma \ref{lem:esd-surplus}}
For each $i\in[\numberofblocks]$, we recall that $k_i$ denotes the number of sets in  $\mathcal{B}_i:=\{A\in \cL: A\subseteq \widehat{F}_i\}$ and $V_i=NZ(\widehat{F}_i)$. 

\begin{proof}[Proof of Lemma \ref{lem:esd-surplus}]
    For a
set $S\in\mathcal B_i$, let $\mathcal T_S:=\{A\in \mathcal{B}_i: A\subseteq S\}$ and let $k_S:=|\mathcal T_S|$.  We recall that $\mathcal{B}_i$ is a laminar family. We prove that
\begin{align}
        |\NZ(S)|&\ge k_S+2.        \label{eq:esd-surplus}
\end{align}
by induction on the tree-representation of the laminar family corresponding to $\mathcal{B}_i$. 
If $S$ is a leaf, then $k_S=1$, and \eqref{eq:esd-surplus} follows from Lemma \ref{lem:esd-three-nz}.

Now suppose that $S$ has children $C_1,\ldots,C_r$.  The children are pairwise
edge-disjoint.  By induction, $|\NZ(C_j)|\ge k_{C_j}+2$ for each $j$, and
$k_S=1+\sum_{j=1}^r k_{C_j}$.

If $r=1$, property (iii) of Theorem \ref{thm:esd-basis} gives a support vertex in
$V(S)\setminus V(C_1)$.  Therefore
\[
        |\NZ(S)|\ge |\NZ(C_1)|+1
        \ge k_{C_1}+3=k_S+2.
\]

Assume next that $r\ge 2$.  Let
\[
        \sigma_{\NZ}:=\sum_{j=1}^r |\NZ(C_j)|
        -\left|\bigcup_{j=1}^r \NZ(C_j)\right|
\]
be the sharing loss among support vertices of the children.  This is at most
the all-vertex sharing loss 
\[
\sigma(\{C_1,\ldots,C_r\}):=\sum_{j=1}^r |V(C_j)|
        -\left|\bigcup_{j=1}^r V(C_j)\right|
\]
By Lemma \ref{lem:esd-feasibility} applied to $\mathcal{A}=\{C_1, \ldots, C_r\}$, we have that $\sigma_{\NZ}\le \sigma(\{C_1,\ldots,C_r\})\le 2r-3$.  Hence, 
\[
\begin{aligned}
        |\NZ(S)|
        &\ge \left|\bigcup_{j=1}^r \NZ(C_j)\right|  
        = \sum_{j=1}^r |\NZ(C_j)|-\sigma_{\NZ} \\
        &\ge \sum_{j=1}^r (k_{C_j}+2)-(2r-3) 
        = \sum_{j=1}^r k_{C_j}+3
         = k_S+2.
\end{aligned}
\]
This proves \eqref{eq:esd-surplus}, and applying it to the root $S=\widehat F_i$ proves the lemma.
\end{proof}

\subsection{Iterative Rounding via Edge Strong Density Polyhedron}\label{sec:esd-rounding-algorithm}
Theorem \ref{thm:esd-extreme-point} and Lemma \ref{lem:esd-supermod} together imply  
a $2$-approximation for FVS via standard application of iterative rounding. 
For an input graph $G=(V, E)$ with vertex-costs $c:V\rightarrow \mathbb{R}_{\ge 0}$, we recall that $\min\{c^Tx: x\in P_{\ESD}(G)\cap \mathbb{Z}^V\}$ is a valid formulation of FVS. 

We now state the iterative rounding algorithm. For input graph $G=(V, E)$ with vertex-costs $c:V\rightarrow \mathbb{R}_{\ge 0}$, repeat the following while $G$ has at least one cycle: (1) Compute an extreme point optimum solution $x$ for $\min\{c^Tx: x\in P_{\ESD}(G)\}$---this can be done in polynomial time via Ellipsoid since the separation oracle can be implemented via submodular minimization because of Lemma \ref{lem:esd-supermod}. 
(2) By Theorem \ref{thm:esd-extreme-point}, there exists a vertex $u\in V$ such that $x_u\ge 1/2$; include the vertex $u$ in the solution and remove it from the graph $G$. The approximation factor of the solution constructed by this procedure relative to the starting extreme point optimum
solution of the LP is at most $2$ via standard iterative rounding analysis. 

The above-mentioned iterative rounding algorithm relies on solving an LP with exponential number of constraints. In Section \ref{sec:ESD-and-orient-equivalence}, we design an iterative rounding algorithm based on a different LP that relies on solving an LP with polynomial number of constraints. 

\section{Orientation Polyhedron }\label{sec:ESD-and-orient-equivalence}
Chandrasekaran, Chekuri, Fiorini, Kulkarni, and Weltge \cite{CCFKW25} gave an orientation-based extended formulation for FVS. In this section, we relate their formulation to the edge strong density polyhedron. We next exploit this connection and the extreme point result for edge strong density polyhedron to design an alternative $2$-approximation via iterative rounding that involves solving an LP with polynomial number of constraints. 

The orientation-based formulations for FVS and PFDS in \cite{CCFKW25} are based on the connection between these problems and the well-known Densest Subgraph Problem (DSG) that we mentioned in the related work section. FVS and PFDS can be viewed as density deletion problems, and orientation-based formulations arise by considering LP relaxations for DSG that were described by Charikar \cite{charikar_greedy_2000}. We do not spell out the intuition and details of these connections here and instead refer interested readers to \cite{CCFKW25} and to a more recent work on the approximability of density deletion more broadly \cite{ChandrasekaranCK2025}. 

\subsection{Edge Strong Density Polyhedron and Orientation Polyhedron }
In this section, we show that the edge strong density polyhedron is equivalent to an orientation based extended formulation for FVS that was given in \cite{CCFKW25}. This leads to an alternative $2$-approximation via iterative rounding for FVS. 
\begin{definition}[Orientation Polyhedron]
Let $G=(V\cup I,E)$ be a graph containing at least one cycle, where $V$ is the set of non-isolated vertices and $I$ is the set of isolated vertices of $G$. 
We define $Q_{\orient}'(G)$ to be the set of points $(x, y)$ 
satisfying the following system: 
\begin{align}
x_v + x_w + y^f_{e,v} + y^f_{e,w} &\ge 1 \quad \forall e = vw \in E, f \in E \label{eq:orient_1} \\
x_v + \sum_{e=vw \in E} y^f_{e,w} &\ge 1 \quad \forall v\in V, \text{ edge } f \in E \label{eq:orient_2} \\
\sum_{v \in V \setminus \{a,b\}} x_v + \sum_{e=vw \in E \setminus \{f\}} (y^f_{e,w} + y^f_{e,v}) &\le |V| - 2 \quad \forall f = ab \in E \label{eq:orient_3}\\
y^f_{e,v} &\ge 0 \quad \forall e \in \delta(v), v \in V, f \in E \label{eq:orient_4}\\
0\le x_u &\le 1 \quad \forall u \in V. \label{eq:orient_5}
\end{align}
The \emph{orientation polyhedron} $Q_{\text{orient}}(G)$ is defined as the projection of $Q'_{\text{orient}}(G)$ to the $x$ variables (equivalently, as the set of points $x\in [0,1]^V$ such that there exist variables $y^f_{e,v}$ for every $e = vw \in E$ and $f \in E$ satisfying \eqref{eq:orient_1}---\eqref{eq:orient_4}). 
\end{definition}

\begin{theorem} \label{thm:equiv}
Let $G=(V,E)$ be a graph containing at least one cycle and no isolated vertices. Then, 
\[P_{\ESD}(G) = Q_{\mathrm{orient}}(G). \]

\end{theorem}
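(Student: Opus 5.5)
The plan is to prove the equality one edge $f$ at a time. For a fixed $f=ab\in E$, the variables $y^f$ occur only in constraints \eqref{eq:orient_1}, \eqref{eq:orient_2}, \eqref{eq:orient_4} (for this $f$) and in the single constraint \eqref{eq:orient_3} indexed by $f$. So for fixed $x\in[0,1]^V$, a feasible $y^f$ exists if and only if
\[
\Phi_f(x):=\min\Bigl\{\textstyle\sum_{e=vw\in E\setminus\{f\}}(y_{e,v}+y_{e,w}) \;:\; y\ge 0 \text{ satisfies \eqref{eq:orient_1}, \eqref{eq:orient_2} for } f\Bigr\}\;\le\; |V|-2-\sum_{v\in V\setminus\{a,b\}}x_v .
\]
Hence $Q_{\orient}(G)=\bigcap_{f\in E}\{x\in[0,1]^V:\ \text{the displayed inequality holds}\}$. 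I will show that, for each fixed $f$, this condition is equivalent to the family of ESD inequalities $g_x(F)\ge b(F)$ over all $F\ni f$. Since every nonempty $F\subseteq E$ contains some edge $f$, intersecting over $f$ then gives $P_{\ESD}(G)=Q_{\orient}(G)$.

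\textbf{Step 1 (dualize).} The covering system is well structured. Each variable $y_{e,w}$ (for $e=vw$) appears in exactly two covering rows: row \eqref{eq:orient_1} for $e$, and row \eqref{eq:orient_2} for the vertex $v$. The constraint matrix is therefore the incidence matrix of a bipartite graph between edge-rows and vertex-rows, which is totally unimodular. I write the LP dual with multipliers $\alpha_e\ge 0$ for edges and $\beta_v\ge 0$ for vertices. The dual constraints take the form $\alpha_e+\beta_v\le c_{e}$, where the cost $c_e$ is $1$ for $e\neq f$ and $0$ for $e=f$. The dual objective is $\sum_e \alpha_e(1-x_v-x_w)+\sum_v\beta_v(1-x_v)$. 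By total unimodularity, I may restrict the dual to $\{0,1\}$-vectors, up to the bounded box implied by the constraints.

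\textbf{Step 2 (read off the combinatorial meaning).} An integral dual solution selects an edge set $F$, namely the $e$ with $\alpha_e=1$, and a vertex set $S$, namely the $v$ with $\beta_v=1$. The constraints $\alpha_e+\beta_v\le c_e$ force $S$ to avoid endpoints of chosen edges other than $f$, and they make $f$ essentially free to include. Substituting into the condition $\Phi_f(x)\le |V|-2-x(V\setminus\{a,b\})$ and moving terms around, I will show the condition becomes
\[
\sum_{uv\in F}(1-x_u-x_v)-\sum_{u\in V(F)}(1-x_u)+1\le 0 \quad\text{for all } F\ni f,
\]
that is, $f_x(F)\le 0$, which is exactly the ESD constraint for $F$. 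The vertices outside $V(F)$ should be absorbed by the $\beta$-terms, contributing $1-x_v$ each and cancelling against the $|V|-2-x(V\setminus\{a,b\})$ budget. The two endpoints $a,b$ of $f$ account for the $-2$ and the "$+1$" in $b(F)$.

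\textbf{Step 3 (both inclusions).} The direction $P_{\ESD}(G)\subseteq Q_{\orient}(G)$ follows from strong duality: every integral dual solution satisfies the inequality, so the optimum does, so feasible $y^f$ exist. The reverse direction follows from weak duality. Given feasible $y^f$ and any $F\ni f$, I plug the $0/1$ dual vector of $F$ into the primal constraints and sum them, which yields $g_x(F)\ge b(F)$ directly. This direction is a routine summation and does not need total unimodularity.

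The main obstacle is Step 2. I must get the bookkeeping exactly right for the special edge $f$, which is excluded from the sum in \eqref{eq:orient_3} while its endpoints $a,b$ are excluded from the $x$-sum, and for the precise index convention in \eqref{eq:orient_2}. I also need to check that the optimal integral duals correspond to edge sets $F$ containing $f$, possibly with $G[F]$ disconnected or acyclic, and not to some degenerate configuration. My first attempt is to verify the correspondence on a single cycle with $f$ on it, and on a tree plus one edge, before writing the general cancellation.
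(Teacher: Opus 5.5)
Your proposal is correct and follows essentially the paper's route. The inclusion $Q_{\orient}(G)\subseteq P_{\ESD}(G)$ comes from summing \eqref{eq:orient_1} over $F\setminus\{f\}$ and \eqref{eq:orient_2} over $V\setminus V(F)$ against \eqref{eq:orient_3}. The reverse inclusion comes from a per-edge duality argument: the dual has a bipartite-incidence (TU) matrix, and its $\{0,1\}$ optima correspond to edge sets $F\ni f$ with $\beta=1$ on $V\setminus V(F)$, which gives exactly $f_x(F)\le 0$.

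The only difference is cosmetic. You dualize the min-cost LP with \eqref{eq:orient_3} as a budget, whereas the paper applies Farkas to the full system and normalizes $z=1$. In both versions, boundedness of the dual (and hence an integral vertex optimum) relies on the no-isolated-vertices hypothesis, which you should state explicitly.
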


For ease of notation, we denote $P_{\ESD}:=P_{\ESD}(G)$ and $Q_{\text{orient}}:=Q_{\text{orient}}(G)$. In Lemma \ref{lem:orient-is-in-edge-SD} below, we show that $Q_{\mathrm{orient}}\subseteq P_{\mathrm{Edge-SD}}$ by showing that the inequalities of $P_{\mathrm{Edge-SD}}$ are implies by that of $Q_{\mathrm{orient}}$. 
In Lemma \ref{lem:edge-SD-in-orient} below, we show that $P_{\mathrm{Edge-SD}} \subseteq Q_{\mathrm{orient}}$ via Farkas Lemma: in particular, we show that a certificate that violates the orientation system can be rounded to an integral certificate, which corresponds exactly to an edge-subset inequality of $P_{\mathrm{Edge-SD}}$. 
Lemmas \ref{lem:orient-is-in-edge-SD} and \ref{lem:edge-SD-in-orient} together prove Theorem \ref{thm:equiv}. 

\begin{lemma}\label{lem:orient-is-in-edge-SD}
    $Q_{\mathrm{orient}}\subseteq P_{\mathrm{Edge-SD}}$. 
\end{lemma}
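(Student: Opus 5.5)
The plan is to derive each inequality of $P_{\ESD}$ as a nonnegative combination of the orientation constraints, after projecting out the $y$ variables. Fix $(x,y)\in Q'_{\orient}(G)$ and a nonempty $F\subseteq E$; we must show $\sum_{u\in V(F)}(d_F(u)-1)x_u\ge |F|-|V(F)|+1$. The key choice is to \emph{pick the index edge $f=ab$ inside $F$}, and to work only with the $y^f$ variables for this $f$. For a set $D$ of edges, write $Y(D):=\sum_{e=vw\in D\setminus\{f\}}(y^f_{e,v}+y^f_{e,w})$.

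\emph{Step 1 (lower bound on $Y(F)$).} Sum constraint \eqref{eq:orient_1} (with this $f$) over all $e\in F\setminus\{f\}$. This gives
\[
\sum_{u\in V(F)} d_{F\setminus\{f\}}(u)\,x_u + Y(F)\ \ge\ |F|-1 .
\]

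\emph{Step 2 (upper bound on $Y(F)$).} Start from \eqref{eq:orient_3}, which says $x(V\setminus\{a,b\})+Y(E)\le |V|-2$. We subtract off everything outside $F$. For each $v\in V\setminus V(F)$, apply \eqref{eq:orient_2}: $x_v+\sum_{e=vw}y^f_{e,w}\ge 1$.

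Every edge at such a $v$ lies outside $F$, so none of these edges is $f$ or in $F$. Moreover, distinct outside vertices $v,v'$ use distinct pairs $(e,w)$: for an edge $e=vv'$, vertex $v$ uses $y^f_{e,v'}$ and $v'$ uses $y^f_{e,v}$. Hence, by \eqref{eq:orient_4}, summing these constraints yields
\[
x(V\setminus V(F)) + Y(E\setminus F)\ \ge\ |V\setminus V(F)| .
\]
Since $a,b\in V(F)$, subtracting this from \eqref{eq:orient_3} gives
\[
x(V(F)\setminus\{a,b\}) + Y(F)\ \le\ |V(F)|-2 .
\]

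\emph{Step 3 (combine).} Eliminating $Y(F)$ between Steps 1 and 2 gives
\[
\sum_{u\in V(F)} d_{F\setminus\{f\}}(u)\,x_u + x_a+x_b - x(V(F)) \ \ge\ |F|-1-|V(F)|+2 .
\]
Since $d_{F\setminus\{f\}}(u)+[u\in\{a,b\}]=d_F(u)$, the left side is exactly $\sum_{u\in V(F)}(d_F(u)-1)x_u$. The right side is $|F|-|V(F)|+1$, which is the desired inequality. The bounds $0\le x\le 1$ carry over directly.

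The main obstacle is Step 2. One must check that the $y$-terms collected from \eqref{eq:orient_2} at outside vertices are distinct and avoid both $F$ and $f$, so that their sum is at most $Y(E\setminus F)$. The subtraction uses only nonnegativity of the leftover $y$'s, via \eqref{eq:orient_4}. One should also double-check the degenerate case $F=\{f\}$: there Step 1 is vacuous with $Y(F)=0$, and the resulting inequality $x_a+x_b-x_a-x_b\ge 0$ is trivially true, as required.
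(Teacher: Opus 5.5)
Your proof is correct and is essentially the paper's argument: the paper also fixes $f=ab\in F$, sums \eqref{eq:orient_1} over $F\setminus\{f\}$ and \eqref{eq:orient_2} over $V\setminus V(F)$, discards the leftover $y$-terms by nonnegativity, and closes with \eqref{eq:orient_3}. You organize the same nonnegative combination as sandwiching $Y(F)$, and your explicit check that the $y$-terms from outside vertices are distinct and lie in $E\setminus F$ fills in a detail the paper leaves implicit in ``because $y\ge 0$''.
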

\begin{proof}
Let $x \in Q_{\text{orient}}$, so there exist $y \ge 0$ satisfying the orientation constraints. Let $F \subseteq E$ be an arbitrary non-empty subset of edges. We need to show that $x$ satisfies the Edge-SD constraint for $F$. Fix an arbitrary edge $f = ab \in F$.

We have that 
\begin{equation}
|F| - |V(F)| - \sum_{u \in V(F)} (d_F(u) - 1) x_u = \sum_{e=vw \in F} (1 - x_v - x_w) + \sum_{u \in V(F)} (x_u - 1). \label{eq:part1_lhs}
\end{equation}

By constraint \eqref{eq:orient_1}, for every edge $e = vw \in F \setminus \{f\}$, we have $1 - x_v - x_w \le y^f_{e,v} + y^f_{e,w}$. 
Thus,
\begin{equation}
\sum_{e=vw \in F} (1 - x_v - x_w) \le (1 - x_a - x_b) + \sum_{e=vw \in F \setminus \{f\}} (y^f_{e,v} + y^f_{e,w}). \label{eq:bound1}
\end{equation}

Next, by constraint \eqref{eq:orient_2}, for every vertex $u \in V$, we have $x_u - 1 + \sum_{e=uw \in E} y^f_{e,w} \ge 0$. 
Hence, 
\begin{align}
\sum_{u \in V(F)} (x_u - 1) &\le \sum_{u \in V(F)} (x_u - 1) + \sum_{u \in V-V(F)} \left(x_u - 1 + \sum_{e=uw \in E} y^f_{e,w}\right) \nonumber \\
&= \sum_{u \in V} (x_u - 1) + \sum_{u \in V-V(F)} \sum_{e=uw \in E} y^f_{e,w}. \label{eq:bound2}
\end{align}

Adding the bounds \eqref{eq:bound1} and \eqref{eq:bound2} yields an upper bound for \eqref{eq:part1_lhs}:
\begin{align*}
&|F| - |V(F)| - \sum_{u \in V(F)} (d_F(u) - 1) x_u \\
&\quad \quad = \sum_{e=vw \in F} (1 - x_v - x_w) + \sum_{u \in V(F)} (x_u - 1)\\
&\quad \quad\le (1 - x_a - x_b) + \sum_{u \in V} (x_u - 1) + \sum_{e=vw \in F \setminus \{f\}} (y^f_{e,v} + y^f_{e,w}) + \sum_{u \in V- V(F)} \sum_{e=uw \in E} y^f_{e,w} \\
&\quad \quad\le (1 - x_a - x_b) + \sum_{u \in V} (x_u - 1) + \sum_{e=vw \in E \setminus \{f\}} (y^f_{e,v} + y^f_{e,w}) \\
&\quad \quad= 1 - |V| + \sum_{u \in V \setminus \{a,b\}} x_u + \sum_{e=vw \in E \setminus \{f\}} (y^f_{e,v} + y^f_{e,w})\\
&\quad \quad\le 1-|V| + |V|-2 \quad \quad \text{(by constraint \eqref{eq:orient_3})}\\
&\quad \quad = -1
\end{align*}
where the second inequality follows because $y \ge 0$. 
Therefore, we have, 
\[ |F| - |V(F)| - \sum_{u \in V(F)} (d_F(u) - 1) x_u \le -1. \]
\end{proof}

\begin{lemma}\label{lem:edge-SD-in-orient}
    $P_{\mathrm{Edge-SD}} \subseteq Q_{\mathrm{orient}}$. 
\end{lemma}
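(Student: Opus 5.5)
To prove the lemma I would fix $x\in P_{\ESD}$ and construct, \emph{separately for each edge} $f=ab\in E$, a vector $y^f\ge 0$ satisfying \eqref{eq:orient_1}, \eqref{eq:orient_2}, \eqref{eq:orient_3} and \eqref{eq:orient_4}. This is possible because the constraints of $Q'_{\orient}(G)$ decouple over $f$. For fixed $f$, consider the covering LP
\[
\min \sum_{e=vw\in E\setminus\{f\}}\bigl(y^f_{e,v}+y^f_{e,w}\bigr)\quad\text{s.t. } y^f\ge 0,\ \eqref{eq:orient_1}\ \forall e,\ \eqref{eq:orient_2}\ \forall v .
\]
It is feasible (take all $y$ large), and its objective is bounded below by $0$. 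So it suffices to show that its optimum is at most $|V|-2-\sum_{v\in V\setminus\{a,b\}}x_v$. I would do this via LP duality.

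\textbf{Dual LP.} Introduce $\alpha_e\ge 0$ for \eqref{eq:orient_1} and $\beta_v\ge 0$ for \eqref{eq:orient_2}. The variable $y^f_{e,w}$ with $e=vw$ appears in \eqref{eq:orient_1} for $e$ and in \eqref{eq:orient_2} for $v$. Hence the dual constraints are $\alpha_e+\beta_v\le c_e$ and $\alpha_e+\beta_w\le c_e$ for every $e=vw$, where $c_e=1$ for $e\ne f$ and $c_f=0$. The dual objective is
\[
\max\ \sum_{e=vw}\alpha_e(1-x_v-x_w)+\sum_v\beta_v(1-x_v).
\]
For $e=f$ the constraints force $\alpha_f=\beta_a=\beta_b=0$.

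\textbf{Integrality of the dual.} The remaining constraint system is indexed by incidences (edge, endpoint), with one $\alpha$-variable and one $\beta$-variable per row. It is therefore the edge--vertex incidence matrix of a bipartite graph and is totally unimodular, so the dual has an optimal $\{0,1\}$ solution. Let $B=\{v:\beta_v=1\}$, so $a,b\notin B$. Given $B$, we may assume $\alpha_e=1$ exactly for edges $e\ne f$ with both endpoints outside $B$ and $1-x_v-x_w>0$; all other $\alpha_e=0$. Call this edge set $F'$ and set $F:=F'\cup\{f\}$, so that $V(F)\cap B=\emptyset$.

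\textbf{Key inequality.} The dual value is $\sum_{e\in F'}(1-x_v-x_w)+|B|-x(B)$. Feasibility of $x$ for the $P_{\ESD}$ constraint of $F$, in the form $f_x(F)\le 0$, gives $\sum_{e\in F}(1-x_v-x_w)\le |V(F)|-1-x(V(F))$. Subtracting the $f$-term $1-x_a-x_b$ bounds the dual value by
\[
|V(F)|+|B|-2-x\bigl(V(F)\setminus\{a,b\}\bigr)-x(B).
\]
Let $R:=V\setminus(V(F)\cup B)$; this is a disjoint decomposition of $V$. The desired bound $|V|-2-x(V\setminus\{a,b\})$ then follows from $|R|-x(R)\ge 0$, which holds since $x\le 1$. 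Strong duality now yields a $y^f$ satisfying \eqref{eq:orient_3}, and collecting the $y^f$ over all $f$ shows $x\in Q_{\orient}$.

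\textbf{Main obstacle.} The step I expect to need the most care is the integrality/rounding of the dual. One must check carefully that the dual constraint matrix really is the bipartite incidence structure described above, with $y^f_{e,w}$ charged to $\beta_v$ for the \emph{other} endpoint $v$. One must also check that choosing $\alpha_e$ greedily once $B$ is fixed loses nothing. If total unimodularity turns out to be awkward, the fallback is a direct rounding argument: for fixed $\beta$, the optimal $\alpha_e$ is $\max(0,1-\max(\beta_v,\beta_w))$ when its coefficient is positive. The objective is then linear in each $\beta_v$ on pieces, so threshold rounding $B_\theta=\{v:\beta_v\ge\theta\}$ with $\theta$ uniform in $(0,1]$ preserves the value in expectation, and some threshold achieves at least the average.
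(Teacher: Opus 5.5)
Your proposal is correct and follows essentially the paper's route. The paper phrases it through Farkas' lemma with a homogenizing variable $z$ normalized to $1$, which is exactly the dual of your covering LP. It then uses total unimodularity of the bipartite edge--vertex incidence system to get a $0/1$ dual optimum, and bounds that optimum by the $P_{\ESD}$ constraint for $\{f\}\cup\{e:\bar p_e=1\}$. Your direct use of strong duality for the feasible, bounded primal and your treatment of leftover vertices via $|R|-x(R)\ge 0$ are cosmetic variants of the same argument.
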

\begin{proof}
Let $x \in P_{\text{Edge-SD}}$. We will use Farkas' Lemma to show that there exists $y \ge 0$ satisfying constraints \eqref{eq:orient_1}, \eqref{eq:orient_2}, and \eqref{eq:orient_3}. The orientation constraints for $y^f$ are on disjoint sets of variables, hence it suffices to show that for each fixed edge $f=ab\in E$, there exist $y^f_{e,v}$ for every $e=vw\in E$ and $f\in E$ satisfying the orientation constraints \eqref{eq:orient_1}, \eqref{eq:orient_2}, and \eqref{eq:orient_3}. Fix an arbitrary edge $f = ab \in E$. 

By Farkas' Lemma, there exist $y_{e,v} \ge 0$ for every $e=vw\in E$ satisfying \eqref{eq:orient_1}, \eqref{eq:orient_2}, and \eqref{eq:orient_3} if and only if for all $(p, q, z) \ge 0$ satisfying 
\begin{align}
p_e + q_w &\le z \quad \forall e\in E \setminus \{f\}, w\in e \label{eq:dual1}\\
p_f + q_a &\le 0 \label{eq:dual2} \\
p_f + q_b &\le 0 \label{eq:dual3}
\end{align}
we have that $\text{Obj}(p,q,z)\le 0$, where 
\[ \text{Obj}(p,q,z) = \sum_{e=vw \in E} p_e (1 - x_v - x_w) + \sum_{v \in V} q_v (1 - x_v) - z \left( |V| - 2 - \sum_{v \in V \setminus \{a,b\}} x_v \right). \]

Let $(p, q, z)\ge 0$ satisfy \eqref{eq:dual1}--\eqref{eq:dual3} such that $\text{Obj}(p, q, z)$ is maximized.  
We will show that $\text{Obj}(p,q,z)\le 0$. For the sake of contradiction, suppose $\text{Obj}(p,q,z)> 0$. If $z=0$, then $p_e=q_w = 0$ for every $e\in E$ and $w\in V$ and hence, $\text{Obj}(p, q, z)=0$. Hence, we may assume that $z>0$. 

Consider the following LP: 
\begin{equation}
\alpha:=\max\{\text{Obj}(p', q', 1): (p', q', 1)\text{ satisfy }\eqref{eq:dual1}-\eqref{eq:dual3}\} \label{eq:simple-dual}
\end{equation}
We have that $(p/z, q/z, 1)$ is an optimum solution to the LP \eqref{eq:simple-dual} and moreover, $\text{Obj}(p,q,z)=\alpha z$. Thus, it suffices to show that $\alpha\le 0$. 

First, we show that the LP \eqref{eq:simple-dual} 
has an integral optimum solution. For this, we observe that the constraints of the LP simplify to the following:
\begin{align}
p_e + q_w &\le 1 \quad \forall e \in E \setminus \{f\}, w\in e \label{eq:simple-dual1}\\
p_f + q_a &\le 0 \label{eq:simple-dual2} \\
p_f + q_b &\le 0 \label{eq:simple-dual3}\\
p_e&\ge 0\ \forall\ e\in E\\
q_u&\ge 0\ \forall\ u\in V
\end{align}
This constraint matrix is totally unimodular: the constraint matrix is the node-edge incidence matrix
of the bipartite graph with one side $E$ for the $p_e$ variables and the other side $V$ for the $q_v$ variables, with an incidence row $(e, v)$ whenever $v\in e$.
Moreover, the RHS is integral. Hence, the LP \eqref{eq:simple-dual} has an integral optimum solution $(\bar{p}, \bar{q})$, where $\bar{p}\in \{0,1\}^E$ and $\bar{q}\in \{0, 1\}^V$. 

We now show that $\alpha \le 0$. 
Let $F':=\{f\}\cup \{e\in E:\bar{p}_e = 1\}$. Then, by constraints \eqref{eq:simple-dual1}--\eqref{eq:simple-dual3}, we have that $\bar{q}_v = 0$ for all $v \in V(F')$. 
Since $x_v\le 1$ for every $v\in V$ and $(\bar{p}, \bar{q}, 1)$ maximizes $\text{Obj}(\bar{p},\bar{q},1)$ while satisfying \eqref{eq:simple-dual1}, we may choose an integral optimum with $\bar{q}_v = 1$ for all $v \in V-V(F')$. 
Substituting these integral values into the objective, we have that 
\begin{align*}
&\text{Obj} (\bar{p}, \bar{q}, 1) \\
&\quad \quad = \sum_{e \in F' \setminus \{f\}} (1 - x_v - x_w) + \sum_{v\in V- V(F')} (1 - x_v) - \left( |V| - 2 - \sum_{v \in V \setminus \{a,b\}} x_v \right) \\
&\quad \quad = (|F'| - 1) - \sum_{e \in F' \setminus \{f\}} (x_v + x_w) + |V| - |V(F')| - \sum_{v \in V- V(F')} x_v - |V| + 2 + \sum_{v \in V} x_v - x_a - x_b \\
&\quad \quad = |F'| - |V(F')| + 1 - \sum_{e \in F' \setminus \{f\}} (x_v + x_w) + \sum_{v \in V(F')} x_v - x_a - x_b\\
&\quad \quad = |F'| - |V(F')| + 1 - \sum_{e \in F' } (x_v + x_w) + \sum_{v \in V(F')} x_v\\
&\quad \quad = |F'| - |V(F')| + 1 - \sum_{v\in V(F')} d_{F'}(v) x_v + \sum_{v \in V(F')} x_v\\
&\quad \quad = |F'| - |V(F')| + 1 - \sum_{v\in V(F')} (d_{F'}(v)-1) x_v \\
&\quad \quad \le 0, 
\end{align*}
where the last inequality is because of the Edge-SD constraint for $F'$ since $F'$ is non-empty. 

\end{proof}

\subsection{Iterative Rounding via Orientation Polyhedron}
\label{sec:iter-orient}
In this section, we design a $2$-approximation via iterative rounding with respect to the orientation polyhedron. We note that although $P_{\ESD}(G) = Q_{\text{orient}}(G)$, extreme point property for $P_{\ESD}(G)$ does not directly hold for $Q'_{\text{orient}}(G)$. 
This is because $Q'_{\text{orient}}(G)$ is an extended formulation of $P_{\ESD}(G)$ --- the extended space could have extreme points where none of the original variables are large. However, Theorems \ref{thm:esd-extreme-point} and \ref{thm:equiv} do imply that there is some extreme point in the extended space where some original variable is large. We summarize this in the following corollary. We encourage the readers to compare and contrast Theorem \ref{thm:esd-extreme-point} with the following corollary. 

\begin{corollary}\label{coro:orient-extreme-point}
Let $G$ be a graph containing a cycle with vertex costs $c: V\rightarrow \mathbb{R}_{\ge 0}$ and no isolated vertices. Then, there exists an extreme point optimum solution $x$ to $\min\{c^Tx: (x,y)\in Q'_{\text{orient}}(G)\}$ such that at least one of the coordinates of $x$ has value at least $1/2$. 
\end{corollary}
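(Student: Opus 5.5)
The plan is to pass from an optimal extreme point of $P_{\ESD}(G)$ to an optimal extreme point of the extended polyhedron, using Theorem~\ref{thm:equiv} to identify $Q_{\text{orient}}(G)$ with $P_{\ESD}(G)$. First, the LP $\min\{c^Tx : x\in P_{\ESD}(G)\}$ is bounded, since $c\ge 0$ and $x\in[0,1]^V$. Because $P_{\ESD}(G)\subseteq[0,1]^V$ is a polytope, it has an extreme point optimum $x^*$. By Theorem~\ref{thm:esd-extreme-point}, some coordinate satisfies $x^*_u\ge 1/2$.

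Next, consider the face
\[
\Phi:=\{(x,y)\in Q'_{\text{orient}}(G): x=x^*\}.
\]
By Theorem~\ref{thm:equiv}, $x^*\in Q_{\text{orient}}(G)$, so $\Phi$ is nonempty. Moreover, every $(x,y)\in Q'_{\text{orient}}(G)$ satisfies $c^Tx\ge c^Tx^*$, since $x\in Q_{\text{orient}}(G)=P_{\ESD}(G)$. Hence every point of $\Phi$ is optimal for $\min\{c^Tx:(x,y)\in Q'_{\text{orient}}(G)\}$.

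The remaining step is to find an extreme point of $Q'_{\text{orient}}(G)$ inside $\Phi$. The cleanest route is to show that $\Phi$ is a face of $Q'_{\text{orient}}(G)$. Since $x^*$ is an extreme point of $P_{\ESD}(G)=\pi_x(Q')$, there is a weight vector $w$ such that $x^*$ is the unique minimizer of $w^Tx$ over $P_{\ESD}(G)$. Then $\Phi$ is exactly the set of minimizers of $w^Tx$ over $Q'_{\text{orient}}(G)$, and so $\Phi$ is a face.

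We then need $\Phi$ to contain a vertex, which requires that $Q'$ be pointed. The $x$-coordinates lie in $[0,1]$ and $y\ge 0$, so the lineality space is trivial. Hence the nonempty face $\Phi$ contains an extreme point $(x^*,y^*)$ of $Q'_{\text{orient}}(G)$, and it is optimal with $x^*_u\ge 1/2$.

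The main obstacle is checking that a vertex exists, because $Q'$ may be unbounded in $y$: constraint \eqref{eq:orient_3} caps the sums of $y$, but not every $y$ variable appears in it, and $y^f_{f,\cdot}$ is free upward. Pointedness suffices for this, since nonnegativity of all variables rules out lines. This also confirms that the min over $Q'$ is attained, since $c^Tx$ is bounded below by $0$.
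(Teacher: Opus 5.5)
Your proof is correct and follows the route the paper indicates: the paper gives no separate argument and simply asserts that the corollary follows from Theorem~\ref{thm:esd-extreme-point} combined with Theorem~\ref{thm:equiv}. Your argument supplies the step the paper leaves implicit, namely lifting the optimal vertex $x^*$ to an extreme point of $Q'_{\text{orient}}(G)$: the fiber over the exposed vertex $x^*$ is a face of $Q'_{\text{orient}}(G)$, and this polyhedron is pointed because $x\in[0,1]^V$ and $y\ge 0$, even though it is unbounded in $y$. Hence that face contains a vertex of $Q'_{\text{orient}}(G)$, which is optimal and has $x^*_u \ge 1/2$.
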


Next, we use this corollary to design a $2$-approximation via iterative rounding with respect to the orientation polyhedron. For an input graph $G=(V, E)$ with vertex-costs $c:V\rightarrow \mathbb{R}_{\ge 0}$, it is known that $\min\{c^Tx: (x,y)\in Q'_{\text{orient}}(G), x \text{ and }y \text{ are integral}\}$ is a valid formulation of FVS \cite{CCFKW25}. We note that $\min\{c^Tx: (x,y)\in Q'_{\text{orient}}(G)\}$ is a polynomial-sized LP and hence, can be solved in polynomial time. 

We now state the iterative rounding algorithm. For input graph $G=(V, E)$ with no isolated vertices and with vertex-costs $c:V\rightarrow \mathbb{R}_{\ge 0}$, repeat the following while $G$ has at least one cycle: (1) 
For each $u\in V$, compute an extreme point optimum $x$ for $\min\{c^Tx: (x,y)\in Q'_{\text{orient}}(G), x_u\ge 1/2\}$; among all computed extreme point optima, pick the one with the cheapest objective value; let $u$ be the vertex associated with such an extreme point optima. 
(2) Include the vertex $u$ in the solution and remove it from the graph $G$; finally, remove all isolated vertices. 

The enumeration over constraints $x_u \ge 1/2$ avoids relying on an extreme point of the extended space. Since
$P_{\ESD}(G)=Q_{\text{orient}}(G)$ and $P_{\ESD}(G)$ has an optimal projected extreme point with
some coordinate at least $1/2$, at least one enumerated LP has the same optimum value
as the unconstrained LP. The approximation factor of the solution constructed by this procedure relative to the starting extreme point optimum solution of the LP is at most $2$ via standard iterative rounding analysis based on Corollary \ref{coro:orient-extreme-point}. 
\section{Conclusion}\label{sec:conclusion}
We conclude with a few interesting directions for future work.
As we remarked earlier, LP-based approximations via solvable LPs are valuable.
From this perspective, a computational study to explore instance-based approximations relative to our LP, namely $P_{\ESD}(G)$ or $Q_{\text{orient}}(G)$, would be informative, and we leave it for future work. A bottleneck in the computational study is LP-solving,  which can be rather slow owing to the large number of constraints for both these polyhedra. This naturally leads to the question of designing fast algorithms to solve these LPs exactly or near-optimally. It is also of interest to find more efficient ways to round fractional solutions since iterative rounding is typically slow.

The extreme point conjecture regarding strong density polyhedron for FVS was itself formulated as a stepping stone towards better approximations for two related problems: Treewidth Deletion and Subset Feedback Vertex Set (SFVS). 
A recent work showed a randomized constant-approximation for Treewidth Deletion \cite{wlodarczyk2025losing}. It would be interesting to generalize the LP-based approaches for FVS to Treewidth Deletion to improve on the constant factor as well as to design a deterministic constant-approximation. 
The precise approximability of SFVS is still undetermined with the best-known upper bound being $8$ \cite{EvenNSZ00} and the best-known lower bound being $2$ coming from FVS. 
Chekuri and Madan \cite{chekuri-madan16} described a compact LP relaxation for SFVS with an integrality gap of at most $13$, and it remains open to improve this upper bound.

\paragraph{Acknowledgements.} For AI disclosure, see Section \ref{sec:techniques-and-ai-disclosure}. 

\bibliographystyle{abbrv}
\bibliography{references}
\appendix

\section{Iterative Rounding via Strong Density Polyhedron}\label{sec:rounding-algorithm}
The key challenge in directly using the extreme point result for $P_{SD}(G)$ (Theorem \ref{thm:extreme-point}) to obtain a $2$-approximation for FVS via iterative rounding is that optimizing over $P_{SD}(G)$ is not known to be polynomial time. In particular, the strong density polyhedron has exponentially many constraints and no efficient separation oracle is known. In this section, we leverage the extreme point result (Theorem \ref{thm:extreme-point}) and the conditional supermodularity of $f_x$ (Lemma~\ref{lem:supermod}) with the ellipsoid algorithm to achieve a polynomial-time $2$-approximation. We use notation and terminology that were defined in Section \ref{sec:extreme-point}.

\subsection{Separation below the $1/2$-threshold}

The key algorithmic fact is that once all remaining coordinates are strictly below $1/2$, violated strong-density inequalities can be found in polynomial time.

\begin{lemma}[Polynomial-time separation below $1/2$]\label{lem:sep-low}
Let $H=(W,F)$ be a graph, and let $x\in \R^W_{\ge 0}$ satisfy $x_u<1/2$ for every $u\in W$. Then, there exists a polynomial-time algorithm to
\begin{itemize}
  \item[(a)] either certify that $x\in P_{\SD}(H)$,
  \item[(b)] or return a set $S\subseteq W$ with $E[S]\neq \emptyset$ such that the strong density constraint corresponding to $S$ is violated by $x$.
\end{itemize}
\end{lemma}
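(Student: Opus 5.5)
Since every coordinate satisfies $x_u<1/2$, Lemma~\ref{lem:supermod} (applied to $H$ in place of $G$) shows that $f_x(S)=w(E[S])-|S|+x(S)+1$ is supermodular on $2^W$, where $w_{uv}=1-x_u-x_v>0$. Equivalently, $-f_x$ is submodular. The plan is to reduce separation to maximizing $f_x$ over the sets $S$ with $E[S]\neq\emptyset$ and to check whether the maximum exceeds $0$; by Proposition~\ref{prop:sd-ineqs-equiv}, $x\in P_{\SD}(H)$ iff this maximum is at most $0$. Unconstrained maximization of a supermodular function is polynomial-time via submodular function minimization. Alternatively, and more concretely, maximizing $w(E[S])-\sum_{u\in S}(1-x_u)$ with positive edge weights is a densest-subgraph-type problem and can be solved with one min-cut computation (Picard--Queyranne / Goldberg construction).

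The only complication is the side constraint $E[S]\neq\emptyset$. The empty set and singletons give $f_x(\emptyset)=1$ and $f_x(\{u\})=x_u>0$, so an unconstrained maximizer would be spurious. I would handle this by enumerating edges: for each edge $ab\in F$, maximize $f_x$ over the lattice family $\{S: \{a,b\}\subseteq S\subseteq W\}$. Restricting a supermodular function to sets containing a fixed pair is again supermodular on a lattice, being equivalent to a supermodular function on $2^{W\setminus\{a,b\}}$. So each of these $|F|$ subproblems is solvable in polynomial time by submodular minimization, or by a min-cut with $a$ and $b$ contracted into the source side. Every set with $E[S]\neq\emptyset$ contains some edge $ab$, so the maximum over all subproblems equals $\max\{f_x(S):E[S]\neq\emptyset\}$. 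If some subproblem returns $S$ with $f_x(S)>0$, we output it in case (b); since $ab\in E[S]$, the set has an edge and its constraint is violated. Otherwise we certify case (a).

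The step needing the most care is making the exact-arithmetic claim airtight. One part is checking that the restricted maximization really is a supermodular maximization on a distributive lattice, which holds because the family is closed under union and intersection. The other is confirming that the values $x$ are rational of polynomial encoding size, so that strongly polynomial SFM (or max-flow) applies and the comparison with $0$ is exact; this is standard when the lemma is used inside the ellipsoid method. I would also note that feasibility of the box constraints $x\in[0,1]^W$ is immediate, since $0\le x_u<1/2$.
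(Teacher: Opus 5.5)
Your proposal is correct and follows essentially the same route as the paper: use the conditional supermodularity of $f_x$, enumerate each edge $ab\in F$, minimize the submodular function $T\mapsto -f_x(T\cup\{a,b\})$ over $T\subseteq W\setminus\{a,b\}$, and compare the best value over all edges with $0$. One small nit is that $f_x(\{u\})=x_u$ is only $\ge 0$, not necessarily $>0$, but $f_x(\emptyset)=1$ already shows why the unconstrained maximizer is spurious; your min-cut alternative and your remark on the box constraints are sound additions.
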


\begin{proof}
By Proposition \ref{prop:sd-ineqs-equiv}, a violated strong-density inequality is exactly a set $S\subseteq W$ with $E_H[S]\neq\emptyset$ and $f_x(S)>0$. By Lemma~\ref{lem:supermod}, $f_x$ is supermodular, hence $-f_x$ is submodular.

For each edge $e=ab\in F$, define a set function on $h:2^{W\setminus\{a,b\}}\rightarrow \R$ by
\[
h_{x,e}(T):=-f_x(T\cup\{a,b\})\ \forall\ T\subseteq W\setminus \{a,b\}.
\]
Since $-f_x$ is submodular, $h_{x,e}$ is also submodular. Therefore, $h_{x,e}$ can be minimized in polynomial time by a submodular-function minimization algorithm. Let $T_e$ be a minimizer and set
$S_e:=T_e\cup\{a,b\}$. Then, $S_e$ maximizes $f_x(S)$ among all sets $S\subseteq V$ such that $e\in E[S]$. 

Let $S^*$ be a subset maximizing $f_x(S)$ over all $S\subseteq W$ with $E_H[S]\neq\emptyset$. Since $E_H[S^*]\neq\emptyset$, the set $S^*$ contains some edge $e$. By the choice of $S_e$,
$f_x(S_e)\ge f_x(S^*)$. 
Thus, $\max\{f_x(S):E_H[S]\neq\emptyset\} = \max_{e\in F} f_x(S_e)$.

Hence, after solving one submodular minimization problem per edge, we know whether the maximum of $f_x$ over all relevant subsets is positive. If the maximum is positive, the corresponding set $S_e$ yields a violated strong-density inequality by Lemma \ref{lem:supermod}(2). If the maximum is at most $0$, then Lemma \ref{lem:supermod}(2) implies $x\in P_{\SD}(H)$.
\end{proof}

\begin{corollary}[Residual Separation oracle]\label{cor:residual-sep}
There is a polynomial-time algorithm that takes a graph $H=(W,F)$ and $x\in \R^W_{\ge 0}$ as inputs where $A:=\{u\in W:x_u\ge 1/2\}$ and 
\begin{itemize}
  \item[(a)] either certifies that $x|_{W\setminus A}\in P_{\SD}(H-A)$
  \item[(b)] or returns a set $S\subseteq W\setminus A$ with $E_{H-A}[S]\neq \emptyset$ such that the strong density constraint corresponding to $S$ is violated by $x$.
\end{itemize}
\end{corollary}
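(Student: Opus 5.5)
The plan is to reduce the corollary directly to Lemma~\ref{lem:sep-low}, applied to the residual graph $H-A$ with the restricted vector $x|_{W\setminus A}$. First, compute $A=\{u\in W: x_u\ge 1/2\}$ in linear time. Then form $H' := H-A = (W\setminus A, F')$, where $F'$ consists of the edges of $F$ with both endpoints in $W\setminus A$, and set $x' := x|_{W\setminus A}$.

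By construction, every coordinate of $x'$ is nonnegative and strictly below $1/2$. These are exactly the hypotheses of Lemma~\ref{lem:sep-low} with $H'$ in place of $H$. For every edge $ab\in F'$, we run the submodular minimization of $h_{x',e}$ described in that proof. This gives a set $S_e\subseteq W\setminus A$ that maximizes $f_{x'}$ over all sets inducing the edge $e$ in $H'$.

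If some $f_{x'}(S_e)>0$, we return $S_e$. By Proposition~\ref{prop:sd-ineqs-equiv} applied to $H'$, its strong density constraint in $H'$ is violated. We have $E_{H'}[S_e]\ni e$, so the set is non-degenerate, and $x$ and $x'$ agree on $S_e$, so $x$ violates it as well. This is outcome (b). Otherwise, every relevant set has $f_{x'}\le 0$, and Proposition~\ref{prop:sd-ineqs-equiv} again certifies $x'\in P_{\SD}(H')$, which is outcome (a).

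Two edge cases need a word. If $F'=\emptyset$, the polyhedron $P_{\SD}(H')$ has no density constraints, so certifying (a) is trivial. Membership in $P_{\SD}(H')$ also requires $x'\le 1$, and this holds automatically since $x'<1/2$.

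I do not expect a genuine obstacle. The only point needing care is that all degrees and edge sets must be computed in $H-A$, not in $H$. Conditional supermodularity (Lemma~\ref{lem:supermod}) has to be invoked for $f_{x'}$ defined on $H'$. That is valid precisely because all coordinates that could be $\ge 1/2$ have been deleted, so every edge weight $1-x_u-x_v$ in $H'$ is positive. The running time is $|F'|$ submodular minimizations plus linear preprocessing, which is polynomial.
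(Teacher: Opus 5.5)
Your proposal is correct and follows the paper's proof: delete $A$, observe that every remaining coordinate is below $1/2$, and apply Lemma~\ref{lem:sep-low} to $H-A$ with $x|_{W\setminus A}$. The details you add are correct points the paper leaves implicit: for $S\subseteq W\setminus A$ the induced subgraph, and hence the constraint, is the same in $H$ and in $H-A$, so $x$ itself violates it; and $x'\le 1$ holds automatically.
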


\begin{proof}
Every coordinate of $x|_{W\setminus A}$ is $<1/2$, so the corollary follows by applying Lemma \ref{lem:sep-low} to the residual graph $H-A$ and vector $x|_{W\setminus A}$.
\end{proof}

\subsection{Produce-Witness Subroutine}
The following lemma shows that the ellipsoid method can be used to find a useful fractional solution even though we cannot efficiently separate over $P_{SD}(G)$ in general. We emphasize that the third property below will be crucial in the iterative rounding-based $2$-approximation later. 

\begin{lemma}[Produce-Witness]\label{lem:poly-time-witness}
Let $G = (V, E)$ be a graph containing a cycle, with non-negative vertex costs $c$. There is a polynomial-time algorithm that finds a vector $x^* \in \mathbb{R}^V_{\ge 0}$ such that:
\begin{enumerate}
    \item[\emph{(i)}] $c^T x^* \le \emph{LP}_{SD}(G):=\min\{c^Tx: x\in P_{\SD}(G)\}$,
    \item[\emph{(ii)}] there exists $u \in V$ with $x^*_u \ge 1/2$, and
    \item[\emph{(iii)}] letting $A := \{u \in V : x^*_u \ge 1/2\}$, the restriction of $x^*$ to $V - A$ lies in $P_{SD}(G - A)$.
\end{enumerate}
\end{lemma}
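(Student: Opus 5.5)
We run the ellipsoid method on the minimization of $c^Tx$, with a round-or-cut oracle built from Corollary~\ref{cor:residual-sep}. At the outset we add the box constraints $0\le x\le 1$ and the objective cut $c^Tx\le \gamma$, and we binary search on $\gamma$ (equivalently, run the sliding-objective version of ellipsoid).

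Given a query point $x$, the oracle first checks the box constraints and returns a violated one if any. Otherwise it sets $A:=\{u: x_u\ge 1/2\}$ and calls Corollary~\ref{cor:residual-sep} on $(G,x)$. If that returns a set $S\subseteq V\setminus A$ with a violated strong density inequality for $G-A$, we need a cut valid for all of $P_{\SD}(G)$. The natural one is the $G$-constraint for the same $S$. Since $S\subseteq V\setminus A$, we have $G[S]=(G-A)[S]$, so the inequality $g_x(S)\ge b(S)$ is literally an inequality of $P_{\SD}(G)$ and is violated by $x$. It is therefore a valid separating hyperplane.

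Otherwise the oracle reports that $x|_{V\setminus A}\in P_{\SD}(G-A)$. If in addition $A\neq\emptyset$, we have a \emph{witness}: $x$ satisfies (ii) and (iii), and (i) holds as long as $c^Tx\le\gamma$ with $\gamma\le \mathrm{LP}_{SD}(G)$. We stop and output $x$.

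The remaining case is $A=\emptyset$ with the oracle certifying $x\in P_{\SD}(G)$. Here $x$ is feasible with all coordinates below $1/2$, but it is not a witness. Note that $x$ need not be an extreme point, so Theorem~\ref{thm:extreme-point} does not apply directly. Instead, we move $x$ to a nearby extreme point using the separation oracle for the polyhedron restricted to the region where all coordinates are $<1/2$. The plan is as follows.
\begin{itemize}
\item Using the ellipsoid machinery (Grötschel--Lovász--Schrijver), find an optimal extreme point $x'$ of $P_{\SD}(G)$ for the objective $c$ intersected with the current objective cut $c^Tx\le\gamma$. Alternatively, take an extreme point of the face where the current tight constraints hold.
\item This works as long as the oracle only ever encounters points with all coordinates $<1/2$. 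If the oracle is ever queried at a point with some coordinate $\ge 1/2$, we fall back to the witness test above.
\item If the procedure ends at an extreme point $x'$ of $P_{\SD}(G)$, Theorem~\ref{thm:extreme-point} gives a coordinate $\ge 1/2$, which contradicts staying in the region below $1/2$. So this sub-process must always exit through a query point with $A\neq\emptyset$, which is then tested as a witness.
\end{itemize}
A cleaner alternative for this case is to pass to the objective cut $c^Tx\le c^Tx-\epsilon$: feasible points with smaller cost are then sought, and the search continues.

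For correctness of (i), we argue by binary search on $\gamma$. If $\gamma\ge \mathrm{LP}_{SD}(G)$, then $P_{\SD}(G)\cap\{c^Tx\le\gamma\}$ is nonempty. It also contains an extreme point $x^\circ$ of $P_{\SD}(G)$ minimizing $c$, and by Theorem~\ref{thm:extreme-point} this $x^\circ$ has a coordinate $\ge 1/2$. Every cut we ever return is valid for $P_{\SD}(G)$, so ellipsoid cannot certify emptiness while $x^\circ$ survives. Hence it must eventually query a witness, because the only way a query is not cut is that it is a witness or falls into the $A=\emptyset$ feasible case handled above. Conversely, for $\gamma<\mathrm{LP}$ we may still find a witness, which only helps (i). We take the smallest $\gamma$ (up to the standard bit-complexity precision) at which a witness is found. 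Then $c^Tx^*\le\gamma\le \mathrm{LP}_{SD}(G)$ up to the rounding issue, which we handle by the usual rationality and facet-complexity argument: vertex encoding sizes are polynomial since the constraint coefficients are bounded by $n$.

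The main obstacle is the $A=\emptyset$ feasible case. The ellipsoid method needs a cut for such a point, yet $x$ lies in $P_{\SD}(G)$, so only the objective cut $c^Tx<c^Tx$ is available. Showing that the search does not stall requires the extreme-point theorem to guarantee that the optimum face, once shrunk to low dimension, contains a witness. I would try to handle this by running ellipsoid with objective cuts, in the standard optimization-from-separation style, so that it converges to an optimal vertex of $P_{\SD}(G)\cap\{x<1/2\text{ region-check}\}$. Theorem~\ref{thm:extreme-point} then says this vertex has a coordinate $\ge1/2$, so some query near it exits via the witness branch.
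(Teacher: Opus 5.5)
Your construction follows the paper's route. You use the same round-or-cut oracle built from Corollary~\ref{cor:residual-sep}; your remark that $G[S]=(G-A)[S]$ is exactly why the returned halfspace is valid for $P_{\SD}(G)$. Your explicit check of $0\le x\le 1$ is a small improvement, since Lemma~\ref{lem:sep-low} assumes $x\ge 0$. You also use the same binary search via Gr\"otschel--Lov\'asz--Schrijver. When the accepted point $\hat x$ has $A=\emptyset$, you make the same plan of running the GLS vertex-finding procedure with an early exit whenever the oracle accepts a point with a coordinate $\ge 1/2$.

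\textbf{The gap.} The problem is how you close that last case. You argue that reaching a vertex $x'$ of $P_{\SD}(G)$ would, by Theorem~\ref{thm:extreme-point}, ``contradict staying in the region below $1/2$,'' so some query must have exited through the witness branch. That inference is invalid. The vertex returned by the GLS procedure is computed from the oracle's answers (by rounding and solving for tight constraints), and it need not be a queried point. Nothing forces any queried point to be both accepted and have a coordinate $\ge 1/2$. For instance, if $x'_u=1/2$ exactly, all centers may approach $x'$ from inside $\{x: x_v<1/2\ \forall v\}$ and be answered correctly. So there is no contradiction, and your algorithm as written has no output in this scenario.

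\textbf{The fix, which is what the paper does.} If no early exit occurs, every accepted point lay in $P_{\SD}(G)$ and every returned cut was valid. So the oracle behaved as an exact separation oracle for $P_{\SD}(G)$ throughout, and the procedure correctly returns an optimal vertex $\bar x$ of $P_{\SD}(G)$. Theorem~\ref{thm:extreme-point} gives $\bar x$ a coordinate $\ge 1/2$, and you output $\bar x$ itself. It satisfies (iii) trivially because $\bar x\in P_{\SD}(G)$, and (i) because $c^T\bar x=\mathrm{LP}_{SD}(G)$.

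\textbf{Smaller corrections.}
\begin{itemize}
\item Your ``cleaner alternative'' of cutting the objective below $c^T\hat x$ is not guaranteed to produce anything. Since $\hat x$ is optimal, that region misses $P_{\SD}(G)$, and the ellipsoid method may simply terminate without accepting a point. The vertex-finding step is genuinely needed.
\item Theorem~\ref{thm:extreme-point} concerns vertices of $P_{\SD}(G)$ itself. The ``region-check'' vertex in your last paragraph would therefore need a separate argument.
\item To guarantee (i) for an early-exit witness, enforce the cut $c^Tx\le c^T\hat x$ inside the oracle during the vertex-finding runs, as the paper does. Because $\hat x\in P_{\SD}(G)$ is optimal, this bound is exactly $\mathrm{LP}_{SD}(G)$. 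A cut at a generic binary-search value $\gamma$ only gives (i) if $\gamma\le\mathrm{LP}_{SD}(G)$.
\end{itemize}
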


\begin{proof}
We define a weak separation oracle and use the ellipsoid method.

\medskip
\noindent\textbf{Weak separation oracle $\mathcal{O}$.}
Given $x^* \in \mathbb{R}^V_{\ge 0}$, let $A = \{u \in V : x^*_u \ge 1/2\}$ and let $x'$ be the restriction of $x^*$ to $V - A$. 
We then invoke the residual separation oracle from Corollary \ref{cor:residual-sep} on graph $G$ and vector $x^*$. 
\begin{itemize}
    \item If $x' \in P_{SD}(G - A)$: declare $x^*$ \emph{accepted}.
    \item If a violated constraint is found, i.e., a set $S \subseteq V - A$ with $f_{x'}(S) > 0$: return the halfspace $\{x : g_x(S) \ge b(S)\}$.
\end{itemize}

\medskip
\noindent\textbf{Key property of $\mathcal{O}$.}
Every halfspace returned by $\mathcal{O}$ is valid for $P_{SD}(G)$. In particular, if $x^* \in P_{SD}(G)$, then $\mathcal{O}$ accepts (since $x' \in P_{SD}(G-A)$). Therefore, $\mathcal{O}$ never rejects a point in $P_{SD}(G)$.

\medskip
\noindent\textbf{Running the ellipsoid method.}
Apply the ellipsoid method to solve $\min c^T x$ over $P_{SD}(G)$, using $\mathcal{O}$ as the separation oracle. The ellipsoid method queries $\mathcal{O}$ at the centers of successive ellipsoids. Since every returned halfspace contains $P_{SD}(G)$ and $P_{SD}(G) \neq \emptyset$, the standard volume argument guarantees that within polynomially many steps, either the ellipsoid method concludes that $P_{SD}(G) \cap \{c^T x \le \gamma\}$ is empty (for the current binary search threshold $\gamma$), or $\mathcal{O}$ accepts a point $\hat{x}$ with $c^T \hat{x} \le \gamma$.

By the equivalence of separation and optimization (Gr\"otschel--Lov\'asz--Schrijver), this procedure finds in polynomial time a point $\hat{x}$ that $\mathcal{O}$ accepts, with $c^T \hat{x} \le \text{LP}_{SD}(G)$. By construction, the restriction of $\hat{x}$ to $V - A_{\hat{x}}$ lies in $P_{SD}(G - A_{\hat{x}})$, so $\hat{x}$ satisfies properties~(i) and~(iii).

\medskip
\noindent\textbf{Ensuring property~(ii).}
If $A_{\hat{x}} \neq \emptyset$ (some $\hat{x}_u \ge 1/2$), then $\hat{x}$ satisfies all three properties and we are done.

If $A_{\hat{x}} = \emptyset$ (all $\hat{x}_u < 1/2$), then the acceptance of $\hat{x}$ by $\mathcal{O}$ means $\hat{x} \in P_{SD}(G)$. Since $c^T \hat{x} \le \text{LP}_{SD}(G)$ and $\hat{x} \in P_{SD}(G)$, the point $\hat{x}$ is optimal for $P_{SD}(G)$. We now find an \emph{extreme point} of $P_{SD}(G)$.

By Lemma \ref{lem:sep-low}, we have an \emph{exact} polynomial-time separation oracle for $P_{SD}(G)$ at every point in which all coordinates are less than $1/2$. We use this to find a vertex of $P_{SD}(G)$ as follows. Starting from $\hat{x}$, we solve a sequence of at most $|V|$ auxiliary LPs over $P_{SD}(G)$, each with a different objective chosen to reduce the dimension of the face on which the current point lies, until we reach a vertex. This is the standard procedure for finding a vertex of a polyhedron given a feasible point and a separation oracle; see Chapter 6 of Grotschel-Lovasz-Schrijver. 

To solve each auxiliary LP, we apply the ellipsoid method on $P_{SD}(G)$ using the weak oracle $\mathcal{O}$. If at any point during these ellipsoid runs $\mathcal{O}$ accepts a center point $x_t$ with $A_{x_t} \neq \emptyset$, then $x_t$ has a coordinate $\ge 1/2$, the restriction property holds (since $\mathcal{O}$ accepted), and $c^T x_t \le c^T \hat{x} = \text{LP}_{SD}(G)$ (since we optimize over $P_{SD}(G) \cap \{c^T x \le c^T \hat{x}\}$). So $x_t$ satisfies all three properties and we stop.

If no such early termination occurs, then every point accepted by $\mathcal{O}$ during the procedure has all coordinates less than $1/2$, and hence lies in $P_{SD}(G)$. In this case, $\mathcal{O}$ acts as an exact separation oracle for $P_{SD}(G)$ throughout the procedure, and the vertex-finding procedure correctly returns a vertex $\bar{x}$ of $P_{SD}(G)$. By Theorem~\ref{thm:extreme-point} (applicable since $G$ contains a cycle), $\bar{x}$ has a coordinate $\ge 1/2$. Since $\bar{x} \in P_{SD}(G)$, the restriction of $\bar{x}$ to $V - A_{\bar{x}}$ lies in $P_{SD}(G - A_{\bar{x}})$, and $c^T \bar{x} = \text{LP}_{SD}(G)$. So $\bar{x}$ satisfies all three properties.

In both cases, the algorithm runs in polynomial time and returns a vector satisfying properties (i), (ii), and (iii).
\end{proof}

\subsection{The Algorithm}
In this section, we exploit the produce-witness subroutine from the previous section to design an iterative rounding based $2$-approximation for FVS. Our algorithm is stated in Algorithm \ref{alg:fvs}. 
\begin{algorithm}[H]
\caption{Iterative Rounding for Feedback Vertex Set}
\label{alg:fvs}
\begin{algorithmic}[1]
\State \textbf{Input:} Graph $G = (V, E)$ with non-negative vertex costs $c: V\rightarrow \R_{\ge 0}$.
\State \textbf{Output:} A feedback vertex set $R$. 
\State Initialize $H_0\gets G$, $R_0 \gets \emptyset$, $i\gets 0$
\While{$H_i$ contains a cycle} 
    \State $x^{(i)}:=$ Produce-Witness$(H_i, c)$ 
    \State $A_i:=\{u\in V(H_i): x_u^{(i)}\ge 1/2\}$
    \State $R_{i+1}:=R_i\cup A_i$ and $H_{i+1} := H_i - A_i$
    \State $i\gets i+1$
\EndWhile
\State \Return $R_{i}$
\end{algorithmic}
\end{algorithm}

We now analyze the run-time of the algorithm. 

\begin{lemma}\label{lem:run-time}
Algorithm \ref{alg:fvs} can be implemented to run in polynomial time. 
\end{lemma}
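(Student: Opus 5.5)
The argument has two parts: a bound on the number of iterations of Algorithm~\ref{alg:fvs}, and a polynomial bound on the cost of each iteration. For the iteration count, fix an iteration $i$ in which $H_i$ still contains a cycle. Lemma~\ref{lem:poly-time-witness}, property (ii), guarantees that the vector $x^{(i)}$ returned by Produce-Witness$(H_i,c)$ has some coordinate at least $1/2$. Hence $A_i\neq\emptyset$, and $|V(H_{i+1})| = |V(H_i)| - |A_i| < |V(H_i)|$. The vertex count strictly decreases in every iteration, so the loop runs at most $|V|$ times.

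Next I would bound the cost of a single iteration. Checking whether $H_i$ contains a cycle takes linear time, for instance by testing whether each connected component satisfies $|E| \le |V|-1$ or by running DFS. Computing $A_i$, $R_{i+1}$ and $H_{i+1}$ is trivial. The only nontrivial step is the call to Produce-Witness. By Lemma~\ref{lem:poly-time-witness} this call runs in time polynomial in the size of its input $(H_i,c)$. Since $H_i$ is an induced subgraph of $G$ and the costs are unchanged, that input size is at most the size of $(G,c)$. The precondition of Lemma~\ref{lem:poly-time-witness}, namely that $H_i$ contains a cycle, is exactly the loop guard, so the lemma applies whenever we call it.

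Multiplying, the total running time is at most $|V|$ times a polynomial in the input size, which is polynomial.

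The one point needing care is the encoding length of the numbers $x^{(i)}$ produced by the ellipsoid-based subroutine. I expect no real obstacle here. The algorithm never feeds $x^{(i)}$ back into a later iteration: it uses $x^{(i)}$ only to compare coordinates with $1/2$, and then discards it. Each call starts fresh on $(H_i,c)$, so numerical sizes cannot compound across iterations. Within a single call, polynomial bit-length is already part of the guarantee of Lemma~\ref{lem:poly-time-witness}.
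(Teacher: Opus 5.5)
Your proof is correct and follows essentially the same route as the paper: property (ii) of Lemma~\ref{lem:poly-time-witness} forces $A_i\neq\emptyset$, which bounds the loop by $|V|$ iterations, and each iteration is polynomial because Produce-Witness is. Your extra remarks are sound details the paper leaves implicit: the loop guard supplies the cycle precondition, the input size never grows, and no numbers are carried across iterations.
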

\begin{proof}
We first argue that the algorithm terminates in at most $|V|$ iterations. By property (ii) of Lemma \ref{lem:poly-time-witness}, each call to Produce-Witness returns a non-empty set $A_i$ and hence, each iteration deletes at least one vertex from the current graph. Thus, the algorithm terminates in at most $|V|$ iterations. 

By Lemma \ref{lem:poly-time-witness}, each call to Produce-Witness can be implemented in polynomial time and hence, each iteration can be implemented in polynomial time. So,  the algorithm can be implemented to run in polynomial time. 
\end{proof}

Next, we analyze the approximation factor of the algorithm. 
\begin{lemma}\label{lem:apx-factor}
    Algorithm \ref{alg:fvs} returns a feedback vertex set $R$ with $\sum_{u \in R}c_u \le 2 \OPT$. 
\end{lemma}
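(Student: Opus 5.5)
The plan is the standard iterative-rounding induction on the number of iterations, with the remaining LP value as the potential. For each iteration $i$, let $\OPT_{\mathrm{LP}}(H_i):=\mathrm{LP}_{SD}(H_i)$. I will prove the stronger statement that the set $R\setminus R_i$ added from iteration $i$ onward has cost at most $2\,\mathrm{LP}_{SD}(H_i)$. Taking $i=0$ then gives $\sum_{u\in R}c_u\le 2\,\mathrm{LP}_{SD}(G)\le 2\OPT$. The last inequality holds because the indicator vector of any feedback vertex set lies in $P_{SD}(G)$, since the constraints are valid inequalities as explained in the introduction.

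\textbf{Feasibility.} The algorithm stops only when $H_i$ is acyclic. Since $H_i=G-R_i$, the returned set $R$ is a feedback vertex set.

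\textbf{Inductive step.} Fix an iteration $i$ in which $H_i$ has a cycle. Let $x=x^{(i)}$ and $A=A_i$.
\begin{itemize}
\item Every $u\in A$ has $x_u\ge 1/2$, so $c(A)\le 2\sum_{u\in A}c_ux_u$.
\item By property (iii) of Lemma~\ref{lem:poly-time-witness}, the restriction $x|_{V(H_i)\setminus A}$ lies in $P_{SD}(H_i-A)=P_{SD}(H_{i+1})$.
\item Hence $\mathrm{LP}_{SD}(H_{i+1})\le \sum_{u\notin A}c_ux_u$.
\end{itemize}
If $H_{i+1}$ has a cycle, the induction hypothesis bounds the cost added from iteration $i+1$ onward by $2\,\mathrm{LP}_{SD}(H_{i+1})\le 2\sum_{u\notin A}c_ux_u$. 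If $H_{i+1}$ is acyclic, the algorithm stops and nothing more is added. In both cases,
\[
c(R\setminus R_i)\le 2\sum_{u\in A}c_ux_u+2\sum_{u\notin A}c_ux_u = 2c^Tx\le 2\,\mathrm{LP}_{SD}(H_i),
\]
where the last step is property (i) of Lemma~\ref{lem:poly-time-witness}. This requires $c\ge 0$, which holds since costs are non-negative.

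\textbf{Main subtlety.} The witness $x$ need not itself lie in $P_{SD}(H_i)$; only property (i) controls its cost. The argument therefore deliberately uses only properties (i) and (iii) and never feasibility of $x$ in $P_{SD}(H_i)$. One also has to check that $\mathrm{LP}_{SD}(H_{i+1})$ is well defined and that $P_{SD}$ of an acyclic or empty graph imposes no difficulty. If $H_{i+1}$ is acyclic, then $0$ is feasible and its LP value is $0$, consistent with the base case. Termination within $|V|$ iterations is given by Lemma~\ref{lem:run-time}.
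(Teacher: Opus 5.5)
Your proof is correct and follows the paper's argument: the paper also uses only properties (i) and (iii) of Lemma~\ref{lem:poly-time-witness}, together with $x^{(i)}_u\ge 1/2$ on $A_i$ and $c\ge 0$, to get $c(A_i)\le 2(z_i-z_{i+1})$ with $z_i=\mathrm{LP}_{SD}(H_i)$. It then telescopes to $2z_0\le 2\OPT$, which is your backward induction written as a sum.
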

\begin{proof}
    By Lemma \ref{lem:run-time}, the algorithm terminates. Suppose that the algorithm terminates with $i=t$. Let $R:=R_{t}$ denote the set returned by the algorithm. By termination criteria, the set $R$ is a feedback vertex set of $G$. We bound the approximation factor. 
    
    Let $z_i:=\min\{c^Tx: x\in P_{\SD}(H_i)\}$ for each $i\in \{0, 1, \ldots, t\}$. Since $H_t$ is a forest, we have that $z_t = 0$ (since the zero vector is feasible and optimal). 

    Fix an iteration $i\in \{0, 1, 2, \ldots, t-1\}$. Let $(\Fset_i, x^{(i)}, A_i)$ denote the tuple returned by Produce-Witness$(H_i, c)$. Let 
    \[
    \phi_i:=\sum_{u\in V(H_i)} c_u x_u^{(i)}. 
    \]
    We have that $\phi_i\le z_i$ by property (i) of Lemma \ref{lem:poly-time-witness}. 
    We know that $x^{(i)}|_{V(H_i)\setminus A_i}\in P_{\SD}(H_i-A_i)=P_{\SD}(H_{i+1})$ by property (iii) of Lemma \ref{lem:poly-time-witness}. Therefore, 
    \[
    z_{i+1}\le \sum_{u\in V(H_i)\setminus A_i}c_u x_u^{(i)} = \phi_i - \sum_{u\in A_i}c_u x_u^{(i)}. 
    \]
    Since every $u\in A_i$ has $x_u^{(i)}\ge 1/2$, we have that 
    \[
    \frac{1}{2}\sum_{u\in A_i}c_u \le \sum_{u\in A_i}c_u x_u^{(i)}. 
    \]
    Combining the last two inequalities, we obtain that 
    \[
    \sum_{u\in A_i} c_u \le 2(\phi_i - z_{i+1}) \le 2(z_i - z_{i+1}). 
    \]

    Summing over all iterations, we obtain that 
    \[
    \sum_{u\in R_i}c_u = \sum_{i=0}^{t-1}\sum_{u\in A_i}c_u \le 2\sum_{i=0}^{t-1} (z_i - z_{i+1}) = 2z_0. 
    \]
    We know that $z_0=\min\{c^Tx: x\in P_{\SD}(G)\}\le \OPT$. Hence, $\sum_{u\in R_i}c_u\le 2\OPT$.
\end{proof}

\end{document}